\documentclass[camera,letterpaper,nomarginnotes,nonarrowgutter]{jpaper}

\usepackage{booktabs}
\usepackage{threeparttable}
\usepackage[numbers,sort&compress,square]{natbib}
\usepackage{graphicx}
\usepackage{xspace}
\usepackage{listings}
\usepackage{amsmath}
\usepackage{amssymb}
\usepackage{microtype}
\usepackage{multirow}
\usepackage{array}
\usepackage{pifont}
\usepackage{mathtools}
\usepackage{amsthm}
\usepackage{tikz}
\usetikzlibrary{arrows.meta,positioning,automata}
\usepackage{fancyhdr}
\usepackage{hyperref}
\usepackage{enumitem}
\setlist[itemize]{leftmargin=*,nosep}
\setlist[enumerate]{leftmargin=*,nosep}

\hypersetup{colorlinks=true,citecolor=blue,urlcolor=blue,linkcolor=blue,
            bookmarks=true,breaklinks=true}

\fancypagestyle{firstpage}{\fancyhf{}\fancyfoot[C]{\thepage}}

\theoremstyle{plain}
\newtheorem{theorem}{Theorem}[section]

\theoremstyle{definition}
\newtheorem{definition}{Definition}[section]

\usepackage{booktabs}
  \usepackage{amssymb}
  \usepackage{xcolor}
  \usepackage{pifont}              % for \ding

\definecolor{editcolor}{HTML}{B00020}
\newif\ifshowedits \showeditsfalse
\newcommand{\edit}[1]{\ifshowedits\textcolor{editcolor}{#1}\else#1\fi}

\definecolor{mkcolor}{HTML}{B00020}
\newif\ifshowmk \showmkfalse
\newcommand{\mk}[1]{\ifshowmk\textcolor{blue}{#1}\else#1\fi}

\definecolor{arevcolor}{HTML}{6A0DAD}
\newif\ifshowarev \showarevfalse
\newcommand{\arev}[1]{\ifshowarev\textcolor{black}{#1}\else#1\fi}
\newenvironment{arevblock}{\ifshowarev\color{black}\fi}{}

\definecolor{brevcolor}{HTML}{008B8B}
\newif\ifshowbrev \showbrevfalse           % inline \brev{} renders black
\newif\ifshowbrevblock \showbrevblockfalse % the brevblock environment renders black too
\newcommand{\brev}[1]{\ifshowbrev\textcolor{brevcolor}{#1}\else#1\fi}
\newenvironment{brevblock}{\ifshowbrevblock\color{brevcolor}\fi}{}

\definecolor{crevcolor}{HTML}{C2185B}
\newif\ifshowcrev \showcrevfalse
\newcommand{\crev}[1]{\ifshowcrev\textcolor{crevcolor}{#1}\else#1\fi}

\definecolor{backcolor}{HTML}{E65100}
\newif\ifshowback \showbackfalse
\newcommand{\back}[1]{\ifshowback\textcolor{backcolor}{#1}\else#1\fi}

\definecolor{realcolor}{HTML}{1565C0}
\newif\ifshowreal \showrealfalse            % on-silicon / real-system additions, shown in blue

\definecolor{artifactcolor}{HTML}{2E7D32}
\newif\ifshowartifact \showartifactfalse     % artifact/reproducibility pointers, shown in green

\definecolor{fixcolor}{HTML}{AD1457}
\newif\ifshowfix \showfixfalse               % edits addressing margin comments, shown in magenta
\newcommand{\fix}[1]{\ifshowfix\textcolor{fixcolor}{#1}\else#1\fi}

\DeclareRobustCommand*\circled[1]{\tikz[baseline=(char.base)]{\node[shape=circle,fill,inner sep=0.4pt] (char) {\fontsize{8.5pt}{8.5pt}\selectfont\textcolor{white}{#1}};}}

\definecolor{nrevcolor}{HTML}{00838F}
\newif\ifshownrev \shownrevfalse                % restructuring: failure surfaces moved into the motivation, rendered black
\newcommand{\nrev}[1]{\ifshownrev\textcolor{nrevcolor}{#1}\else#1\fi}
\newenvironment{nrevblock}{\ifshownrev\color{nrevcolor}\fi}{}

\definecolor{prevcolor}{HTML}{6A1B9A}
\newif\ifshowprev \showprevfalse                 % proofreading pass, shown in purple
\newcommand{\prev}[1]{\ifshowprev\textcolor{prevcolor}{#1}\else#1\fi}
\newenvironment{prevblock}{\ifshowprev\color{prevcolor}\fi}{}

\definecolor{rrevcolor}{HTML}{795548}
\newif\ifshowrrev \showrrevfalse           % 2026-09-05 references pass: bib fixes, added citations, and the wording they forced; shown in brown
\newcommand{\rrev}[1]{\ifshowrrev\textcolor{rrevcolor}{#1}\else#1\fi}

\definecolor{pendingcolor}{HTML}{EF6C00}
\newif\ifshowpending \showpendingfalse       % NOT-yet-done results (e.g. board runs to complete); shown bracketed in orange, dropped entirely when hidden

\definecolor{numupdcolor}{HTML}{5D4037}
\newif\ifshownumupd \shownumupdfalse         % statistics recomputed from the consolidated single-branch artifact; shown in brown
\newcommand{\numupd}[1]{\ifshownumupd\textcolor{numupdcolor}{#1}\else#1\fi}

\definecolor{srevcolor}{HTML}{827717}
\newif\ifshowsrev \showsrevfalse            % 2026-09-08 Sec. 5/Sec. 6 pass after the workflow restructure; shown in olive
\newcommand{\srev}[1]{\ifshowsrev\textcolor{srevcolor}{#1}\else#1\fi}

\definecolor{vrevcolor}{HTML}{283593}
\newif\ifshowvrev \showvrevfalse            % 2026-09-08 correctness pass on Sec. 5 and Sec. 6; shown in indigo
\newcommand{\vrev}[1]{\ifshowvrev\textcolor{vrevcolor}{#1}\else#1\fi}

\definecolor{drevcolor}{HTML}{004D40}
\newif\ifshowdrev \showdrevfalse            % 2026-09-09 line-count double-count fix; shown in dark teal
\newcommand{\drev}[1]{\ifshowdrev\textcolor{drevcolor}{#1}\else#1\fi}

\newcommand{\goodbg}{green!14}
\newcommand{\warnbg}{yellow!26}
\newcommand{\badbg}{red!11}
\newcommand{\tinybox}{\rule{0.38em}{0.38em}}
\newcommand{\gmark}{\cellcolor{\goodbg}\textcolor{green!42!black}{\ding{51}}}
\newcommand{\ymark}{\cellcolor{\warnbg}\textcolor{olive!75!black}{\tinybox}}
\newcommand{\rmark}{\cellcolor{\badbg}\textcolor{red!62!black}{\ding{55}}}

\newcommand{\titlenamed}{CertiFlash}
\newcommand{\titlename}{CertiFlash\xspace}
\xspaceaddexceptions{'}
\newcommand{\arbiter}{\titlename}
\newcommand{\arbiterd}{\titlenamed}

\newcommand{\coq}{Rocq\xspace}

\title{\arbiterd: A Formal Verification Framework \\ for Flash Translation Layers in Computational Solid State Drives}

\author{
    Harshita Gupta\textsuperscript{$\dagger$} \hspace{0.5em}
    Mayank Kabra\textsuperscript{$\dagger$} \hspace{0.5em}
    Rakesh Nadig\textsuperscript{$\dagger$} \hspace{0.5em}
    Nika Mansouri Ghiasi\textsuperscript{$\dagger$} \hspace{0.5em}
    Sahand Divsalar\textsuperscript{$\dagger$} \vspace{0.2em} \\
    Fatma Nisa Bostancı\textsuperscript{$\dagger$} \hspace{0.5em}
    Ataberk Olgun\textsuperscript{$\dagger$} \hspace{0.5em}
    Konstantinos Kanellopoulos\textsuperscript{$\dagger$} \hspace{0.5em}
    Jisung Park\textsuperscript{$\ddagger$} \vspace{0.2em} \\
    Haiyu Mao\textsuperscript{$\mathsection$} \hspace{0.5em}
    Abdullah Giray Yağlıkçı\textsuperscript{$\P$} \hspace{0.5em}
    Mohammad Sadrosadati\textsuperscript{$\dagger$} \hspace{0.5em}
    Onur Mutlu\textsuperscript{$\dagger$} \vspace{0.5em} \\
    \textsuperscript{$\dagger$}\emph{ETH Z\"urich} \hspace{0.5em}
    \textsuperscript{$\ddagger$}\emph{POSTECH} \hspace{0.5em}
    \textsuperscript{$\mathsection$}\emph{King's College London} \hspace{0.5em}
    \textsuperscript{$\P$}\emph{CISPA}
}

\begin{document}
\maketitle
\thispagestyle{firstpage}

\begin{abstract}
\setlength{\parskip}{0pt}
\crev{\back{In traditional \mk{compute}-centric systems,} \mk{many modern} data-intensive \mk{applications} suffer from significant data movement overhead because they move large amounts of
% \fix{low-reuse}
data from storage to the \mk{compute unit}. Storage-centric computing reduces this overhead by moving computation near or inside solid-state drives (SSDs).} \crev{\back{To enable it, SSD policies (e.g., address translation, garbage collection, and data placement) \mk{require modification\rrev{.}}} 
% to build application-specific SSDs.} 
\rrev{These} policies are part of the Flash Translation Layer (FTL), the SSD's firmware. 
\fix{\mk{However,} modifying them is complex and error-prone, \rrev{and} a single bug can violate the device's correctness, isolation, and ownership guarantees\rrev{.}}} Even a functionally correct FTL can leak data between tenants sharing the device, drop integrity tags, or assign a flash block to the wrong tenant. \fix{\mk{This is because} FTL logic has direct access to \mk{many} 
% shared, 
security-critical device components, \mk{\rrev{e.g., the host-interface queue and the metadata in internal DRAM}}.} \mk{We show that a \drev{faulty} FTL can corrupt the device state at five surfaces inside the SSD\rrev{, and we demonstrate those failures on a DaisyPlus OpenSSD platform}.} \mk{\rrev{Prior work verifies} individual FTL designs, 
% for \mk{\emph{only}} functional correctness. Prior proofs 
\rrev{but faces} two limitations. (1) \rrev{These works establish} only functional correctness, so a modified FTL can violate isolation, integrity, and ownership and still pass verification, providing no security guarantee.
% for the modified design. 
(2) \rrev{They are tied} to the state and operations of a single FTL design, so every FTL modification requires regenerating every proof from scratch, significantly increasing the verification effort.}

\arev{We propose} \arbiter, a formal verification framework \crev{for FTLs, mechanized} in the \coq proof assistant\back{, that gives designers a machine-checked proof \rrev{of} both security and correctness}. \arbiter models an FTL as a deterministic state machine \fix{with a single global invariant over mapping, isolation, integrity, ownership, and allocation\drev{, grouped by primary purpose}}. \back{We prove \rrev{once, over a general FTL model,} that (i) every FTL operation 
% (including garbage collection and wear leveling) 
preserves the global invariant
% , including the garbage-collection and wear-leveling operations that run without a host request, 
and (ii) the model refines an idealized block device. %
% For a new design, \arbiter inherits both proofs by discharging five hypotheses about its own operations through a \coq interface.
For a new design, \vrev{a designer discharges five hypotheses about its own operations through a \coq interface instead of redoing either proof}.}
% We validate \arbiter in three ways. 
% A designer describes their own FTL through a small \coq interface and inherits both proofs.
% The designer inherits the framework's proofs and writes only what the modification adds, at a cost set by how much of the global invariant it disturbs. 
% \fix{\rrev{Across four case studies, a designer adds 14 to 3{,}232 lines against a 14{,}910-line framework, \mk{significantly reducing the verification effort.}}}
\fix{\rrev{Across four case studies, a designer adds \numupd{27} to 3{,}231 lines against a \drev{16{,}489}-line framework, \mk{significantly reducing the verification effort.}}} The source code of \arbiter is freely available at: \url{https://github.com/CMU-SAFARI/CertiFlash}.
% The development contains no admitted lemmas and no axioms, so the trusted base is the \coq kernel and standard library alone.
\end{abstract}

\section{Introduction}
\label{sec:intro}

\crev{Modern data-intensive applications suffer from significant data movement overhead because they access large amounts of low-reuse data~\cite{mutlu.imw13, kanev.isca15, wang.micro2016, mckee2004reflections, mutlu.superfri15, Park_MICRO2022, hajinazar2021simdram, gu2016biscuit, mutlu2022modern, boroumand2018google, boroumand2021google, ahn2015scalable, ahn2015pim, mutlu2019processing, mutlu2019enabling, oliveira2021damov, ghiasi2022alp, seshadri-micro-2017, Gao_MICRO2021, barbalace_blockndp_2020, augusta2015jafar, boroumand2019conda, fernandez2020natsa, singh2019napel, gao2016hrl, lee2020smartssd, singh2021fpga, medal2019, liang-fpl-2019, Mansouri-Ghiasi_ASPLOS2022, ghiasi2026sage, ghiasi2026grains, Ghiasi2024MegIS, oliveira2024mimdram, nider2020processing, hsieh.isca16, park2024attacc,chen2022offload,li2018cisc, li2023optimizing,maity2025unguided, olivier2019hexo,wei2022pimprof, weiner2022tmo, yang2023lambda, lincoln-hpca, jang2025inf, pan2024instattention, jaliminchecs, kang2024isp}. This data moves all the way from the solid-state drive (SSD) through main memory to the processing units for its first use, and is rarely used again.} Many prior works
(e.g.,~\cite{seshadri-osdi-2014,acharya-asplos-1998,Kabra2025CipherMatch,wang-eurosys-2019,kim-fast-2021,kang-msst-2013,torabzadehkashi-pdp-2019,keeton-sigmod-1998,koo-micro-2017,tiwari-fast-2013,tiwari-hotpower-2012,boboila-msst-2012,bae-cikm-2013,torabzadehkashi-ipdpsw-2018,pei-tos-2019,do-sigmod-2013,kim-infosci-2016,riedel-computer-2001,riedel-vldb-1998,liang-atc-2019,cho-wondp-2013,jun2015bluedbm,lee2020smartssd,ajdari-hpca-2019,liang-fpl-2019,jun-hpec-2016,kang-tc-2021,kim-sigops-2020,Lee_ISCA2022,li2023ecssd,ruan2019insider,wang2016ssd1,jeong-tpds-2019,mao2012cache,gouk2024dockerssd,Ghiasi2024MegIS,Mansouri-Ghiasi_ASPLOS2022,yavits2021giraf,Kim_HPCA2023,lim-icce-2021,narasimhamurthy2019sage,jun-isca-2018,fakhry2023review,gu2016biscuit,yang2023lambda,jo2016yoursql,Chen2025REIS,Li_ATC2021,wang2016ssd,mahapatra2025rag,pan2024instattention,wang2024beacongnn,Yu2024CambriconLLM,Park_MICRO2022,Gao_MICRO2021,Chen_MICRO2024,kim2025crossbit,wong2024tcam,wong2025anvil,chen2024search,choi2020flash,chun2022pif,lee2025aif,wang2018three,ghiasi2026sage,ghiasi2026grains,ghiasi2026enabling}) demonstrate that moving computation near or inside \rrev{the} SSD~\cite{samsung-980pro, adatasu630,intelqlc,intels4510, intelp4610, nadig2023venice, micheloni2010inside, micheloni2013inside, inteloptane, samsungmlc, samsung2017znand} (\S\ref{sec:prelim:arch}), i.e., \emph{storage-centric computing}, accelerates data-intensive applications such as genome analytics~\crev{(e.g.,}~\cite{ghiasi2026grains, ghiasi2026enabling, Soysal2025MARS, Mansouri-Ghiasi_ASPLOS2022, Ghiasi2024MegIS, ghiasi2026sage}\crev{)}, machine learning~\crev{(e.g.,}~\cite{Yu2024CambriconLLM,Kim_HPCA2023, Chen2025REIS}\crev{)}, graph processing~\crev{(e.g.,}~\cite{Matam_ISCA2019,Li_ATC2021,Liang_DAC2022,Lee_ISCA2022}\crev{)}, and database analytics~\crev{(e.g.,}~\cite{Duffy_PVLDB2023, Park_ASPLOS2025, Bisson_IPCCC2018}\crev{)} by reducing data movement between storage, main memory, and processing units.

\fix{Storage-centric computing~\cite{nadig2026conduit,mutlu2022modern,mutlu2024memory,mutlu2019processing,mutlu2025memory,mutlu2019enabling} requires customizing SSD policies (\S\ref{sec:prelim:ftl}), e.g., logical-to-physical address translation or data placement, which are present in the Flash Translation Layer (FTL). The vendor modifies the FTL to build an application-specific SSD (\S\ref{sec:isp_analysis}), and the modified FTL runs on the SSD's controller. 
% \rrev{Modifying the FTL is not a localized change. Prior storage-centric designs change different subsets of eight FTL components, and no two change the same set (\S\ref{sec:isp_analysis}).} 
Custom FTL logic has direct access to shared, security-critical device components, i.e., the host-interface queue, internal DRAM, the embedded compute units, the flash controller, and the NAND flash chips, which we treat as five failure surfaces~(\S\ref{sec:security-model:surfaces}). These surfaces are commonly shared in a scenario where multiple tenants~\rrev{\cite{tavakkol2018flin, huang2017flashblox, kwon2020dc, min2021isolating, min2023multi, kim2015ops, jaliminche2023enabling, gonzalez2017multi, sun2025fleetio}} share a common SSD\rrev{. NVMe namespaces and per-namespace keys define the isolation boundary among multiple tenants~\cite{nvmenamespaces, nvme_base_spec}, but the FTL enforces them and must implement them correctly.} 
\rrev{An FTL is complex, so modifying it correctly is error-prone} for any FTL
% implementer.
\edit{designer}\fix{, i.e., the party that supplies or extends the FTL}.
% \footnote{Throughout the paper we use \emph{tenant} for any end-user application or cloud-provider control plane that supplies a custom FTL, and \emph{vendor} for the SSD manufacturer; we say \emph{FTL implementer} to refer to either.} 
Production\mk{-scale} FTLs maintain many runtime data structures, e.g., a logical-to-physical (L2P) mapping table that maps host-supplied addresses \crev{to physical flash pages}, free-block pools, write pointers, and (in modern designs) demand-cached translation pages~\cite{gupta2009dftl} and per-tenant ownership metadata.~\edit{These structures are \mk{continuously} updated, \mk{both} by \crev{(i)} host-visible read and write requests, \mk{and} \crev{(ii)} internal maintenance operations such as garbage collection and wear leveling~\crev{\cite{agrawal2008design, xie2014adaptive, jung2012taking, shahidi2016exploring, wu2016gcar, choi2018parallelizing, chang2007wear, yang2014garbage}}, which silently relocate live data and update persistent metadata without explicit host invocation.} The risk is not limited to an FTL being functionally incorrect, but in a multi-tenant scenario, even a functionally correct FTL can expose one tenant's data to another, drop an integrity tag the host relies on, or assign a flash block to the wrong tenant.} 
~\fix{We observe that a single bad state, e.g.,} a stale L2P entry, a duplicate in the free-block list, \fix{or} an omitted integrity tag, can corrupt address translation, leak data across tenants, or invalidate data-integrity guarantees~\edit{the host}
% that every layer above the SSD 
relies on.
% FTL transitions are also complex. Beyond host-visible read and write requests, the FTL performs internal maintenance operations, such as garbage collection and wear leveling, that relocate live data and update persistent metadata without explicit host invocation. 

\rrev{\textbf{Real-System Attack Demonstration.} We demonstrate these risks on a real SSD, the DaisyPlus OpenSSD~\cite{daisyplus_openssd}, by modifying its FTL~\drev{(\S\ref{sec:attacks:board})}. The attacks fall into two settings. First, in a \emph{single-tenant} setting, a modified FTL can, e.g., corrupt the tenant's own data, breaking data integrity. Second, in a \emph{multi-tenant} setting, it can, e.g., expose one tenant's data to another or assign a flash block to the wrong tenant, breaking tenant isolation and block ownership. Each attack modifies only the FTL and issues ordinary host I/O, with no privileged access. The corruption can persist silently in FTL metadata for many operations before any host-visible failure, so neither the host nor the victim observes the violation. Unverified FTL modifications therefore pose a fundamental security risk.}
% bugs are generally hard to expose because they often require specific sequences of writes, erasures, remappings, and background-maintenance operations before they affect externally visible SSD behavior.

\textbf{Key Problem.}
% Conventional SSD validation methods, such as unit testing, trace replay~\cite{agrawal2008design, tavakkol2018mqsim}, and fault injection~\cite{zheng2013power} on emulated or prototyped devices~\back{\cite{li2018femu, kim2023nvmevirt, kwak2018cosmos}}, cover only part of the reachable \mk{FTL states, i.e., the possible values of the FTL's internal data structures, such as the logical-to-physical mapping table, the free-block and allocation lists, the per-block valid-page and erase counts.} Hence, these methods can miss~\edit{faults}
% Conventional SSD validation methods, such as unit testing, trace replay~\cite{agrawal2008design, tavakkol2018mqsim}, and fault injection~\cite{zheng2013power} on emulated or prototyped devices~\back{\cite{li2018femu, kim2023nvmevirt, kwak2018cosmos}}, cover only part of the reachable \mk{FTL states, i.e., the possible values of the FTL's internal data structures, such as the logical-to-physical mapping table, the free-block and allocation lists, \rrev{and} the per-block valid-page and erase counts.} Hence, these methods can miss~\edit{faults}
Conventional SSD validation methods, such as unit testing, trace replay~\cite{agrawal2008design, tavakkol2018mqsim}, and fault injection~\cite{zheng2013power} on emulated or prototyped devices~\back{\cite{li2018femu, kim2023nvmevirt, kwak2018cosmos}}, cover only part of the reachable \mk{FTL states, i.e., \rrev{the combinations of values its internal data structures can hold, namely} the logical-to-physical mapping table, the free-block and allocation lists, \rrev{and} the per-block valid-page and erase counts.} Hence, these methods can miss~\edit{faults}
% bugs
that require specific sequences of host requests and maintenance transitions~\cite{chang2020determinizing, zheng2013power}. 
\fix{Each customization changes which states the FTL can reach, so its validation cannot be reused and must be redone for the new design.}
\fix{Formal verification closes testing's coverage gap, because a proof holds over every reachable state. Prior work verifies FTLs formally in two ways. (1) SCFTL~\cite{chang2020determinizing} and Flashix~\cite{bodenmuller2021flashix} \rrev{each verify one specific FTL design.} (2) Qiao et al.~\cite{qiao2019formal} instead build a framework that applies to any FTL \rrev{and demonstrate it on BAST}~\cite{kim2002space}\rrev{.}} 
Unfortunately, these approaches have two limitations. (i) They establish \emph{only} functional correctness. \rrev{None of them states an isolation, integrity, or ownership property}, so a functionally correct FTL passes while leaking data across tenants. (ii) Their proofs do not carry over to a new design\rrev{. The invariant is supplied per design, so a new FTL writes its own and}
% and prior work does not specify what the hypotheses must preserve
% so each new design supplies its own properties and re-proves every hypothesis. An effort that verifies every customization independently re-establishes every property of the unchanged parts, so its cost scales with the size of the FTL rather than with the size of the change.
%  so each new design supplies its own properties and \rrev{proves them itself}. An effort that verifies every customization independently re-establishes every property of the unchanged parts, so its cost scales with the size of the FTL rather than with the size of the change.
\rrev{proves it}. An effort that verifies every customization independently re-establishes every property of the unchanged parts, so its cost scales with the size of the FTL rather than with the size of the change.
% Their proofs are written from scratch for that design and do not transfer to a different FTL.

\fix{\mk{\textbf{Our goal} is to design a \rrev{reusable} formal verification framework that enables FTL designers to develop a machine-checked proof of the correctness and security of their customized FTL.} To this end, we propose \arbiter, a formal verification framework implemented in the \coq proof assistant~\cite{coq}. 
% \arbiter proves two properties of a designer's FTL. (1) The FTL \emph{preserves} a \emph{global invariant}, i.e., \mk{a conjunction of multiple correctness and security properties} over the FTL's state that every operation must keep true. (2) The FTL \emph{refines} an idealized block device\footnote{An FTL refines an idealized block device when every read the host issues returns the data that a simple, correct block device would return, i.e., the FTL's host-visible behavior is indistinguishable from that of the idealized device} (\S\ref{sec:model:invariants}).
}
% mechanized 
% \footnote{\crev{An operation \emph{preserves} a property when the property holds after it whenever it held before.}}
% \footnote{\arev{These are properties of the FTL's internal state and its host-visible behavior: the global invariant holds at every reachable state, and every valid execution matches that of an idealized block device (\S\ref{sec:model:simulation}).}}.

\textbf{\fix{Key Mechanism.}} \fix{\arbiter defines a general FTL model that captures what every FTL has in common, i.e., the address mapping, page states, ownership labels, free blocks it keeps, and the six operations it performs on them, i.e., read, write, invalidate, integrity-tag update, garbage collection, and wear leveling. \arbiter proves the two properties, i.e., (1) invariant preservation and (2) refinement of the idealized block device, once over this model.} 
\fix{\arbiter provides \fix{two ways to establish these two properties for a given FTL.}}
(i)~\emph{Direct verification}, in which the~\edit{FTL designer} writes their FTL as a \coq model and proves end-to-end that the operations preserve all the required properties. \mk{\rrev{\drev{\arbiter carries out this mode once for its own model. A design repeats it only if it changes the global invariant.}}} (ii)~\emph{Modular verification}, in which the designer describes their FTL through a \coq interface 
\vrev{and discharges five hypotheses instead of redoing the existing proofs}.   
\fix{The interface asks for three inputs \srev{(\S\ref{sec:verify:modular})}: (a) the FTL's state and operations; (b) a refinement map that projects the FTL's state onto \arbiter's model; and (c) proofs of five hypotheses about the FTL's own operations, namely that they preserve the global invariant and that \drev{a write neither loses nor corrupts data}.} \mk{\rrev{\arbiter's proofs then carry over unchanged, so the designer proves only what the modification changes.}}  
% The two modes are not mutually exclusive, and a designer can mix them across operations within a single FTL. All theorems are machine-checked. 
% The trusted computing base is the \coq kernel and the standard library. The development contains no admitted lemmas, no classical reasoning, and \arev{no axioms (\S\ref{sec:security-model:scope})}.

\textbf{Key Results.} We evaluate \arbiter on four case studies spanning the ways a new design can differ from a verified one (\S\ref{sec:evaluation}): an orthogonal extension, DFTL~\cite{gupta2009dftl}, the in-storage-processing system MegIS~\cite{Ghiasi2024MegIS}, and deduplication~\cite{chen2011caftl}. \fix{Across the four case studies, a designer adds \numupd{27} to 3{,}231 lines \rrev{against a \drev{16{,}489}-line framework}.}

\noindent \textbf{Contributions.} \arev{We make the following contributions:}

\begin{itemize}
\item \mk{\rrev{We identify five failure surfaces} that custom FTL logic can violate, and we \rrev{demonstrate} attacks on each surface\rrev{, using} a real SSD, the DaisyPlus OpenSSD platform~\cite{daisyplus_openssd} \drev{(\S\ref{sec:attacks:board})}.} 
% Every attack leaves the device functionally correct, so conventional SSD validation does not detect it.

\item \mk{We propose \arbiter, the first framework that \rrev{gives a modified FTL a machine-checked proof of tenant isolation, data integrity, and block ownership}, not only functional correctness. \arbiter states mapping correctness, tenant isolation, data integrity, block ownership, and block allocation as a single global invariant (Definition~\ref{def:ftl-invariant}) and proves that 1) every FTL operation, including garbage collection and wear leveling, preserves it (Theorem~\ref{thm:preservation}), and 2) reads return the data stored at the corresponding address in the
idealized block device (Theorem~\ref{thm:refinement}).}

\item \mk{We make verification reusable across FTL designs through two modes (§\ref{sec:formal-model}): direct verification, in which a designer proves their implementation correct end-to-end, and modular verification, in which a designer describes their FTL through a small \coq interface %
and \vrev{discharges five hypotheses instead of redoing \arbiter's proofs}. To our knowledge, this is the first work to make FTL security proofs reusable across designs.}

\item \mk{We evaluate \arbiter on four case studies and show that \rrev{verification effort is significantly reduced by \arbiter's modular verification}.}

\end{itemize}

% All theorems are machine-checked. The trusted computing base is the \coq kernel and the standard library; the development contains no axioms, no admitted lemmas, and no classical reasoning, so every proof is constructive.

\section{Background}
\label{sec:prelim}

\edit{We describe (1) the solid-state drive (SSD) architecture, the Flash Translation Layer (FTL), and the NAND-flash hardware constraints (\S\ref{sec:prelim:ssd-ftl}); 
\back{(2) \fix{storage-centric \rrev{computing, which drives FTL customization}}~(\S\ref{sec:prelim:deploy})\rrev{; and (3) how prior storage-centric designs modify the FTL~(\S\ref{sec:isp_analysis}).}}} 

\subsection{SSD Organization and FTL}
\label{sec:prelim:ssd-ftl}

\subsubsection{SSD Architecture}
\label{sec:prelim:arch}

A modern SSD comprises five key components arranged along the path from host I/O to physical flash storage~\fix{\cite{nadig2026harmonia, tavakkol2018mqsim, tavakkol2018flin, agrawal2008design}}~\rrev{(as shown in Figure~\ref{fig:ssd})}.~\back{(i)}~The \emph{host-interface layer} (HIL)~\circled{1} accepts incoming I/O requests over an industry-standard protocol (e.g., NVMe over PCIe~\back{\cite{nvme_base_spec, eshghi2012ssd, snai_sdc2021}}), enqueues them \fix{in its request queue} for the \back{SSD} controller, and delivers completion responses to the host. \back{(ii)}~The SSD controller~\circled{2} hosts \emph{\fix{embedded compute} units~\circled{3}}, \fix{i.e.,} CPU cores (e.g.,~\cite{cortex_r4}) \back{or} accelerator engines (e.g.,~\cite{cortex_r4, tavakkol2018mqsim, agrawal2008design, Ghiasi2024MegIS, park2006high}), that execute the SSD's firmware, which includes the FTL (\S\ref{sec:prelim:ftl})~\circled{4}. \back{(iii)}~The controller has direct access to an \emph{internal DRAM}~\circled{5} that caches FTL metadata and recently-used data \back{read from flash}~\cite{agrawal2008design, bang2011memory}. \back{(iv)}~The \emph{flash controller}~\circled{6} translates the FTL's logical operations into low-level NAND commands and issues them to the NAND flash 
chips~\cite{agrawal2008design, cai-date-2012, cai-iccd-2012, cai-inteltechj-2013, cai-sigmetrics-2014, cai-dsn-2015, cai-hpca-2017, luo-sigmetrics-2018, cai-insidessd-2018, cai2013program,cai2015data, cai2017error}, \nrev{and implements the error-mitigation mechanisms, e.g., ECC, refresh, and read-retry, that keep the flash reliable}~\cite{cai-dsn-2015, cai-iccd-2012, cai-date-2012, cai-sigmetrics-2014, luo2015warm, luo2016enabling, luo2018heatwatch, park-asplos-2021}. \back{(v)}~The \emph{NAND flash chips}~\circled{7} provide the physical storage and execute the read, program, and erase commands the flash controller issues~\cite{onfi, micheloni2010inside, grupp2009characterizing, dirik2009performance, cai-insidessd-2018} and transfer data through multiple parallel channels. \back{NAND flash imposes three constraints on these commands. First, reads and programs act on a page, \fix{commonly} \prev{16}\,KB~\fix{\rrev{\cite{pekny2022isscc, cho20221, kang201913, maejima2018512gb}}}. Second, erases act on a whole block, a fixed group of consecutive pages~\back{\cite{micheloni2010inside, cai2017error, grupp2009characterizing, cai-date-2012, cai-insidessd-2018}}. Third, a programmed page cannot be programmed again until its block is erased. These constraints force the FTL to write \emph{out-of-place}, i.e., a write to a \fix{logical address} whose previous mapping is still live takes a fresh page from a free-block pool and invalidates the previous one. \fix{The FTL manages this out-of-place write and the metadata it requires}~(\S\ref{sec:prelim:ftl}).} Each NAND page carries a small \emph{out-of-band} (OOB) area next to its data~\cite{micheloni2010inside}, where the FTL stores the page's metadata, which consists of (i) page state (erased, live, or stale), (ii) a \emph{reverse mapping}, i.e., the logical address the page was written for, and (iii) an \emph{integrity tag} that lets a later read check the page's contents\rrev{, which we model after the end-to-end protection information a drive already carries per logical block~\cite{nvme_nvm_command_set}}.
\rrev{An FTL that keeps no in-DRAM reverse map reads this field during garbage collection, to learn the logical address of a live page before relocating it}~\cite{gupta2009dftl, agrawal2008design}.

\begin{figure}[h]
    \centering
    \includegraphics[width=1\linewidth]{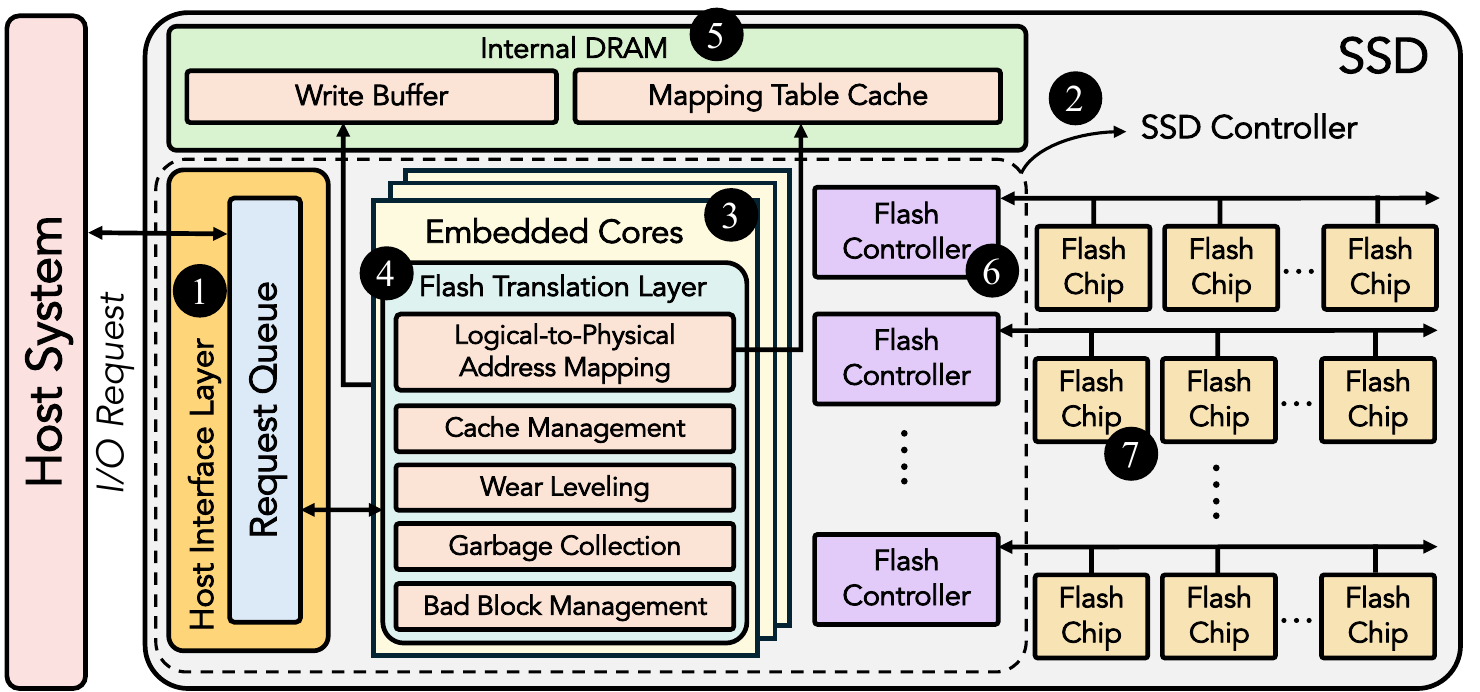}
    \caption{Overview of a modern solid-state drive (SSD).}
    \label{fig:ssd}
\end{figure}

\subsubsection{Flash Translation Layer (FTL)}
\label{sec:prelim:ftl}

The FTL~\cite{gal2005algorithms, park2009design, chung2009survey, goodson2010design} is the SSD control layer that \back{enables the host to use NAND flash as a block device, hiding the three constraints above while sustaining the device's performance, reliability, and lifetime}.
It comprises~\edit{\rrev{eight}} key components.

\noindent \textbf{(a) Logical-to-Physical (L2P) Mapping.}
The FTL maps each logical block address (LBA) from the host to a physical page number (PPN) in NAND flash~\cite{gal2005mapping}. Different FTL designs differ in granularity~\cite{kim2002space, ma2014survey, gupta2009dftl, sun2023leaftl, picoli2019ox}: \emph{page-mapped} FTLs use a single L2P table; \emph{hybrid-mapped} FTLs (e.g., BAST~\cite{kim2002space}, FAST~\cite{lee2007log}) combine a coarse \emph{Block Mapping Table} (BMT) with a fine-grained \emph{Log Page Table} (LPT)~\rrev{\cite{gupta2009dftl, ma2014survey}}; and \emph{demand-cached} FTLs (e.g., DFTL~\cite{gupta2009dftl}) keep only a working set of L2P entries in DRAM, fetching the rest from on-flash translation pages on demand. \back{ A page-mapped table is the fastest to consult but too large to hold in DRAM at scale; hybrid mapping shrinks it at the cost of merge operations, and demand caching keeps DRAM small but requires a flash read on every miss.}

\noindent \textbf{(b) Cache Management.}
The FTL caches L2P mappings and recently-used data in the SSD's internal DRAM~\cite{agrawal2008design, bang2011memory, ma_lazyftl_2011, zhou2015efficient, lim2010faster, shin2009ftl, wang2024learnedftl}, and buffers writes to reduce program/erase cycles and improve flash endurance.

\noindent \edit{\textbf{(c) Garbage Collection (GC).}
The FTL reclaims space by selecting blocks with few valid pages, migrating their live data to fresh blocks, and erasing the resulting old blocks~\cite{chung2009survey, agrawal2008design, xie2014adaptive, cai-insidessd-2018, jung2012taking, shahidi2016exploring, wu2016gcar, choi2018parallelizing}.} \fix{Erased blocks return to the free-block pool from which writes and relocations take fresh pages. GC runs without a host request and rewrites the L2P mapping and per-page metadata of every page it moves.}

\noindent \edit{\textbf{(d) \prev{Wear Leveling} (WL).}
The FTL balances erase wear across blocks by tracking per-block erase counts and redirecting future writes to less-used regions~\cite{gal2005algorithms, chung2009survey, chang2007wear, yang2014garbage}.} \fix{Like GC, it relocates live data without a host request.}

\noindent \textbf{(e) Crash Recovery.}
\fix{The FTL restores a consistent state after power loss through journaling and checkpointing~\cite{gupta2009dftl, chang2020determinizing}.}
% \noindent \fix{\textbf{(b) Per-Page Metadata.}

\noindent \mk{\textbf{(f) Data Layout in Flash.} The data layout defines how the FTL places host data across the NAND flash chips~\cite{onfi, micheloni2010inside, grupp2009characterizing, dirik2009performance, cai-insidessd-2018}, i.e., the channel, die, plane, block, and page each logical page is written to. It determines the parallelism the SSD can extract, the number of valid pages garbage collection must relocate, and which tenants' data share a block.}

\noindent \mk{\textbf{(g) DRAM Data Layout.} The SSD's internal DRAM holds the FTL's metadata and a write buffer for host data~\cite{agrawal2008design, bang2011memory}. The DRAM data layout defines where each structure resides, i.e., the logical-to-physical (L2P) mapping table, the free-block and allocation lists, the per-block valid-page and erase counts, and the integrity tags. All tenants and all FTL modules share this single flat address space.}

\noindent \mk{\textbf{(h) Flash Controller.} The flash controller issues read, program, erase commands to the NAND flash chips and manages their timing, error-correcting code (ECC), and status handling. 
% It executes the physical addresses the FTL supplies without validating them against any ownership or isolation policy.
}

\subsection{\nrev{Storage-Centric Computing}}
\label{sec:prelim:deploy}

% \subsubsection{Storage-Centric Computing}
% \label{sec:isp}

% Storage-centric architectures achieve this via three approaches: (1) \emph{Near-Storage Processing} (NSP)~\cite{Matam_ISCA2019, jang2024smart, ren2025device, acharya-asplos-1998, riedel-vldb-1998, keeton-sigmod-1998, riedel-computer-2001, boboila-msst-2012, tiwari-fast-2013, kang-msst-2013, do-sigmod-2013, bae-cikm-2013, wang2016ssd1, jun-isca-2018, torabzadehkashi-pdp-2019, pei-tos-2019, liang-atc-2019}, offloades computation to a unit near the storage interface (i.e., PCIe), bypassing host DRAM; (2) \emph{In-Storage Processing} (ISP)~\cite{Soysal2025MARS, Ghiasi2024MegIS, Mansouri-Ghiasi_ASPLOS2022, Li_ATC2021, Lee_ISCA2022, Liang_DAC2022, Duffy_PVLDB2023, Park_ASPLOS2025, Bisson_IPCCC2018, Chen2025REIS, Yu2024CambriconLLM, Kim_HPCA2023}, executing computations on embedded SSD controller cores or near-flash accelerators, leveraging internal bandwidth and DRAM buffers; and (3) \emph{In-Flash Processing} (IFP)~\cite{Kabra2025CipherMatch, Chen_MICRO2024, Park_MICRO2022, Gao_MICRO2021, chen2024search, hu2022ice}, leveraging the operational principles of NAND flash memory to perform bitwise operations within flash arrays.
\edit{Storage-centric architectures~\cite{nider2020processing,mutlu2022modern,mutlu2024memory,mutlu2019processing,mutlu2025memory,mutlu2019enabling,nadig2026conduit} \fix{reduce data movement between the SSD and processing units} via two approaches: (1) \emph{In-Storage Processing} (ISP)~\back{(e.g.,}~\cite{park2016storage, Kabra2025CipherMatch, mailthody-micro-2019, kim-fast-2021, kang-msst-2013, torabzadehkashi-pdp-2019, seshadri-osdi-2014, wang-eurosys-2019, acharya-asplos-1998, keeton-sigmod-1998, koo-micro-2017, tiwari-fast-2013, tiwari-hotpower-2012, boboila-msst-2012, bae-cikm-2013, torabzadehkashi-ipdpsw-2018, pei-tos-2019, do-sigmod-2013, kim-infosci-2016, riedel-computer-2001, riedel-vldb-1998, liang-atc-2019, cho-wondp-2013, jun2015bluedbm, lee2020smartssd, ajdari-hpca-2019, liang-fpl-2019, jun-hpec-2016, kang-tc-2021, kim-sigops-2020, Lee_ISCA2022, li2023ecssd, ruan2019insider, wang2016ssd1, jeong-tpds-2019, mao2012cache, gouk2024dockerssd, Ghiasi2024MegIS, Mansouri-Ghiasi_ASPLOS2022, kang2021iceclave, yavits2021giraf, Kim_HPCA2023, lim-icce-2021, narasimhamurthy2019sage, jun-isca-2018, fakhry2023review, gu2016biscuit, yang2023lambda, jo2016yoursql, Chen2025REIS, Li_ATC2021, wang2016ssd, mahapatra2025rag, wang2024beacongnn, Yu2024CambriconLLM, pan2024instattention}\back{)}, 
executing computation on the SSD's embedded controller cores~\cite{do-sigmod-2013, gu2016biscuit, koo-micro-2017} or on accelerator units placed in or near the SSD (e.g., smart NICs, on-board FPGAs~\cite{lee2020smartssd}, or near-flash accelerators), leveraging the \back{SSD's} internal bandwidth~\back{\cite{chen2011essential, hu2011performance, hu2012exploring, nadig2023venice, kim2022networked}} and DRAM \back{as buffers to store intermediate data}; and (2) 
\emph{In-Flash Processing} (IFP)~\back{(e.g.,}~\cite{Park_MICRO2022, Gao_MICRO2021, Chen_MICRO2024, kim2025crossbit, wong2024tcam, wong2025anvil, chen2024search, choi2020flash, chun2022pif, lee2025aif, wang2018three, Kabra2025CipherMatch, kang-tc-2021}\back{), 
leveraging the analog operation of NAND flash arrays to perform computations inside the array}.} These architectures accelerate diverse applications, e.g., genomics~\cite{ghiasi2026grains, ghiasi2026sage, Soysal2025MARS, Mansouri-Ghiasi_ASPLOS2022, Ghiasi2024MegIS} and machine learning~\cite{Chen2025REIS, Yu2024CambriconLLM, Kim_HPCA2023}.

% \back{Each depends on how the FTL addresses and places data.}

% \subsection{Motivation: FTL Customization}
\subsubsection{\back{FTL Modifications in Prior Designs}}
\label{sec:isp_analysis}

\crev{\fix{\rrev{Storage-centric computing requires modifications to the FTL, and prior designs have modified it extensively.} We analyze 17 prior storage-centric designs\rrev{, and} Table~\ref{tab:ssd_modifications_matrix} records \rrev{the} eight FTL dimensions~(\S\ref{sec:prelim:ftl}) that \rrev{they} modify.}}
\begin{table}[h]
\centering
\caption{FTL modifications in prior storage-centric designs.}
\label{tab:ssd_modifications_matrix}
\setlength{\tabcolsep}{3.2pt}
\renewcommand{\arraystretch}{1.12}
\resizebox{\columnwidth}{!}{%
\begin{tabular}{@{}llcccccccc@{}}
\toprule
& & \multicolumn{8}{c}{\textbf{Modifications to storage system}} \\
\cmidrule(l){3-10}
\textbf{Applications} & \textbf{Prior Works}
& \textbf{(A)} & \textbf{(B)} & \textbf{(C)} & \textbf{(D)} & \textbf{(E)} & \textbf{(F)} & \textbf{(G)} & \textbf{(H)} \\
\midrule
\multirow{3}{*}{Genome analytics}
& MARS~\cite{Soysal2025MARS}                 & \gmark & \rmark & \gmark & \gmark & \rmark & \gmark & \gmark & \gmark \\
& MegIS~\cite{Ghiasi2024MegIS}               & \gmark & \rmark & \gmark & \gmark & \rmark & \gmark & \gmark & \gmark \\
& GenStore~\cite{Mansouri-Ghiasi_ASPLOS2022} & \gmark & \rmark & \gmark & \gmark & \rmark & \gmark & \gmark & \gmark \\
\midrule
\multirow{4}{*}{Graph-based applications}
& GLIST~\cite{Li_ATC2021}       & \gmark & \rmark & \gmark & \rmark & \rmark & \gmark & \gmark & \gmark \\
& GraphSSD~\cite{Matam_ISCA2019}& \gmark & \rmark & \gmark & \gmark & \rmark & \gmark & \gmark & \gmark \\
& SmartSAGE~\cite{Lee_ISCA2022} & \rmark & \rmark & \rmark & \rmark & \rmark & \gmark & \gmark & \gmark \\
& VStore~\cite{Liang_DAC2022}   & \gmark & \rmark & \gmark & \gmark & \rmark & \gmark & \gmark & \gmark \\
\midrule
\multirow{3}{*}{Database analytics}
& Dotori~\cite{Duffy_PVLDB2023}   & \ymark & \ymark & \ymark & \ymark & \ymark & \ymark & \ymark & \ymark \\
& AnyKey~\cite{Park_ASPLOS2025}   & \gmark & \gmark & \gmark & \gmark & \gmark & \gmark & \gmark & \gmark \\
& Crail-KV~\cite{Bisson_IPCCC2018}& \ymark & \ymark & \ymark & \ymark & \ymark & \ymark & \ymark & \ymark \\
\midrule
\multirow{3}{*}{Machine learning}
& REIS~\cite{Chen2025REIS}              & \gmark & \rmark & \gmark & \ymark & \rmark & \gmark & \gmark & \gmark \\
& CambriconLLM~\cite{Yu2024CambriconLLM}& \rmark & \rmark & \rmark & \rmark & \rmark & \gmark & \gmark & \gmark \\
& OptimStore~\cite{Kim_HPCA2023}        & \gmark & \gmark & \gmark & \gmark & \rmark & \gmark & \gmark & \gmark \\
\midrule
\multirow{4}{*}{Bulk-bitwise operations}
& CIPHERMATCH~\cite{Kabra2025CipherMatch}& \gmark & \rmark & \ymark & \ymark & \rmark & \gmark & \gmark & \gmark \\
& Ares-Flash~\cite{Chen_MICRO2024}       & \gmark & \gmark & \gmark & \gmark & \gmark & \gmark & \gmark & \gmark \\
& Flash-Cosmos~\cite{Park_MICRO2022}     & \gmark & \gmark & \gmark & \gmark & \gmark & \gmark & \gmark & \gmark \\
& ParaBit~\cite{Gao_MICRO2021}           & \gmark & \gmark & \gmark & \gmark & \gmark & \gmark & \gmark & \gmark \\
\bottomrule
\end{tabular}}

{\scriptsize \gmark~= explicitly modified;\ \ymark~= potentially required;\ \rmark~= not modified.\\
(A)~logical-to-physical mappings, (B)~cache management, (C)~garbage collection, (D)~wear leveling,
(E)~crash recovery, (F)~data layout in flash, (G)~DRAM data layout, (H)~flash controller.}
\end{table}

\begin{itemize}
\item \crev{\textbf{Genome analytics.} \fix{These workloads stream large read sets sequentially and in parallel. Prior in-storage designs~\cite{ghiasi2026grains, ghiasi2026enabling, ghiasi2026sage, Soysal2025MARS, Ghiasi2024MegIS, Mansouri-Ghiasi_ASPLOS2022} map data with coarse-grained sequential L2P entries (A), schedule garbage collection and wear leveling so that maintenance does not interrupt streaming (C, D),
% schedule garbage collection and wear leveling \rrev{outside the in-storage phase} so that maintenance does not interrupt streaming (C, D),
place data across flash planes and DRAM so that multi-plane reads can serve it (F, G), and extend the controller with commands that invoke the in-storage accelerator (H).}}
\item \crev{\textbf{Graph and database analytics.} These workloads issue random I/O with little spatial locality. Prior designs~\cite{Li_ATC2021, Matam_ISCA2019, Lee_ISCA2022, Liang_DAC2022, Duffy_PVLDB2023, Park_ASPLOS2025, Bisson_IPCCC2018} add graph-native or key-value mappings (A), relax wear leveling (D), make garbage collection exploit locality (C), co-locate related data (F), and extend controller scheduling and the I/O protocol (H).}
\item \crev{\textbf{Machine learning.} These workloads demand high bandwidth and parallel access. Prior designs~\cite{Chen2025REIS, Yu2024CambriconLLM, Kim_HPCA2023} coarsen L2P mappings to clusters or model weights (A), reconfigure the flash layout for parallel access (F), and extend controller command handling (H).}
\item \crev{\textbf{Bulk-bitwise operations.} In-flash designs~\cite{Kabra2025CipherMatch, Chen_MICRO2024, Park_MICRO2022, Gao_MICRO2021} compute inside the NAND array. They require computation-aware mappings (A), operand-aligned layouts that mark compute regions (F), and custom flash commands (H), and \rrev{several} disable ECC or force SLC mode for precise array-level results.}
\end{itemize}

\noindent\fix{FTL customization is not limited to storage-centric computing. \rrev{Open-channel SSDs~\cite{lightnvm, picoli2020open} move the FTL to the host, and ZNS SSDs~\cite{znssds} expose zones so that the host controls data placement and reclaim while the device keeps mapping within a zone, wear leveling, and ECC. Key-value SSDs~\cite{jin2017kaml, Park_ASPLOS2025} accept variable-length keys instead of fixed-size logical block addresses, so their FTL indexes keys to physical pages instead of translating logical addresses.}}

\section{Threat Model}
\label{sec:security-model}

\arev{\rrev{We describe (1) the two settings we target, single-tenant and multi-tenant SSDs}~(\S\ref{sec:security-model:setting})\rrev{; and (2) the scope of our model, i.e., which device functions we verify and which we hold fixed}~(\S\ref{sec:security-model:scope})\rrev{. In both settings, we treat the FTL as untrusted.} 
% \arbiter's guarantees under this model are properties of the FTL's state and its host-visible behavior.
}

%
% \subsection{Our Settings}
\subsection{\rrev{Target Settings}}
\label{sec:security-model:setting}

\subsubsection{Single-Tenant SSDs} A single tenant owns the entire SSD, and the FTL manages all physical blocks on its behalf. This is the setting of \rrev{most storage-centric designs} (e.g.,~\cite{jang2024smart, Matam_ISCA2019, Soysal2025MARS, Ghiasi2024MegIS, Mansouri-Ghiasi_ASPLOS2022, Li_ATC2021, Lee_ISCA2022, Liang_DAC2022, Duffy_PVLDB2023, Park_ASPLOS2025, Bisson_IPCCC2018, Chen2025REIS, Yu2024CambriconLLM, Kim_HPCA2023, kim-fast-2021}), where a vendor customizes the FTL for one application and the device serves one workload at a time. The FTL is the only component that maintains mapping correctness, data integrity, and block allocation for that tenant, so an FTL fault can corrupt address translation, silently lose data, or invalidate the integrity guarantees the host relies on.

\subsubsection{Multi-Tenant SSDs} In commercial cloud systems, \rrev{it is standard practice to share physical resources across tenants to improve utilization and cost efficiency (e.g., CPU cores among database tenants~\cite{das2013cpu}).} \back{The same principle applies to SSDs (e.g.,~\cite{kim2015ops, jaliminche2023enabling, wang2018oc, luo2013s, gonzalez2017multi, liu2021self, sun2025fleetio}). Consolidation of multiple tenants onto a single high-capacity SSD avoids two costs. First, a drive dedicated to one tenant stays underutilized. \rrev{Second, a database that several tenants share must be replicated on every drive.}} \back{Recent ISP systems (e.g.,~\cite{kang2021iceclave, gouk2024dockerssd}) target the same multi-tenant setting. \fix{Tenants then share the SSD's flash, internal DRAM, and FTL metadata. \rrev{NVMe Namespaces, sets, and per-namespace media keys express the isolation boundary~\cite{nvmenamespaces, nvme_base_spec, tcg_key_per_io}, but every one of them is enforced through the FTL.}
% , so the FTL is the last on-device enforcement point for tenant isolation.
Any fault in the FTL implementation can therefore corrupt the mechanism every tenant's isolation depends on.}}

A multi-tenant FTL configuration consists of $n$ tenants sharing one SSD, each tenant $\tau_i$~\back{\cite{huang2017flashblox, kwon2020dc, kim2018utilitarian}} \rrev{owns a contiguous, non-empty range $R_i \subseteq \{0, \ldots, \mathtt{addr\_space} - 1\}$ of logical addresses. A physical block becomes that tenant's when the FTL programs one of the tenant's addresses into it, and no block ever holds two tenants' data.}~\back{The SSD vendor supplies the FTL, which runs on the controller's embedded cores; the tenants supply workloads, not code. The FTL has direct access to DRAM-resident metadata (mapping tables, per-page metadata, key tables) and can issue read, program, and erase commands to the flash controller.} \fix{We treat the FTL as untrusted. Any modification, whether faulty or malicious~\rrev{\cite{zaddach2013backdoor, meijer2019self, wertenbroek2024pandora}}, may update the metadata and access the flash in any order, and tenants may issue any host I/O. \rrev{Such an FTL aims to break a tenant's isolation, corrupt or lose data, or return data the host never wrote. %
% Three things sit outside what we verify, (i) whether our model matches a real FTL, (ii) device management, and (iii) the NAND flash medium}~(\S\ref{sec:security-model:scope})\rrev{.}
%
% We do not verify our model's fidelity to a real FTL, device management, or the NAND flash itself, which we assume returns the data that was programmed}~(\S\ref{sec:security-model:scope})\rrev{.}
% We assume that a real FTL behaves as our model does, that device management is correct, and that the flash returns the data that was programmed
}
% ~(\S\ref{sec:security-model:scope})\rrev{.}
} 
% Each tenant supplies its own FTL implementation, which runs as a firmware component on the SSD controller's embedded cores alongside the vendor's resident firmware. Its operations can read and update internal DRAM (mapping tables, per-page metadata, and any key tables held in DRAM) and can issue commands to the flash controller.
%

\subsection{\arev{Scope of the Model}}
\label{sec:security-model:scope}

\mk{
\rrev{\arbiter verifies the FTL state and the operations that act on it. The FTL state is the value of every FTL-managed data structure, i.e., the address mapping, the flash pages and their per-page metadata, the ownership labels and the region table, the allocator state, and the encryption keys}~(\S\ref{sec:model:state})\rrev{. The model has six operations, i.e., host reads, writes, invalidations, writing per-page integrity metadata, garbage collection, and wear leveling.} These six operations are the model's entire transition set, and every SSD policy a design changes, e.g., how a logical address maps to a
physical page, which physical page a write uses, which block is reclaimed, and how wear is spread, is a choice made inside them.}

\mk{
\rrev{\arbiter does not verify three things. First, whether our model matches a real FTL.} 
% We model the FTL state and its six operations, and we take the flash primitives to be the NAND command set~\cite{onfi, agrawal2008design}\rrev{.}
\fix{Second, device management}, e.g., creating, deleting,
and resizing tenant namespaces~\cite{nvme_base_spec, nvmenamespaces},
sanitizing the device~\cite{kim2020evanesco}, and setting NAND timing.
\arbiter holds these functions fixed \fix{and takes from them the region table
and the per-address ownership labels that say which addresses belong to which
tenant}. %
% \fix{Third, the NAND flash medium: we assume ECC corrects read-disturb and retention errors, so a read returns the data that was written}
%
% \fix{Third, the NAND flash medium\rrev{. We assume} ECC corrects read-disturb and retention errors, so a read returns the data that was written}
\fix{Third, the NAND flash \rrev{itself. We assume} ECC corrects read-disturb and retention errors, so a read returns the data that was written}
(e.g.,~\back{\cite{cai2017error, cai2015data, luo2015warm,
park-asplos-2021}}).} \mk{We do not model concurrency between FTL invocations, bad block retirement~\back{\cite{cai2017error}}, or timing side channels\rrev{~\cite{juffinger2025secret, juffinger2025hmb}}. At most
one FTL invocation runs at a time.}

\section{Motivation and \rrev{Our} Goal}
\label{sec:motivation}

%
% \mk{We provide our motivational analysis in two areas. First, we identify five failure surfaces (10 failures in total) that an unverified FTL can violate and demonstrate those failures on a real SSD (§\ref{sec:motivation:failures}). Second, we analyze prior FTL verification approaches and show they do not detect these failures
\mk{\rrev{We describe (1) the five failure surfaces at which an unverified FTL can break tenant isolation, data integrity, or block ownership, and the ten failures we demonstrate on a real SSD}~(§\ref{sec:motivation:failures})\rrev{; and (2) why prior FTL verification frameworks do not rule these failures out} \rrev{(§\ref{sec:motivation:priorwork})}.}

\subsection{\nrev{Failures of an Unverified FTL}}
\label{sec:motivation:failures}
\label{sec:security-model:surfaces}

\subsubsection{\nrev{Failure Surfaces}}
\begin{prevblock}
\nrev{A modified FTL runs on the controller with direct access to the five components of \S\ref{sec:prelim:arch}. Each is a \emph{failure surface}, a component at which a faulty or malicious FTL action can break tenant isolation, data integrity, or block ownership.} \rrev{We name a surface by the part the FTL reaches through, e.g., the host-interface layer's request queue, not by where the corrupted state ends up.} \prev{The five surfaces expose ten failures, described below and proven reachable in \rrev{\S}\ref{sec:attacks}.} 
\begin{itemize}
\item \textbf{HIL request queue (FS\#1).} Host commands reach the FTL through this queue. While serving a command, a modified FTL can erase (i)~a block a tenant's address still maps to, so the freed block is reallocated to a second tenant while the first still maps into it, or (ii)~a block index outside the device, so a block that does not exist enters the free pool.
% ~\rrev{A block handed on without being cleared carries the previous tenant's data with it~\cite{wei2011reliably}.}
\item \textbf{Internal DRAM (FS\#2).} The internal DRAM holds the FTL's address mapping and per-page metadata. A modified FTL can (i)~map two addresses to one physical page, so a read of one returns another tenant's data, (ii)~map two offsets of one address to one page, so two of that address's own pages return the same data, or (iii)~map an address to a page recorded under another namespace, so a read crosses the namespace boundary.
\item \textbf{Embedded compute units (FS\#3).} The controller's cores and accelerators run the FTL firmware and in-storage programs. A modified accelerator routine can map one tenant's address to a page (i)~another tenant owns, so a read returns that tenant's data, or (ii)~recorded under another namespace, so a read crosses the namespace boundary.
\item \textbf{Flash controller (FS\#4).} The flash controller turns the FTL's operations into NAND commands. A modified FTL can program a page but omit its integrity tag, so no later read can verify the page's data.
\item \textbf{NAND flash chips (FS\#5).} The NAND chips execute the read, program, and erase commands, and raw access reaches them directly\rrev{~\cite{cai-hpca-2017}}. A modified FTL can (i)~return to the free pool a block that is already there, so the pool holds it twice and hands it to two tenants, or (ii)~program a page and map it to an address in another namespace, so a read crosses the namespace boundary.
\end{itemize}
\end{prevblock}

\subsubsection{\nrev{Demonstration on a Real SSD}}
\label{sec:attacks:board}
\begin{nrevblock}
\prev{We demonstrate all ten failures on a real SSD. In each, we modify the FTL firmware at the surface and run the FTL from an NVMe host. Six failures produce a host-visible effect: an NVMe read of the attacker's own address returns another tenant's data, which we confirm by comparing SHA-256~\prev{\cite{fips180-4}} digests. The other four corrupt FTL metadata that no NVMe read exposes, and we confirm each by the firmware's boot-time self-test of the violated invariant clause. We present FS\#2, the L2P aliasing, in detail. The others follow the same procedure.}
% \S\ref{sec:attacks} proves that each of the five is reachable and shows how it is blocked.

\noindent\textbf{Setup.}
We use a DaisyPlus OpenSSD~\cite{daisyplus_openssd, kwak2018cosmos} (Xilinx Zynq UltraScale+ ZU17EG, Micron NAND) running \rrev{the OpenSSD reference FTL (gr3ftl), which manages the flash on the device and implements the NVMe controller in firmware}. The board attaches to a Linux host over PCIe and enumerates as an ordinary NVMe SSD with 4\,KiB logical blocks. The FTL maps in 16\,KiB slices of four blocks. Tenants are logical-address ranges. The victim and the attacker are distinct logical slice addresses (LSAs), and we issue all I/O from the host with the standard \texttt{nvme} tool.

\noindent\textbf{L2P aliasing.}
The FTL's mapping step is the concrete counterpart of the model's L2P map: \rrev{an LSA} resolves to a virtual slice address (VSA) backed by a physical page. The aliasing \prev{failure} is a remap of the attacker's entry onto the victim's,
\begin{lstlisting}
victimVsa = L2P[VICTIM_LSA].vsa;    /* 1024 */
L2P[ATTACKER_LSA].vsa = victimVsa;  /* 2048 -> 1024's slice */
\end{lstlisting}
so two logical slices resolve to one physical slice. A victim writes a secret to its slice. The attacker then issues an ordinary NVMe read of \emph{its own} address. With the remap in place, the read returns the victim's secret. The SHA-256 of the attacker read equals the victim's and differs from an uninvolved control block,
\begin{lstlisting}
victim   read sha256 = 9436f68d..dcba
control  read sha256 = 07bdd78f..5abb
attacker read sha256 = 9436f68d..dcba  (= victim)
\end{lstlisting}
a cross-tenant confidentiality breach carried entirely over the standard block interface, with no privileged access, firmware exploit, or side channel. Two lines in the FTL suffice.
\end{nrevblock}

\noindent \nrev{\emph{The Verification Problem.}}
\nrev{%
Each modification in Table~\ref{tab:ssd_modifications_matrix} changes the states the FTL can reach at these surfaces. A custom FTL must therefore be proven, before deployment, to reach no failing state at any surface. \rrev{Each new design changes the FTL again, so this proof is needed again. The verification problem is to produce it for every custom FTL at a cost proportional to what the design changes, not to the whole FTL.}}

\subsection{Limitations of Prior FTL Verification}
\label{sec:motivation:priorwork}
 
\rrev{Prior work on the verification of FTLs and flash file systems almost entirely verifies against a
\emph{functional} specification, i.e., that the device behaves as an idealized
disk or file system.} We identify three key limitations of prior designs.

\noindent\textbf{Limitation 1: No security property is stated.}
\rrev{Qiao et al.~\cite{qiao2019formal} define FTL correctness as refinement to a
disk-like specification, SCFTL~\cite{chang2020determinizing} verifies snapshot
consistency, and Flashix~\cite{bodenmuller2021flashix} and
Cogent~\cite{amani2016cogent} verify flash \emph{file systems} against
file-system specifications.} None of them states tenant
isolation, integrity-tag coverage, or block ownership. A modified FTL can
therefore violate all three and still satisfy the specification these
frameworks prove. Our real-system attacks make this concrete: \rrev{five of the
ten attacks in \S\ref{sec:motivation:failures} change only ownership labels and
integrity tags, which no functional specification constrains, so every
framework above admits them}. Tripathy et al.~\cite{tripathy2023formal} do check
security properties, but they model-check one ransomware-resistant FTL rather
than state a contract that another FTL can satisfy.
 
\noindent\textbf{Limitation 2: Proofs are tied to one implementation.}
\rrev{SCFTL~\cite{chang2020determinizing} and Flashix~\cite{bodenmuller2021flashix} each verify a single fixed implementation}, and
Flashix's layered structure composes within a fixed FTL rather than across designs.
\rrev{Cogent~\cite{amani2016cogent} is a language and compiler, and its refinement proof relates C code
to its own specification, so it transfers to no other FTL either.}
Qiao et al.\ build a framework \rrev{in which an FTL refines a disk, and demonstrate it on the
hybrid-mapped BAST FTL~\cite{kim2002space}. 
% Two of their five hypotheses, read-after-write and non-interference between pages, are close to \arbiter's.
In their framework, the invariant is a parameter, so a newer FTL design must be verified from scratch, which significantly increases design effort. Their model also has $only$ three operations,
i.e., initialize, read, and write, so it covers neither garbage collection nor
wear leveling}. Hence, the verification effort is required again for every
design change.
 
\noindent\textbf{Limitation 3: The verified designs predate storage-centric
computing.} %
\rrev{The FTLs these frameworks verify, e.g., the hybrid-mapped BAST~\cite{kim2002space}, requires only address-translation and maintenance state.
%
% A storage-centric FTL additionally keeps per-tenant ownership metadata and the coarse mappings its accelerators read (\S\ref{sec:isp_analysis}), and it is exactly this state that a custom FTL modifies. No prior framework constrains it.}
A storage-centric FTL that can be used in multi-tenant scenarios also requires per-tenant ownership metadata, the range mappings its accelerators read (\S\ref{sec:isp_analysis}), and a custom FTL modifies exactly this state. No prior framework constrains it %
\rrev{and only \drev{verifies} a fixed set of FTL operations. A storage-centric design may also propose a new operation built on the NAND commands and metadata updates beneath it~(\S\ref{sec:prelim:arch}). Verifying such a design therefore requires proofs at that lower level, which no prior framework provides.}}
 
Hence, \rrev{each existing FTL verification framework establishes} functional correctness for one fixed, conventional FTL, \rrev{and the one framework that checks security properties does so for a
single design}. A storage-centric FTL needs isolation, integrity, and ownership
guarantees that survive modification. A verification framework for modified
FTLs must fix these guarantees once, as a contract, so that a new design
inherits them instead of re-proving them.

\mk{\textbf{Our goal} is to design a \rrev{reusable} formal verification framework that enables FTL designers to develop a machine-checked proof of the correctness and security of their customized FTL.}
\section{\arbiter: Formal Verification Framework}
\label{sec:formal-model}

We propose \arbiter, a formal verification framework implemented in the \coq proof assistant~\cite{coq}. \arbiter proves two properties of a designer's FTL. (1) The FTL \emph{preserves} a \emph{global invariant}, i.e., \mk{a conjunction of multiple correctness and security properties} over the FTL's state that every operation must keep true. (2) The FTL \emph{refines} an idealized block device, \numupd{i.e., every read the host issues returns the data a simple, correct block device would return, so its host-visible
behavior is indistinguishable from that of the idealized device.}

% In Section~\ref{sec:attacks} we define one violation per surface formally and prove that the violation is reachable from any well-formed state.

\arbiter reasons about an FTL at two levels. First, at the \emph{operation} level, the six operations of the FTL model (\S\ref{sec:model:state}) are the unit of proof. \arbiter proves once, for its own model, that each operation preserves the global invariant and that the model refines an idealized block device (\S\ref{sec:model:invariants}, \S\ref{sec:model:oplevel}). %
Second, at the \srev{\emph{command}} level, each operation expands into the \srev{commands} the device executes \srev{(\S\ref{sec:model:commands})}, which carries the guarantee down to what an FTL issues directly, not only to an FTL that acts through the six operations. %
\srev{A designer reaches these guarantees by direct or by modular verification, which \S\ref{sec:verify} describes.} All proofs are mechanized in \coq~\cite{coq, huet1997coq, chlipala2013certified}. We define the FTL failure as follows. 

\begin{definition}[FTL failure]
\label{def:failure}
An FTL exhibits a \emph{failure} if,~\edit{starting from a well-formed state, \fix{i.e., one that satisfies the global invariant,} it can reach}
% some sequence of its operations, starting from a well-formed initial state, reaches
\back{a state in which the global invariant (stated in \S\ref{sec:model:invariants}, Definition~\ref{def:ftl-invariant}) is false.} The failure surfaces of \S\ref{sec:motivation:failures} are the components at which an FTL introduces such a state.
\end{definition} 

% \back{The definition does not refer to intent. A failure is a reachable state, whether an omitted metadata update or a deliberate sequence of operations produces it.} \back{\S\ref{sec:security-model:surfaces} names the five components at which such a failure can be introduced.}

 % \S\ref{sec:attacks} \prev{formalizes the ten failures as state transformations, proves each reachable, and shows how \arbiter rules out all of them.}

\subsection{The FTL Model}
\label{sec:model:state}

%
% \fix{We give an \emph{abstract specification} $\mathcal{S}$ of an idealized block device's host-visible behavior (\S\ref{sec:model:abstract}).
% \fix{We give an \emph{abstract specification} of an idealized block device's host-visible behavior (\S\ref{sec:model:abstract}). We model the FTL as a deterministic state machine $(\mathcal{Q}, \Sigma, \delta, q_0)$. A state in $\mathcal{Q}$ is an \texttt{FTLState} record capturing the FTL's full internal state, its address mapping, per-page metadata, ownership labels, allocator state, and encryption keys (\S\ref{sec:model:concrete}, Table~\ref{tab:ftl-state}). %
% $\Sigma$ is the operations the FTL serves (\S\ref{sec:model:ops}), $\delta$ their transition function, and $q_0$ a well-formed initial state.} \srev{The device executes \emph{commands} rather than operations, so we also model the ten commands and the expansion of each operation into a sequence of them (\S\ref{sec:model:commands}).}

%
We first specify the host-visible behavior of an idealized block device
(\vrev{\S\ref{sec:model:abstract}}). We then model the FTL as a deterministic state machine
$(Q, \Sigma, \delta, q_0)$. Each state in $Q$ is an
$\mathtt{FTLState}$ record that captures the FTL's address mapping, per-page
metadata, ownership labels, allocator state, and encryption keys
(\vrev{\S\ref{sec:model:concrete}, Table~\ref{tab:ftl-state}}). $\Sigma$ contains the operations that the FTL serves
(\vrev{\S\ref{sec:model:ops}}), $\delta$ defines their transitions, and $q_0$ is a well-formed
initial state. Each operation expands into a sequence of lower-level commands, which we explain in \vrev{\S\ref{sec:model:commands}}.

\subsubsection{Abstract Specification}
\label{sec:model:abstract}
\rrev{$\mathcal{S}$ is the set of abstract device states. A state $h \in \mathcal{S}$ is a partial map $(\mathtt{Addr (a)} \times \mathtt{Page (p)}) \rightharpoonup \mathtt{Data (d)}$ that assigns data to a (logical address, page offset) pair. $\mathtt{write}(a, p, d)$ updates $h(a,p)$ with data $d$, $\mathtt{read}(a, p)$ returns it, and $\mathtt{invalidate}(a, p)$ makes $h(a,p)$ undefined.} The specification captures only host-visible functional behavior. It abstracts away \mk{the tasks managed by FTL,} such as tenant ownership, physical layout, out-of-place writes, and free-block management. 
% Tenant isolation is enforced separately, by the concrete invariant's ownership clauses (\S\ref{sec:model:invariants}).

\subsubsection{Concrete State}
\label{sec:model:concrete}
\vrev{$q_0$ is a freshly formatted device} after vendor setup has installed the address labels (\S\ref{sec:security-model:scope}), with no page mapped. Throughout the paper, $T?$ denotes an optional $T$, and all index types are natural numbers.
% \drev{A logical address spans $\mathtt{pages\_per\_block}$ pages, so the same constant bounds a page offset within an address and within a block. The two are counts, not locations: the physical pages of one address still need not share a block.} 
\arev{The 16 fields of \texttt{FTLState} (Table~\ref{tab:ftl-state}) fall into five groups.}\footnote{These 16 fields are a base. Richer designs route additional fields through the $\mathtt{CUSTOM\_FTL}$ interface \srev{(\S\ref{sec:verify:modular})}.}

\begin{table}[ht]
\centering
\caption{The 16 fields of \texttt{FTLState}, grouped by role.}
\label{tab:ftl-state}
\renewcommand{\arraystretch}{0.85}
\resizebox{\columnwidth}{!}{%
\begin{tabular}{@{}lll@{}}
\toprule
Group & Field & Type \\
\midrule
Mapping & \texttt{l2p\_map} & $\mathtt{Addr} \!\times\! \mathtt{Page} \to \mathtt{PhysAddr?}$ \\
\midrule
\multirow{3}{*}{Flash pages}
 & \texttt{page\_state} & $\mathtt{Block} \!\times\! \mathtt{Page} \to \mathtt{PageState}$ \\
 & \texttt{page\_role} & $\mathtt{Block} \!\times\! \mathtt{Page} \to \mathtt{Role?}$ \\
 & \texttt{page\_meta} & $\mathtt{Block} \!\times\! \mathtt{Page} \to \mathtt{PageMeta}$ \\
\midrule
\multirow{3}{*}{Ownership}
 & \texttt{addr\_tenant}, \texttt{addr\_namespace} & $\mathtt{Addr} \to \mathtt{TenantId?}$ / $\mathtt{NamespaceId?}$ \\
 & \texttt{block\_tenant}, \texttt{block\_namespace} & $\mathtt{Block} \to \mathtt{TenantId?}$ / $\mathtt{NamespaceId?}$ \\
 & \texttt{region\_table} & $\mathbb{N} \to \mathtt{RegionMeta?}$ \\
\midrule
\multirow{6}{*}{Allocation}
 & \texttt{free\_block\_list} & $\mathtt{list\;Block}$ \\
 & \texttt{free\_block} & $\mathtt{Block} \to \mathtt{bool}$ \\
 & \texttt{wear\_count} & $\mathtt{Block} \to \mathbb{N}$ \\
 & \texttt{open\_block} & $\mathtt{TenantId} \!\times\! \numupd{\mathtt{NamespaceId}} \to \mathtt{Block?}$ \\
 & \texttt{write\_ptr} & $\mathtt{TenantId} \!\times\! \numupd{\mathtt{NamespaceId}} \to \mathtt{Page}$ \\
 & \texttt{block\_open} & $\mathtt{Block} \to \mathtt{bool}$ \\
\midrule
Crypto & \texttt{key\_table} & $\mathtt{KeyId} \to \mathtt{KeyMeta?}$ \\
\bottomrule
\end{tabular}}
\end{table}

\begin{arevblock}
\begin{itemize}[nosep,leftmargin=*]
\item \textbf{Mapping.} \brev{$\mathtt{l2p\_map}$ is the L2P mapping table. It sends each \emph{logical page}, a host address with a page offset, to its own physical page, so the pages of one address may sit in different blocks.}

\item \textbf{Flash pages.} $\mathtt{page\_state}$ marks each cell erased ($\mathtt{PS\_Empty}$), stale ($\mathtt{PS\_Invalid}$), or live ($\mathtt{PS\_Valid}(d)$). $\mathtt{page\_role}$ separates user data ($\mathtt{RData}$) from FTL metadata ($\mathtt{RMeta}$). \fix{$\mathtt{page\_meta}$ models the page's out-of-band area (\S\ref{sec:prelim:ftl}). It holds the owner tenant, the owner namespace, an optional integrity tag, and the reverse mapping, i.e., the logical page stamped into the page when it was programmed.}

\item \textbf{Ownership.} $\mathtt{addr\_tenant}$ and $\mathtt{addr\_namespace}$ label each logical address with its owning tenant and namespace, and $\mathtt{region\_table}$ records each region's tenant, namespace, and address range. Vendor namespace setup installs all three (\S\ref{sec:security-model:scope}), and no operation in the model changes them. $\mathtt{block\_tenant}$ and $\mathtt{block\_namespace}$ carry the same two labels at block granularity. The FTL copies them from the address when it allocates a block to that address, and clears them when it erases the block. The model carries both granularities because the host names addresses while the device's access-control hardware sees blocks.

\item \textbf{Allocation.} $\mathtt{free\_block\_list}$ is a stack of erased blocks. An erase pushes a block onto it, and a write or relocation takes its destination block from the head. $\mathtt{free\_block}$ is a per-block flag for the same pool, which lets a check test one block without scanning the list. Every operation that allocates or frees a block updates both. $\mathtt{wear\_count}$ records how many times each block has been erased, and every erase increments it. $\mathtt{open\_block}$ names the block each tenant and namespace is currently filling, and $\mathtt{write\_ptr}$ points to the next free page in it. $\mathtt{block\_open}$ marks which blocks are open for writing.

\item \textbf{Crypto.} $\mathtt{key\_table}$ holds device-global encryption-key metadata and lies outside any tenant's view.
\end{itemize}
\end{arevblock}

\subsubsection{Operation Semantics}
\label{sec:model:ops}
\arbiter defines six operations, four \emph{host-visible} requests ($\mathtt{COpRead}$, $\mathtt{COpWrite}$, $\mathtt{COpInvalidate}$, $\mathtt{COpSetTag}$) and two \emph{maintenance} transitions ($\mathtt{COpGC}$, $\mathtt{COpWearLevel}$).
% \footnote{In real SSDs, garbage collection and wear leveling are background tasks. \arbiter treats both as explicit transitions, applied only when a trace requests them, which keeps refinement independent of any maintenance-scheduling policy.} 
\arev{They are exactly \back{the FTL's} responsibilities under \S\ref{sec:security-model:scope}, i.e., the data path (read, write, invalidate), integrity metadata (SetTag), and wear maintenance (GC, wear leveling).} The transition function is \arev{$\mathtt{step} : \mathtt{FTLState} \to \mathtt{COp} \to \mathtt{option}\;\mathtt{FTLState}$, which returns $\mathtt{Some}\;s'$ when the operation applies and $\mathtt{None}$ when its safety checks (Table~\ref{tab:operations}) reject it, leaving the state unchanged (e.g., a write with no free block). Every operation-level theorem (\S\ref{sec:model:oplevel}) assumes the step succeeded, i.e., that $\mathtt{step}$ returned $\mathtt{Some}\;s'$. A rejected operation produces no successor state, so there is nothing to constrain.}

\begin{table}[h]
\centering
\caption{The six FTL operations, their effect on the FTL state, and safety constraint.}
\label{tab:operations}
\resizebox{\columnwidth}{!}{%
\begin{tabular}{@{}lll@{}}
\toprule
Operation & Effect on state & Key constraint \\
\midrule
$\mathtt{COpRead}(a, p)$ & identity & -- (no data if $a$ unmapped or page invalid) \\
$\mathtt{COpWrite}(a, p, d)$ & \drev{\shortstack[l]{stale the old page, program a fresh one\\from the owner's write frontier}} & \drev{\shortstack[l]{$a < \mathtt{addr\_space}$, $p < \mathtt{pages\_per\_block}$;\\$a$ labelled; a page available}} \\
$\mathtt{COpInvalidate}(a, p)$ & mark mapped page $\mathtt{PS\_Invalid}$ & -- (no effect if $a$ unmapped) \\
$\mathtt{COpSetTag}(a, p, \mathit{tag})$ & update the integrity tag of the mapped page & -- (no effect if $a$ unmapped or page invalid) \\
$\mathtt{COpGC}$ & relocate the victim's live pages, erase and free it & \drev{\shortstack[l]{a reclaimable victim; every\\live page relocates successfully}} \\
$\mathtt{COpWearLevel}$ & the same, with a different victim choice & \drev{the same} \\
\bottomrule
\end{tabular}}
\end{table}

\subsubsection{\srev{Command Semantics}}
\label{sec:model:commands}

\srev{Each of \arbiter's six operations expands into a sequence of
\emph{commands}. A faulty FTL can issue these commands directly. \arbiter defines
ten commands, categorized into flash and metadata commands. The flash
commands are $\mathtt{PrimRead}$, $\mathtt{PrimProgram}$, and
$\mathtt{PrimErase}$, which perform NAND read, program, and
erase~\cite{onfi}. $\mathtt{PrimProgram}$ also writes the page's out-of-band
area (\S\ref{sec:prelim:ftl}), including the reverse mapping and integrity tag
supplied by the FTL. The seven metadata commands do not write the flash array.
$\mathtt{PrimMapAddr}$ and $\mathtt{PrimRemap}$ install and redirect mappings,
$\mathtt{PrimInvalidate}$ marks a page stale, removes its forward mapping
\drev{and clears its role, so a stale page counts as neither user data nor metadata},
$\mathtt{PrimFreePush}$ returns a block to the free pool, and
$\mathtt{PrimSetTag}$ writes a page's integrity tag. Two barriers bracket a
compound operation without changing state.}
\srev{For example, a write to a fresh address expands into four commands: an
opening barrier, $\mathtt{PrimMapAddr}$, $\mathtt{PrimProgram}$, and a closing
barrier. Rewriting an address first uses $\mathtt{PrimInvalidate}$ to mark the
old page stale because NAND flash cannot overwrite in place
(\S\ref{sec:prelim:arch}). %
\vrev{An operation expands only when it is valid. A rejected operation, such as a
write outside the address space, issues no commands.}
\drev{$\mathtt{PrimErase}$ erases a block and returns it to the free pool in one
command, so no expansion emits $\mathtt{PrimFreePush}$; that command is there for an
FTL that issues commands directly.} 
% Neither case erases a
% block. Garbage collection and wear leveling use $\mathtt{PrimErase}$ after
% relocating the live pages of a victim block. 
% An invalid operation has no
% command sequence, for example, a write outside the address space or to an
% address the vendor never labelled, or garbage collection when there is no
% victim to reclaim.
}

\subsection{Global Security Contract}
\label{sec:model:invariants}

\begin{table*}[t]
\centering
\caption{\rrev{The 27 clauses of the global FTL invariant, grouped by primary purpose into five categories.} 
% A page is \emph{live} (\texttt{PS\_Valid}), \emph{stale} (\texttt{PS\_Invalid}), or \emph{erased} (\texttt{PS\_Empty}); a page's \emph{back-pointer} is the logical page stamped in its out-of-band area; a block's \emph{write frontier} is its next unwritten page.
% The two well-formedness side conditions on the device geometry, $\mathtt{WF}_0$ and $\mathtt{WF}_1$, are omitted, since both hold of any state.
}
\label{tab:invariants}
\scriptsize
\setlength{\tabcolsep}{3pt}
\renewcommand{\arraystretch}{0.92}
\begin{tabular}{@{}ll p{0.80\textwidth}@{}}
\toprule
\textbf{Category} & \textbf{Inv.} & \textbf{Description} \\
\midrule
\multirow{5}{*}{Mapping}
  & Inv$_0$  & Every live page is the target of some logical page in the L2P map. \\
  & Inv$_1$  & Every L2P entry maps an in-bounds logical page to an in-bounds physical page. \\
  & Inv$_2$  & The L2P map is injective: no two logical pages point to the same physical page. \\
  & Inv$_3$  & If a logical page maps to a live physical page, that page's \fix{reverse mapping} names the same logical page (forward matches reverse). \\
  & Inv$_4$  & If a live page's \fix{reverse mapping} names a logical page, the L2P map sends that logical page back to this page (reverse matches forward). \\
\midrule
\multirow{4}{*}{Isolation}
  & Inv$_5$  & No block that holds a mapped page appears in the free list. \\
  & Inv$_6$  & Every page of a free block is erased and carries no metadata, so a recycled block keeps no owner or tag. \\
  & Inv$_7$  & A live page mapped from address $a$ records $a$'s owning tenant and namespace. \\
  & Inv$_8$ & Every block in the free list is a valid block index ($b < \texttt{total\_blocks}$). \\
\midrule
\multirow{8}{*}{Integrity}
  %
% Inv$_9$ & Every live page carries a cryptographic integrity tag. \\
& Inv$_9$ & \rrev{Every live page carries an integrity tag.} \\
  %
%  Inv$_{10}$ & Every block is free, open for writing, in use (the target of some logical page), or a \emph{garbage block} holding a stale page and awaiting reclamation. \\
%
& Inv$_{10}$ & \rrev{Every in-bounds block} is free, open for writing, in use \vrev{(the target of some logical page)}, or a \emph{garbage block} holding a stale page and awaiting reclamation. \\
  & Inv$_{11}$ & No block appears twice in the free list. \\
& Inv$_{12}$ & \rrev{An in-bounds block} that holds a metadata page is in use or holds a stale page \drev{(derivable, as are Inv$_{15}$ from Inv$_{16}$, and Inv$_2$ from Inv$_3$ and Inv$_{22}$)}. \\
  & Inv$_{13}$ & Every live page is a user-data page. \\
  & Inv$_{14}$ & Every user-data page is live and holds a data value. \\
  & Inv$_{15}$ & Every metadata page is non-erased. \\
  & Inv$_{16}$ & Every erased page has no role assigned. \\
\midrule
\multirow{3}{*}{Ownership}
  & Inv$_{17}$ & Every block that \texttt{free\_block} marks free has no owning tenant and no owning namespace. \\
& Inv$_{18}$ & A block that holds a mapped page carries the tenant and namespace of the address mapped there, \rrev{so block-level and address-level ownership agree}. \\
  & Inv$_{19}$ & Every defined region covers addresses within \texttt{addr\_space}. \drev{No clause ties a region's tenant to the per-address labels; vendor setup installs both.} \\
\midrule
\multirow{7}{*}{\brev{Allocation}}
  & \brev{Inv$_{20}$} & %
\brev{\rrev{An open block is in bounds, not free, marked open, has $\texttt{write\_ptr} \le \texttt{pages\_per\_block}$, and is open for one owner, which owns it or nobody does.}} \\
  & \brev{Inv$_{21}$} & %
\brev{Every page at or beyond \rrev{an open block's} write frontier is still erased and carries no metadata.} \\
  & \brev{Inv$_{22}$} & \brev{The L2P map points only at live pages.} \\
  & \brev{Inv$_{23}$} & \brev{Every block marked open has an owner it is open for.} \\
  & \brev{Inv$_{24}$} & \brev{\texttt{free\_block} and the free list (\texttt{free\_block\_list}) agree on which blocks are free.} \\
  & \brev{Inv$_{25}$} & %
\brev{Every page below \rrev{an open block's} write frontier has been written, i.e., is not erased.} \\
  & \brev{Inv$_{26}$} & %
\brev{\rrev{Every mapped address has an owning tenant and namespace.}} \\
\bottomrule
\end{tabular}
\end{table*}

\rrev{The contract has two parts. The global invariant constrains the FTL's internal state. The refinement relation ties the concrete state to the abstract specification.}

\begin{definition}[FTL Invariant]
\label{def:ftl-invariant}
The \emph{global FTL invariant} $\mathtt{ftl\_invariant}(s)$ is the conjunction of \back{27} clauses, $\mathtt{Inv}_0(s) \wedge \cdots \wedge \back{\mathtt{Inv}_{26}}(s)$, shown in Table~\ref{tab:invariants}, with two well-formedness conjuncts, $\mathtt{WF}_0(s)$ and $\mathtt{WF}_1(s)$.\drev{\footnote{$\mathtt{WF}_0$ restates $\mathtt{pages\_per\_block} > 0$ and ignores its state argument. $\mathtt{WF}_1$ says $\mathtt{page\_state}$ is defined at every in-range block and page, which holds of every state because it is a total function. Neither constrains an FTL, which is why we do not count them among the clauses.}}
% \footnote{\arev{In the artifact the conjunction is $\mathtt{tableII\_invariant}$ (\texttt{Invariants.v}), re-exported to designers as $\mathtt{security\_contract}$ (\texttt{CustomFTLInterface.v}). The latter name appears in the proof scripts of \S\ref{sec:evaluation}.}}

\end{definition}

\arev{\noindent \textbf{Design of the invariant.}
We derived the clauses from prior FTL designs and their known failure modes. \rrev{From these invariants we prove refinement to the idealized block device (Theorem~\ref{thm:refinement}). That specification is written independently of the invariant, so functional correctness is measured against an external standard rather than against the clauses themselves. 
\vrev{From the same set} we also prove the three properties of our threat model, tenant isolation, data integrity, and block ownership. A stronger threat model adds clauses to the set. Each new clause needs a preservation proof under all six operations, and every FTL already verified re-discharges its hypotheses over the extended invariant.}}

\begin{definition}[Refinement Relation]
\label{def:refinement}
The refinement relation $\mathcal{R} \subseteq \mathcal{S} \times \mathtt{FTLState}$ is defined as

{\small
\begin{equation}
\label{eq:refinement}
\begin{aligned}
  \mathcal{R}(h, s) \;\triangleq\; & \forall a, p, d.\; h(a, p) = d \;\Rightarrow\; \rrev{\exists b, q}.\\
  & \mathtt{l2p\_map}(s, a, p) = \mathtt{Some}\;\rrev{(b, q)} \;\wedge\\
& \rrev{\mathtt{page\_state}(s, b, q)} = \mathtt{PS\_Valid}(d).
\end{aligned}
\end{equation}}
\end{definition}

Equation~\eqref{eq:refinement} relates the two devices by what a read returns. Whenever the abstract device holds data $d$ at address $a$ and page offset $p$, the concrete device maps the logical page $(a, p)$ to some physical page that is live and holds $d$, so reading $(a, p)$ returns $d$ on both.

\subsection{\srev{Proving the Global Security Contract}}
\label{sec:model:simulation}
\label{sec:model:oplevel}

\srev{\arbiter proves the invariant at both levels of \S\ref{sec:model:state}, i.e., the \emph{operation level} and the \emph{command level}. %
However, refinement is proved only at the \vrev{operation} level. It carries to the command level by \emph{agreement}, which we explain in \S\ref{sec:model:commandlevel}.}

\subsubsection{\srev{Operation Level}}
\label{sec:model:oplevel:ops} \arev{The verification guarantees are three composable theorems: (i) the invariant is preserved by every operation, %
% (ii) each operation matches the abstract device,
(ii) each operation \rrev{preserves the refinement relation}, and (iii) both properties compose over whole traces (Figure~\ref{fig:simulation}).}

\begin{figure}[h]
\centering
\begin{tikzpicture}[
  node distance=0.35cm and 0.7cm,
  state/.style={draw, rounded corners=2pt, minimum width=0.6cm, minimum height=0.45cm, font=\footnotesize, inner sep=2pt},
  arr/.style={-{Stealth[length=3.5pt]}, thick}
]
\node[state] (h0) {$h_0$};
\node[state, right=of h0] (h1) {$h_1$};
\node[right=of h1, inner sep=2pt] (hdots) {$\cdots$};
\node[state, right=of hdots] (hn) {$h_n$};
\node[state, below=of h0] (s0) {$s_0$};
\node[state, below=of h1] (s1) {$s_1$};
\node[below=of hdots, inner sep=2pt] (sdots) {$\cdots$};
\node[state, below=of hn] (sn) {$s_n$};
\draw[arr] (h0) -- node[above,font=\scriptsize]{$\sigma_1$} (h1);
\draw[arr] (h1) -- (hdots);
\draw[arr] (hdots) -- (hn);
\draw[arr] (s0) -- node[below,font=\scriptsize]{$\sigma_1$} (s1);
\draw[arr] (s1) -- (sdots);
\draw[arr] (sdots) -- (sn);
\draw[arr,dashed] (h0) -- node[right,font=\scriptsize]{$\mathcal{R}$} (s0);
\draw[arr,dashed] (h1) -- node[right,font=\scriptsize]{$\mathcal{R}$} (s1);
\draw[arr,dashed] (hn) -- node[right,font=\scriptsize]{$\mathcal{R}$} (sn);
\node[font=\scriptsize, left=0.1cm of h0] {abstract};
\node[font=\scriptsize, left=0.1cm of s0] {concrete};
\node[font=\scriptsize, right=0.1cm of sn, align=left] {Inv at\\every $s_i$};
\end{tikzpicture}
%
% \caption{Trace-level simulation: each $\sigma_i$ matches between abstract ($h_i$) and concrete ($s_i$), with $\mathcal{R}$ and Inv preserved throughout.}
\caption{\rrev{Trace-level simulation. Each operation ($\sigma_i$) matches between abstract ($h_i$) and concrete ($s_i$), and $\mathcal{R}$ and the global invariant hold throughout.}}
\label{fig:simulation}
\end{figure}

\begin{theorem}[Invariant Preservation]
\label{thm:preservation}
For every state $s$, operation $\sigma$, and successor $s'$, if $\mathtt{ftl\_invariant}(s)$ and $\mathtt{step}(s, \sigma) = \mathtt{Some}\;s'$, then $\mathtt{ftl\_invariant}(s')$.
\end{theorem}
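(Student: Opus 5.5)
We proceed by case analysis on the operation $\sigma$, reducing Theorem~\ref{thm:preservation} to six per-operation lemmas of the form ``$\mathtt{ftl\_invariant}(s) \wedge \mathtt{step}(s,\sigma)=\mathtt{Some}\;s' \Rightarrow \mathtt{ftl\_invariant}(s')$''. Each lemma is itself split into the 27 clauses of Table~\ref{tab:invariants} plus $\mathtt{WF}_0,\mathtt{WF}_1$. The two well-formedness conjuncts are immediate, since neither depends on the mutable state. Likewise, Inv$_{19}$ and Inv$_{26}$ are preserved because no operation modifies $\mathtt{region\_table}$, $\mathtt{addr\_tenant}$, or $\mathtt{addr\_namespace}$, and Inv$_{26}$ is monotone in the set of mapped addresses, which can only grow to labelled addresses by the write precondition. $\mathtt{COpRead}$ is the identity, so its lemma is trivial. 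For each remaining operation we unfold $\mathtt{step}$, use the success hypothesis to extract the safety checks of Table~\ref{tab:operations} as premises, and rewrite the resulting state with function-update lemmas of the form ``$f[x\mapsto v](y)$ equals $v$ if $y=x$ and $f(y)$ otherwise''.

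For the host operations we aim for a \emph{frame} style. $\mathtt{COpSetTag}$ touches only $\mathtt{page\_meta}$ of one live page and writes a tag, so only Inv$_3$, Inv$_4$, Inv$_6$, Inv$_7$, Inv$_9$, and Inv$_{21}$ mention the changed field. Each of these survives because the page is live and mapped, which places it below a write frontier and outside the free list by Inv$_5$, Inv$_{21}$, and Inv$_{22}$, and the new tag is present. $\mathtt{COpInvalidate}$ sets one mapped page to $\mathtt{PS\_Invalid}$, clears its role, and removes its L2P entry. The mapping clauses and Inv$_{22}$ are preserved by injectivity (Inv$_2$), and Inv$_{10}$ is preserved because the block now holds a stale page. $\mathtt{COpWrite}$ factors as ``invalidate old page, then program at the write frontier of $(\tau,\nu)$''. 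We prove a reusable lemma \emph{program\_fresh} stating that programming page $q=\mathtt{write\_ptr}$ of the open block $b$ with reverse mapping $(a,p)$, owner labels equal to $a$'s, and a tag, then mapping $(a,p)\mapsto(b,q)$ and advancing the pointer, preserves the invariant whenever $(a,p)$ is currently unmapped. The key facts are these. The target page is erased with no metadata (Inv$_{21}$), so no other logical page points to it (Inv$_{22}$), which preserves injectivity and the forward/reverse clauses. The block is not free (Inv$_{20}$), which preserves Inv$_5$, Inv$_6$, and Inv$_{17}$. The block's owner matches $a$'s (Inv$_{20}$ together with the allocation step), which gives Inv$_7$ and Inv$_{18}$. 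If the frontier is full, we first prove a lemma \emph{open\_from\_free} for popping the head of the free list. Using Inv$_{11}$ and Inv$_{24}$, the popped block no longer appears in either representation, and Inv$_6$ supplies the erased-page facts that Inv$_{21}$ requires.

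GC and wear leveling share one proof, since they differ only in the choice of victim, so we parameterize it over any victim satisfying the checked precondition. We proceed by induction over the list of live pages relocated from the victim $v$. The loop invariant is $\mathtt{ftl\_invariant}$ with the relaxation that $v$ is exempt from nothing extra. Each relocation is an invalidate followed by \emph{program\_fresh} with the same reverse mapping and labels, so it reuses the host-write lemmas. After the loop, $v$ has no live or mapped page, and we prove an \emph{erase\_block} lemma. It sets every page to $\mathtt{PS\_Empty}$ with no role or metadata, clears the block labels, increments $\mathtt{wear\_count}$, and pushes $v$ onto the list and sets its flag. For this lemma, Inv$_5$ follows because nothing maps into $v$, Inv$_{11}$ follows because $v$ was not free (it held a stale or valid page, so by Inv$_6$ it was not free), and Inv$_8$ follows from the bounds check on the victim.

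We expect the main obstacle to be the relocation loop. The destination allocation may itself pop a fresh block mid-loop, and it must never select $v$ or a block of a different owner. The invariant must also hold at every intermediate state, even though the loop modifies the list of pages being iterated over. Our first approach is to snapshot the live pages of $v$ before the loop and prove a strengthened loop invariant: $v$ is neither free nor open, and every not-yet-relocated snapshot page remains live and mapped. This ensures each iteration meets the preconditions of \emph{program\_fresh}. The remaining clauses should then discharge by mostly mechanical function-update reasoning, which we plan to automate with a tactic that case-splits on equality of block and page indices.
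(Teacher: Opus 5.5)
Your proposal is correct and follows the paper's route: case analysis on $\sigma$, one preservation lemma per operation that re-establishes every clause, the write as the largest case, and the maintenance cases driven by the free-list and ownership clauses. Factoring the write and the relocation loop through shared invariant-preserving sub-steps (invalidate, \emph{program\_fresh}, \emph{open\_from\_free}, \emph{erase\_block}) is a finer-grained packaging of the same argument, but your sketch of \emph{open\_from\_free} omits one step: whichever step retires the owner's full block must unmark it, and re-establishing Inv$_{23}$ and Inv$_{10}$ for that block needs $\mathtt{WF}_0$, Inv$_{25}$, and Inv$_0$ to show it still holds a written page, either live (hence mapped) or stale. This is the operation-level counterpart of the ``previously open block has its first page written'' condition in the $\mathtt{PrimProgram}$ bundle.
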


\begin{proof}
\rrev{By case analysis on $\sigma$, one lemma per operation, each re-establishing all 27 clauses. The write case is the largest, because an out-of-place write invalidates the old page, programs a new one, repairs both directions of the mapping, and advances the write frontier. The maintenance cases lean hardest on Inv$_{11}$, Inv$_{15}$, Inv$_{17}$, and Inv$_{24}$.}
\end{proof}

\begin{theorem}[Stepwise Simulation]
\label{thm:simulation}
Let $\mathtt{abs\_step}(h, \sigma)$ apply $\sigma$ to the abstract device, where \rrev{reads, tag writes, and the two maintenance operations leave} $h$ unchanged. For every $h$, $s$, $\sigma$, and $s'$, if $\mathcal{R}(h,s)$, $\mathtt{ftl\_invariant}(s)$, and $\mathtt{step}(s, \sigma) = \mathtt{Some}\;s'$, then $\mathcal{R}(\mathtt{abs\_step}(h, \sigma), s')$.
\end{theorem}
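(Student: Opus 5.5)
The plan is a case analysis on $\sigma$ over the six operations, with one lemma per operation, following the structure used for Theorem~\ref{thm:preservation}. In each case we fix $(a',p',d')$ with $\mathtt{abs\_step}(h,\sigma)(a',p') = d'$ and must exhibit $(b,q)$ such that $\mathtt{l2p\_map}(s',a',p') = \mathtt{Some}\,(b,q)$ and $\mathtt{page\_state}(s',b,q) = \mathtt{PS\_Valid}(d')$. The hypothesis $\mathtt{ftl\_invariant}(s)$ serves as the source of frame facts. We do not need the invariant of $s'$, although Theorem~\ref{thm:preservation} supplies it if useful.

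\textbf{Read.} $\mathtt{COpRead}$ is the identity on the state, so $s'=s$ and $h$ is unchanged. The hypothesis $\mathcal{R}(h,s)$ gives the result directly.

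\textbf{SetTag.} $\mathtt{COpSetTag}$ changes only $\mathtt{page\_meta}$ of one page. $\mathtt{l2p\_map}$ and $\mathtt{page\_state}$ are untouched, so $\mathcal{R}$ transfers pointwise.

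\textbf{Invalidate.} We split on whether $(a',p') = (a,p)$. If the pair is equal, the abstract entry becomes undefined, so there is nothing to show. If it differs, the old witness $(b,q)$ for $(a',p')$ still has its L2P entry. Its page state is unchanged unless $(b,q)$ is exactly the physical page mapped by $(a,p)$. Injectivity of the map (Inv$_2$) rules that out, because two distinct logical pages cannot share a physical page.

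\textbf{Write.} We split again on $(a',p')$. If $(a',p')=(a,p)$, the new entry points to the freshly programmed page at the owner's write frontier, and its state is set to $\mathtt{PS\_Valid}(d)$; this follows by unfolding the definition. If $(a',p')\neq(a,p)$, the old witness $(b,q)$ keeps its L2P entry. Its state could change in only two ways. It could be the old page of $(a,p)$ that gets staled, which Inv$_2$ excludes. It could be the new destination page, which is excluded because Inv$_{21}$ says that page is still $\mathtt{PS\_Empty}$, while Inv$_{22}$ says a mapped page is live. If the destination comes from a newly allocated free block instead, Inv$_5$ and Inv$_6$ give the same disjointness.

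\textbf{GC and wear leveling.} These two cases are the main obstacle. The abstract state is unchanged, but every live page of the victim block is moved and the block is erased. The plan is to prove one auxiliary relocation lemma and apply it to both operations, since they differ only in victim choice. The lemma is stated as an induction over the list of victim pages processed by the relocation loop. It maintains the following property of the intermediate state: for every logical page, the page it maps to is live and holds the same data value as before the relocation began.

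For a relocated page, the reverse mapping gives its logical page, and Inv$_3$/Inv$_4$ ensure that we remap exactly the logical page that pointed there. The destination page is erased (Inv$_{21}$/Inv$_6$), so programming it does not overwrite a live page. Pages outside the victim block are untouched. The final erase only affects victim pages, and by then no L2P entry points into the victim. To see this, note that every live victim page has been remapped, and by Inv$_{22}$ no entry ever pointed at a stale page. So after the erase, every mapped page lies outside the victim.

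The delicate part is carrying enough of the invariant through the loop's intermediate states. These states are not reachable by $\mathtt{step}$, so Theorem~\ref{thm:preservation} does not apply to them. I would first try a weakened loop invariant with three parts: injectivity, the L2P map pointing only at live pages, and the remaining unprocessed victim pages being consistent with their reverse mappings. I would then prove this weakened invariant is preserved by each single relocation.
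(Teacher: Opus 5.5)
Your proposal is correct and follows the paper's proof. Both argue by case analysis on $\sigma$. The host-visible cases follow from the write and invalidate semantics, with injectivity (Inv$_2$) as the frame condition. The maintenance cases rest on Inv$_2$, reverse-matches-forward Inv$_4$ (with Inv$_3$), and Inv$_{22}$, exactly the clauses the paper names.

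One thing to tighten is your weakened GC loop invariant. As listed, it does not re-establish two facts about every relocation after the first: that the destination page is erased, and that it lies outside the victim block. Carry these allocator facts through the loop, using Inv$_{21}$, Inv$_6$, Inv$_5$, and whatever reclaimability condition keeps the victim from being a write target. Your unmapped-destination step needs the first fact, and your final-erase argument needs the second.
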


\begin{proof}
By case analysis on $\sigma$. Host-visible cases follow from matching write and invalidate semantics. Maintenance cases follow chiefly from Inv$_2$, Inv$_4$, and Inv$_{22}$.
\end{proof}

\begin{theorem}[\rrev{Refinement and Preservation Under Traces}]
\label{thm:refinement}
Let $\mathtt{abs\_exec}(h, \mathit{ops})$ apply $\mathit{ops}$ to $h$ via the abstract semantics, and let $\mathtt{exec}(s, \mathit{ops})$ fold $\mathtt{step}$ over $\mathit{ops}$, returning $\mathtt{None}$ if any step is rejected. %
For every trace $\mathit{ops}$, \rrev{if $\mathtt{ftl\_invariant}(s)$, $\mathcal{R}(h, s)$, and $\mathtt{exec}(s, \mathit{ops}) = \mathtt{Some}\;s'$, then} $\mathcal{R}(\mathtt{abs\_exec}(h, \mathit{ops}), s')$ and $\mathtt{ftl\_invariant}(s')$.
\end{theorem}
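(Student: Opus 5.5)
The proof is a structural induction on the trace $\mathit{ops}$, generalizing over the starting pair $(h, s)$. The induction hypothesis is therefore stated for arbitrary $h$ and $s$ satisfying $\mathtt{ftl\_invariant}(s)$ and $\mathcal{R}(h,s)$. This generalization is necessary because each step changes both states. We assume $\mathtt{exec}$ and $\mathtt{abs\_exec}$ are defined as left folds, so $\mathtt{exec}(s, \sigma :: \mathit{ops}')$ first computes $\mathtt{step}(s,\sigma)$ and then continues from the result. Likewise, $\mathtt{abs\_exec}(h, \sigma :: \mathit{ops}') = \mathtt{abs\_exec}(\mathtt{abs\_step}(h,\sigma), \mathit{ops}')$. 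If the fold is instead defined from the right, we first prove a small unfolding lemma that puts it in this shape.

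\emph{Base case.} For the empty trace, $\mathtt{exec}(s, [\,]) = \mathtt{Some}\;s$, so $s' = s$. Also $\mathtt{abs\_exec}(h,[\,]) = h$. Both conclusions are then exactly the hypotheses.

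\emph{Inductive case.} The trace is $\sigma :: \mathit{ops}'$. We case split on $\mathtt{step}(s,\sigma)$. If it is $\mathtt{None}$, then $\mathtt{exec}$ returns $\mathtt{None}$, which contradicts $\mathtt{exec}(s,\mathit{ops}) = \mathtt{Some}\;s'$, so the case is vacuous. Otherwise $\mathtt{step}(s,\sigma) = \mathtt{Some}\;s_1$ and $\mathtt{exec}(s_1, \mathit{ops}') = \mathtt{Some}\;s'$. Theorem~\ref{thm:preservation} gives $\mathtt{ftl\_invariant}(s_1)$. Theorem~\ref{thm:simulation} gives $\mathcal{R}(\mathtt{abs\_step}(h,\sigma), s_1)$; it needs the invariant on the pre-state $s$, which we have. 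We then apply the induction hypothesis to the pair $(\mathtt{abs\_step}(h,\sigma), s_1)$. This yields $\mathcal{R}(\mathtt{abs\_exec}(\mathtt{abs\_step}(h,\sigma), \mathit{ops}'), s')$ and $\mathtt{ftl\_invariant}(s')$, and unfolding $\mathtt{abs\_exec}$ shows these are the required conclusions.

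The main obstacle is bookkeeping rather than mathematics. The difficult content sits in Theorems~\ref{thm:preservation} and~\ref{thm:simulation}, which are assumed. The steps that need care are the following.
\begin{itemize}
\item The induction must be generalized correctly over $h$ and $s$; without this the induction hypothesis is too weak to apply at $s_1$.
\item Inverting the $\mathtt{option}$-valued fold must correctly extract the intermediate state $s_1$. In \coq this means destructing $\mathtt{step}(s,\sigma)$ with an equation remembered, then discharging the $\mathtt{None}$ branch by discrimination.
\item The invariant must be carried alongside $\mathcal{R}$ in the same induction. Theorem~\ref{thm:simulation} requires the invariant at every intermediate state, so proving refinement separately from preservation would not go through.
\end{itemize}
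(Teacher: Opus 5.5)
Your proposal is correct and matches the paper's proof: an induction on $\mathit{ops}$, generalized over the starting pair, that applies Theorem~\ref{thm:preservation} and Theorem~\ref{thm:simulation} at each step before invoking the induction hypothesis. One small overstatement is that refinement could also be proved as a separate lemma, because an induction that assumes $\mathtt{ftl\_invariant}$ only at the start can re-establish it at $s_1$ through Theorem~\ref{thm:preservation} alone, so carrying both conclusions in one induction is a convenience rather than a necessity.
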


\begin{proof}
By induction on $\mathit{ops}$, applying Theorems~\ref{thm:preservation} and \ref{thm:simulation}.
\end{proof}

\rrev{Together, the three theorems give an end-to-end guarantee. We prove separately that the initial state $q_0$ satisfies every clause. Every state a valid execution reaches from $q_0$ therefore satisfies every clause of the global invariant, and the execution refines the abstract device's run of the same trace.}

\subsubsection{\srev{Command Level}}
\label{sec:model:commandlevel}

\srev{\arbiter fixes the command sequence for each of its six operations with
$\mathtt{op\_primitives} : \mathtt{FTLState} \rightarrow \mathtt{COp}
\rightarrow (\mathtt{list\;FlashPrimitive})?$. We prove two properties of this
expansion. (i) \emph{Agreement} shows that, from a state satisfying the global
invariant, an operation and its command sequence produce the same state,
connecting the operation-level proof in Theorem~\ref{thm:preservation} to the
command level. (ii) \emph{Realizability} shows that every program in the
sequence targets an erased page when it executes, ensuring that the sequence
can execute on NAND flash. Agreement alone does not provide this guarantee.}

\srev{These guarantees apply when the command sequence comes from
$\mathtt{op\_primitives}$. A faulty FTL can instead issue commands directly,
bypassing the operation already proved correct and its expansion. \arbiter therefore gives
each command a \emph{precondition bundle}, which is a set of decidable checks on the
command and the relevant FTL state, evaluated before the command executes
(Table~\ref{tab:bundles}).\drev{\footnote{The bundles apply to commands issued individually, not to intermediate commands in an operation's expansion. The final command restores the invariant, and
\S\ref{sec:security-model:scope} assumes that each operation runs to completion.}}
% Checking the global invariant is not sufficient for
% this purpose. It requires checking all addresses, pages, and blocks, and it
% describes the resulting state rather than whether a command is safe to issue.
%
% Theorem~\ref{thm:preconditions} proves that the precondition bundles are sufficient to preserve the required properties.
Theorem~\ref{thm:preconditions} proves that a command whose bundle holds \vrev{preserves the global invariant}. \S\ref{sec:attacks} shows that
each failure we formalize results from issuing a command without satisfying its
precondition bundle.}

\begin{theorem}[Precondition Soundness]
\label{thm:preconditions}
For each command $\pi$ with precondition bundle $\mathtt{pre}_\pi$ and
application function $\mathtt{apply}_\pi$,
\begin{align*}
  & \mathtt{ftl\_invariant}(s) \;\wedge\; \mathtt{pre}_\pi(s, \mathit{args}) \\
  & \quad \;\Rightarrow\; \mathtt{ftl\_invariant}(\mathtt{apply}_\pi(s, \mathit{args})).
\end{align*}
\end{theorem}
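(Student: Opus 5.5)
The proof proceeds by case analysis on the ten commands $\pi$, with one lemma per command. Each lemma shows that the 27 clauses $\mathtt{Inv}_0,\ldots,\mathtt{Inv}_{26}$ and the two well-formedness conjuncts survive $\mathtt{apply}_\pi$ whenever $\mathtt{pre}_\pi$ holds. Three commands are immediate. The two barriers leave the state unchanged, so the invariant transfers by rewriting. $\mathtt{PrimRead}$ is also the identity on state. $\mathtt{WF}_0$ and $\mathtt{WF}_1$ hold of every state, as the footnote to Definition~\ref{def:ftl-invariant} notes, so they are discharged uniformly. For the remaining seven commands, the first step is a generic frame lemma for each clause. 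It states which fields each clause reads. For instance, $\mathtt{Inv}_8$, $\mathtt{Inv}_{11}$ and $\mathtt{Inv}_{24}$ read only the free list and $\mathtt{free\_block}$, while $\mathtt{Inv}_{19}$ reads only $\mathtt{region\_table}$. A command that leaves those fields untouched therefore preserves the clause for free. This removes most of the $7 \times 27$ obligations before any real reasoning.

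For the obligations that remain, I would read each bundle as exactly the local facts the touched clauses need, and argue pointwise with a case split on whether the queried index equals the modified one.
\begin{itemize}
\item \textbf{$\mathtt{PrimMapAddr}$ and $\mathtt{PrimRemap}$.} The bundle should require three things. The target page must be live, in bounds, and unmapped, which gives $\mathtt{Inv}_2$ and $\mathtt{Inv}_{22}$. Its reverse mapping must name the logical page, which gives $\mathtt{Inv}_3$ and $\mathtt{Inv}_4$. Its block tenant and namespace must equal the address labels, which gives $\mathtt{Inv}_7$ and $\mathtt{Inv}_{18}$. For $\mathtt{PrimRemap}$, the old target must additionally be invalidated, or $\mathtt{Inv}_0$ fails for it.
\item \textbf{$\mathtt{PrimProgram}$.} The bundle requires four things: an erased page at the open block's write frontier, an integrity tag present, a reverse mapping consistent with the map, and the owner matching. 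These yield $\mathtt{Inv}_9$, $\mathtt{Inv}_{13}$, $\mathtt{Inv}_{14}$, $\mathtt{Inv}_{21}$ and $\mathtt{Inv}_{25}$, after advancing $\mathtt{write\_ptr}$ and using $\mathtt{Inv}_{20}$.
\item \textbf{$\mathtt{PrimInvalidate}$.} This command removes the forward mapping and clears the role together. $\mathtt{Inv}_{22}$ and $\mathtt{Inv}_{14}$ are therefore maintained, and $\mathtt{Inv}_{10}$ survives because the block becomes a garbage block.
\item \textbf{$\mathtt{PrimSetTag}$.} The bundle requires a present tag, which preserves $\mathtt{Inv}_9$.
\end{itemize}

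The main obstacle is the erase case, and to a lesser extent $\mathtt{PrimFreePush}$, because these act on an entire block and interact with global clauses about the free list. For $\mathtt{PrimErase}$, the bundle must require four things: no page of the block is mapped, no page is live, the block is neither open nor already free, and $b < \mathtt{total\_blocks}$. Under these conditions, pushing $b$ preserves $\mathtt{Inv}_{11}$ (no duplicate) and $\mathtt{Inv}_8$. Clearing every page and its metadata gives $\mathtt{Inv}_6$ and $\mathtt{Inv}_{16}$. Clearing the ownership labels gives $\mathtt{Inv}_{17}$, and updating $\mathtt{free\_block}$ together with the list gives $\mathtt{Inv}_{24}$. $\mathtt{Inv}_5$ follows because no mapped page lies in $b$. $\mathtt{Inv}_0$ and $\mathtt{Inv}_3$ hold vacuously for the erased pages. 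The delicate part is the quantifier manipulation. The map is unchanged, so each entry $(a,p)\mapsto(b',q)$ needs $b'\neq b$, and this must be derived from the bundle's ``no mapped page in $b$'' check. That check has to be stated decidably, either by scanning $\mathtt{l2p\_map}$ over the bounded address space or by combining ``no live page'' with $\mathtt{Inv}_{22}$. I would try the latter first, because it reduces to a finite check over the block's pages.

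$\mathtt{PrimFreePush}$ is the erase case minus the page clearing. Its bundle therefore has to demand everything directly: every page is already empty with no metadata, there are no owners, and the block is not already in the list. The proof then reuses the same list lemmas. Finally, I would package the ten lemmas into the single statement by dependent matching on $\pi$.
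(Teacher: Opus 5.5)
Your skeleton is fine: one lemma per command, rewriting for the three identity commands, and frame lemmas for clauses over untouched fields. Your $\mathtt{PrimErase}$ and $\mathtt{PrimInvalidate}$ cases are essentially right.

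The gap is that you design your own bundles instead of proving the statement for the ones fixed in Table~\ref{tab:bundles}. For the data-path commands, you design them as if each command moved between invariant states the way it does \emph{inside} an operation's expansion. There the invariant is deliberately broken between commands, and the paper's bundles cover only commands issued individually. In a state satisfying the invariant, Inv$_0$ and Inv$_{22}$ make ``live'' and ``targeted by some L2P entry'' the same set of pages. Two consequences follow.
\begin{itemize}
\item Your $\mathtt{PrimMapAddr}$ bundle (target live and unmapped) can never hold.
\item Your $\mathtt{PrimProgram}$ case fails. An erased destination is targeted by no entry (Inv$_{22}$), and $\mathtt{PrimProgram}$ does not touch $\mathtt{l2p\_map}$. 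After the program, the page is live but targeted by nothing, and its stamped reverse mapping names a logical page that does not map back. This falsifies Inv$_0$ and Inv$_4$, which are exactly the clauses missing from your list. If you instead read ``consistent with the map'' as ``the forward map already points here'', that contradicts ``erased'' under Inv$_{22}$. The case is then vacuous, and your write-frontier argument never applies.
\end{itemize}

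The paper's bundles are built around this correspondence.
\begin{itemize}
\item For $\mathtt{PrimMapAddr}$ and $\mathtt{PrimRemap}$, the bundle is only ``target live and mapped by no other logical page''. Combined with Inv$_0$, it admits no genuinely new mapping: a live target is already targeted, and only the entry being written can be the one targeting it. Reverse-mapping and ownership agreement are therefore not premises you may add. You derive them from Inv$_3$ and Inv$_7$ once you know the entry already points at the target.
\item For $\mathtt{PrimProgram}$, the bundle requires the stamped reverse mapping to name a logical page that the forward map already sends to the destination. That is what keeps Inv$_0$, Inv$_3$, and Inv$_4$. It also allows a program that opens a new block, provided the owner's previously open block has its first page written. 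Once closed, that block is then still in use or garbage, as Inv$_{10}$ demands. Your proposal neither handles block opening nor checks Inv$_{10}$.
\item $\mathtt{PrimSetTag}$ has no bundle. Inv$_9$, and likewise Inv$_6$ and Inv$_{21}$, must come from its semantics (it installs a tag and acts only on a live page), not from an assumed ``tag present'' premise.
\item Keep ``in bounds'' and ``not open'' in $\mathtt{PrimFreePush}$. An open, still-empty, unowned block satisfies the conditions you list, and pushing it falsifies Inv$_{20}$.
\end{itemize}
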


\begin{table}[h]
\centering
%
% \caption{Primary NAND flash instructions that manage the NAND flash device state and micro-operations that manage the metadata.}
%
% \caption{\srev{The ten commands and their precondition bundles. A command whose bundle holds preserves the global invariant (Theorem~\ref{thm:preconditions}).}}
\caption{\srev{The ten commands and their precondition bundles}~\drev{(commands that share a bundle are grouped). A command issued on its own preserves the invariant if its bundle holds (Theorem~\ref{thm:preconditions}).}}
\label{tab:bundles}
\footnotesize
\setlength{\tabcolsep}{4pt}
\renewcommand{\arraystretch}{1.05}
\begin{tabular}{@{}>{\raggedright\arraybackslash}p{0.30\columnwidth}>{\raggedright\arraybackslash}p{0.66\columnwidth}@{}}
\toprule
\textbf{\vrev{Command}} & \textbf{Precondition bundle} \\
\midrule
$\mathtt{PrimRead}$ & None. The \vrev{command} changes no state. %
An FTL that must block raw reads (\rrev{FS\#5}) adds an access-control check here. \\
$\mathtt{PrimProgram}$ & %
\fix{\rrev{Destination block and page in bounds.} The reverse mapping it writes names a logical page that the forward map sends to this page, and it carries an integrity tag. Every page below the destination in the block is written, and every page above it is erased and unstamped. If the program opens a block, the owner's previously open block has its first page written. Issued with no tag, it marks a page live with none.} \\
$\mathtt{PrimErase}$ & Target block in bounds, not in the free pool, not open for writing, and mapped by no live page. \\
\fix{$\mathtt{PrimFreePush}$} & %
\fix{Returns a block to the free pool. Target block in bounds, \rrev{erased and unstamped}, owned by no one, mapped by no live page, not open, and not already free. Issued on a free block, it lists the block twice.} \\
$\mathtt{PrimMapAddr}$, $\mathtt{PrimRemap}$ & Target page live and mapped by no other logical page. \\
$\mathtt{PrimInvalidate}$ & Target page live. \\
$\mathtt{PrimSetTag}$, barriers & None. $\mathtt{PrimSetTag}$ acts only on a live page, and the barriers change no state. \\
\bottomrule
\end{tabular}
\end{table}

\section{FTL Verification Workflow}
\label{sec:verify}

\arbiter verifies an FTL in one of two modes. (i) \emph{Direct verification}
(\S\ref{sec:verify:direct}) proves the FTL's operations correct and secure end-to-end
against the global invariant. %
(ii) \emph{Modular
verification} (\S\ref{sec:verify:modular}) applies to a design that keeps the
invariant. \vrev{The designer describes their FTL through the Rocq module type
$\mathtt{CUSTOM\_FTL}$ and discharges five hypotheses, rather than re-proving
the 27 clauses over their own state.} The designer supplies their FTL's state
and \vrev{its four operations as functions over that state}. The designer supplies a refinement map
that projects that state onto \arbiter's model. The designer
discharges the five hypotheses \vrev{stated} in Figure~\ref{fig:hypotheses} over the projected
state; how much proof this takes depends on what the design changed.
The designer then packages the five proofs into one
certificate theorem. \vrev{Rocq's kernel} and $\mathtt{coqchk}$ re-check
every theorem the certificate depends on.

\subsection{Direct Verification}
\label{sec:verify:direct}

In direct verification, the designer writes their FTL as a Rocq model over
$\mathtt{FTLState}$ and proves, for each of its operations, that the operation
preserves $\mathtt{ftl\_invariant}$ (Definition~\ref{def:ftl-invariant}) %
% and the refinement relation $R$ (Definition~\ref{def:refinement})
and the refinement relation $\srev{\mathcal{R}}$ (Definition~\ref{def:refinement}). %
% This is the proof \arbiter itself discharges once for its six operations (Theorems~\ref{thm:preservation}-\ref{thm:refinement}). A designer reuses these proofs whenever the design \emph{only} changes the use of invariants, e.g., deduplication, which removes injectivity (Inv$_2$) and rewrites the reverse-mapping clause (Inv$_3$).
% This is the proof \arbiter discharges once for its own six operations
CertiFlash performs these proofs once for its six operations
(Theorems~\ref{thm:preservation}--\ref{thm:refinement}). \srev{A design requires direct verification when it changes the invariant, because
the new invariant cannot reuse \arbiter's invariant-preservation proofs. For
example, deduplication maps identical logical pages to a single physical page
to avoid storing duplicate data. This removes injectivity (Inv$_2$) and
changes the reverse-mapping clause (Inv$_3$), so \arbiter's theorems no longer
apply. Deduplication therefore proves its operations against the modified
invariant (case~(d) of \S\ref{sec:evaluation:dftl}).
}

\subsection{Modular Verification}
\label{sec:verify:modular}

\begin{figure}[h]
\centering
{\footnotesize
Let $I(s) \triangleq \mathtt{ftl\_invariant}(\mathtt{user\_to\_model}(s))$, $\mathtt{ready} \triangleq \mathtt{user\_write\_ready}$, and $\mathtt{adm}(s, a, p) \triangleq a < \mathtt{addr\_space} \wedge p < \mathtt{pages\_per\_block} \wedge a$ is labelled with a tenant and a namespace in $s$.
\begin{align*}
\mathbf{Hyp1} \;\triangleq\;& \forall s, a, p, d.\; \mathtt{adm}(s, a, p) \wedge \mathtt{ready}(s, a, p) \wedge I(s) \\[-1pt]
& \quad \Rightarrow I(\mathtt{user\_write}(s, a, p, d)) \\
\mathbf{Hyp2} \;\triangleq\;& \forall s.\; I(s) \;\Rightarrow\; I(\mathtt{user\_gc}(s)) \\
\mathbf{Hyp3} \;\triangleq\;& \forall s.\; I(s) \;\Rightarrow\; I(\mathtt{user\_wear\_level}(s)) \\
\mathbf{Hyp4} \;\triangleq\;& \forall s, a, p, d.\; \mathtt{adm}(s, a, p) \wedge \mathtt{ready}(s, a, p) \wedge I(s) \\[-1pt]
& \quad \Rightarrow \mathtt{user\_read}(\mathtt{user\_write}(s, a, p, d), a, p) = \mathtt{Some}\;d \\
\mathbf{Hyp5} \;\triangleq\;& \forall s, a_1, a_2, p_1, p_2, d_1, d_2.\\[-1pt]
& \quad \arev{\text{let } s' = \mathtt{user\_write}(s, a_1, p_1, d_1) \text{ in}}\\[-1pt]
& \quad \arev{\text{let } s'' = \mathtt{user\_write}(s', a_2, p_2, d_2) \text{ in}}\\[-1pt]
& \quad (a_1 \neq a_2 \vee p_1 \neq p_2) \wedge \mathtt{adm}(s, a_1, p_1) \wedge \mathtt{adm}(s', a_2, p_2)\\[-1pt]
& \quad \wedge\; I(s) \wedge \mathtt{ready}(s, a_1, p_1) \wedge \mathtt{ready}(s', a_2, p_2) \\[-1pt]
& \quad \Rightarrow \mathtt{user\_read}(s'', a_1, p_1) = \mathtt{Some}\;d_1
\end{align*}}
\caption{The five hypotheses an FTL designer's $\mathtt{CUSTOM\_FTL}$ module must discharge. Hyp1--Hyp3: per-operation invariant preservation. Hyp4: read-after-write correctness. Hyp5: cross-address non-interference.}
\label{fig:hypotheses}
\end{figure}

\noindent\textbf{Step~1: Describing the FTL.}
The designer supplies a state type $\mathtt{user\_state}$, which may be any Rocq
type holding the FTL's runtime structures, four operations over it
($\mathtt{user\_read}$, $\mathtt{user\_write}$, $\mathtt{user\_gc}$,
$\mathtt{user\_wear\_level}$), and a readiness predicate
$\mathtt{user\_write\_ready}$ recording when a write is well-defined. The
interface covers the four operations a new design is most likely to
reimplement. Invalidation and tag setting are fixed metadata bookkeeping
rather than per-design policy, so a custom FTL keeps \arbiter's semantics for
them and Theorem~\ref{thm:preservation} covers them unchanged. 
% \drev{The interface
% cannot check this. It supplies $\mathtt{user\_to\_model}$ and no inverse, so a design
% whose state derives from the model, such as DFTL's cached mapping table, must
% reimplement the two operations and show its derived state stays consistent with them.}

%
% Every operation, \arbiter's or the designer's, is a sequence of the commands the device executes. Three of them are the NAND-flash hardware commands~\cite{onfi}: read, program, and erase. Program also writes the page's out-of-band area (\S\ref{sec:prelim:ftl}), which the FTL supplies; a correct FTL keeps that area consistent with the mapping, and a modified one can write anything, e.g., a page with no integrity tag. The remaining commands are \emph{metadata updates} on FTL-resident tables that never change the data a page holds: install or redirect a mapping, mark a page stale, write its integrity tag, return a block to the free pool, and bracket a compound operation. Table~\ref{tab:bundles} lists the details for all of them.

% The function
% $\mathtt{op\_primitives}$ computes this sequence for each of \arbiter's
% operations, and two machine-checked properties tie the two descriptions
% together: on any state satisfying the global invariant, running an operation
% and running its command sequence reach the same state (\emph{agreement}), and
% every program sequence issues targets a page that is erased at that
% moment (\emph{realizability}). A designer's operation is described the same
% way, so the results below apply to it directly.

\noindent\textbf{Step~2: Projecting onto the model.}
The designer supplies $\mathtt{user\_to\_model} : \mathtt{user\_state}
\rightarrow \mathtt{FTLState}$, which projects the FTL's state onto the core
model. The inherited invariant is then
$I(s) \triangleq \mathtt{ftl\_invariant}(\mathtt{user\_to\_model}(s))$, and
every hypothesis in Step~3 is stated over $I$. A design that adds fields
\arbiter does not know about %
(e.g., DFTL's cached mapping table, \srev{case~(b) of \S\ref{sec:evaluation:dftl}})
projects them away; a design that only re-implements how existing fields are
computed projects onto the same fields.

\noindent\textbf{Step~3: Discharging the five hypotheses.}
The five hypotheses of Figure~\ref{fig:hypotheses} are properties \arbiter's own
model satisfies, so an FTL that satisfies them over
its own operations obtains invariant preservation over its own operations. 
% \drev{They do not give refinement: Hyp$_1$--Hyp$_3$ constrain a single state, so a garbage collection that discarded live data would satisfy Hyp$_2$. Hyp$_4$ and Hyp$_5$ constrain writes and reads only, so a design that changes what maintenance does must argue data preservation for itself.}
Hyp$_1$--Hyp$_3$ state that
write, garbage collection, and wear leveling preserve $I$.\footnote{Read
returns the state unchanged, so it has no post-state to constrain, and the
interface accordingly gives $\mathtt{user\_read}$ no result state.}
Hyp$_4$ and Hyp$_5$ are the two data-correctness claims: a read returns the
data last written to its address, and a write to one address does not disturb
another. How the designer discharges them depends on what the design changed.
\srev{A design changes the framework in one of four ways, and
\S\ref{sec:evaluation:dftl} evaluates one of each: (a)~\emph{new state},
(b)~\emph{new implementation}, (c)~\emph{\drev{new local invariant}}, and (d)~\emph{new
invariant}. Cases (a)--(c) keep \arbiter's global invariant and take the
modular route. Case~(d) changes it, so it takes direct verification
(\S\ref{sec:verify:direct}).}

\srev{A design can also add new operations. $\mathtt{CUSTOM\_FTL}$ does not
require a command sequence for them: its four operations are functions over
$\mathtt{user\_state}$, and the five hypotheses establish their correctness.
When a new operation changes only the design's own state, as the \drev{counter-increment}
operation in case~(a), \arbiter's reuse library carries these proofs to the
extended FTL. When it changes \arbiter's state, the designer instead expresses
the operation as a command sequence (Step~1) and proves that each command
satisfies its precondition bundle (\S\ref{sec:model:commandlevel},
Table~\ref{tab:bundles}). Theorem~\ref{thm:preconditions} then preserves
$\mathtt{ftl\_invariant}$ across the sequence, without proving its 27 clauses
again. \arbiter supplies the per-command results needed for this proof, but
does not generate the command sequence.}

\noindent
\srev{The same argument applies when a faulty FTL issues commands directly.
Checking each command's precondition bundle preserves
$\mathtt{ftl\_invariant}$ after every command.}
% Each bundle
% has an executable boolean form in the artifact.

% \emph{(a)~New state, no new behavior.} The design adds fields that no base
% operation reads or writes, e.g., a per-block read-disturb
% counter~\cite{cai2017error}. A lifting functor carries the base FTL's five
% proofs across such an extension. The designer supplies a projection onto the
% base state and one equation per operation stating that the operation commutes
% with it.
% % each hypothesis then closes in four to nine proof lines, and in
% % case~(a) of \S\ref{sec:evaluation} the equations are definitions with no proof
% % text.

% \emph{(b)~Re-implemented operation, same invariant.} The design changes how an
% existing operation computes its result, e.g., DFTL~\cite{gupta2009dftl}
% replaces the in-DRAM L2P table with a demand-paged cache. The designer proves
% Hyp$_1$--Hyp$_5$ directly, typically by stating one local invariant over the
% new fields (DFTL's $\mathtt{dftl\_ok}$, case~(b)) and showing that the
% re-implemented operation preserves both it and $I$. Because the refinement
% map reduces every clause of $I$ to \arbiter's own preservation theorem. A local invariant
% is not free: \arbiter's unmodified maintenance operations need not preserve
% it, which is how case~(c) found a fault in MegIS~\cite{Ghiasi2024MegIS}.

\noindent\textbf{Step~4: Collecting the certificate.}
With the five hypotheses discharged, the designer \srev{instantiates one
module}, $\mathtt{Validator}(F : \mathtt{CUSTOM\_FTL})$, which packages the five
proofs into a single certificate theorem,
$\mathtt{custom\_ftl\_security\_suite}$. \srev{$\mathtt{Validator}$} proves
nothing itself. It only assembles the proofs the designer supplied, so
\srev{instantiating} it cannot fail and adds no proof obligation.

\noindent\textbf{Step~5: Checking everything.}
The certificate is a Rocq theorem, so Rocq's kernel checks it, and
$\mathtt{coqchk}$ re-checks the compiled modules against the kernel and reports
every axiom they depend on. \arbiter's own development reports none: no
admitted lemmas, no classical reasoning, and no axioms. A designer's certificate inherits that status
exactly when their own proofs add no axioms, which the same check confirms.
We evaluate the verification effort of these cases in \S\ref{sec:evaluation} following this workflow.
\section{Evaluation}
\label{sec:evaluation}

\subsection{Evaluation Methodology}

% \arbiter proves its global properties once, and every FTL that implements $\mathtt{CUSTOM\_FTL}$ inherits them. A design that keeps the global invariant discharges the five $\mathtt{CUSTOM\_FTL}$ hypotheses about its own operations and inherits the invariant-preservation and refinement proofs, so it restates no global theorem. A design that changes the global invariant cannot use the 
% interface and verifies its model directly.

% \textbf{System Configuration.} We build and check every proof with \coq~8.20.1 on an Apple M5 Pro laptop. A clean serial build of the development compiles in \rrev{85} seconds, and \texttt{coqchk} re-checks each compiled module against the kernel and reports no axioms, no admitted lemmas, and no classical reasoning.

 We build and check every proof with \coq~8.20.1 on an Apple M5 Pro laptop. \drev{A clean build takes 93 seconds.} The development is 28 \coq files and \numupd{21{,}884} lines, with \rrev{432} \texttt{Definition}s and 778 \texttt{Lemma}/\texttt{Theorem}/\texttt{Corollary} statements, every one closed with \texttt{Qed}. The base framework that every design inherits is \drev{16{,}489} of those lines. Table~\ref{tab:reuse-cost} divides the base framework by component, from the concrete model and operations, through the global invariant and its preservation, to the checker and the failure surfaces. The four case studies (\S\ref{sec:evaluation:dftl}) build on this base: each discharges the five \texttt{CUSTOM\_FTL} hypotheses to inherit these proofs rather than re-proving the invariant over its own state.

\begin{table}[h]
\centering
%
% \caption{Division of proof labor: lines inherited from the framework versus lines written per FTL.}
\caption{Lines the framework proves once, inherited by every FTL, by component.}
\label{tab:reuse-cost}
\footnotesize
\setlength{\tabcolsep}{4pt}
\renewcommand{\arraystretch}{1.0}
\begin{tabular}{@{}llr@{}}
\toprule
 & Component & Lines \\
\midrule
\multirow{10}{*}{\shortstack[l]{Framework\\(proved once,\\inherited by\\every FTL)}}
 & Concrete model, operations, geometry & \numupd{786} \\
 & \srev{Command level}: boundary, agreement, & \\
 & \quad realizability & \rrev{1{,}862} \\
 & Global invariant, preservation, and & \\
 & \quad precondition bundles & \rrev{5{,}834} \\
 & Refinement + abstract specification & 1{,}212 \\
 & \rrev{Crash recovery and crash refinement} & \rrev{1{,}716} \\
& Designer interface + composition library & \drev{817} \\
 & Checker, test harness, tactic pack & 3{,}088 \\
 & Failure surfaces and reachability & \numupd{1{,}174} \\
\cmidrule(l){2-3}
& \textbf{Total inherited} & \textbf{\drev{16{,}489}} \\
\bottomrule
\end{tabular}
\end{table}

% ============================================================================
% 6.  Reachability of Failures  --  tightened
%
% Collapses the old 6 / 6.1 / 6.2 / 6.2.1 / 6.2.2 into: framing paragraph,
% table, one worked failure (FS#2), one closing paragraph. Terminology follows
% the 5.4 pass: "command", not "instruction"; Prim* names appear only where
% they name an artifact identifier.
%
% Revision macros (\nrev, \prev, \rrev) are NOT reassigned here -- the merges
% moved text across their old spans. Re-mark once the wording is settled.
% ============================================================================

\subsection{Reachability of Failures}
\label{sec:attacks}

\drev{\S\ref{sec:model:commandlevel}} proves the precondition bundles \emph{sufficient}: a command
whose bundle holds preserves the global invariant
(Theorem~\ref{thm:preconditions}). This section proves them \emph{necessary}.
For each of the five failure surfaces of \S\ref{sec:motivation:failures}, we
exhibit a single command that an FTL can issue with its bundle unchecked and
prove in \coq that the resulting state falsifies the invariant. We describe this using ten failures for five attack surfaces in Table~\ref{tab:theorems} (one row per failure), each
naming the command that produces it and a clause it falsifies.

A failure is \emph{reachable} when a single FTL action $f$ on
$\mathtt{FTLState}$ carries a well-formed state, i.e., one satisfying
$\mathtt{ftl\_invariant}$, to a state the invariant rejects. We prove two
results per failure. First, for every state meeting the hypothesis stated in
its theorem, e.g., the victim page being mapped or the target page being
erased, the command that performs $f$ is realizable and $f(s)$ falsifies the
invariant. Second, one concrete state satisfying all 27 clauses reaches such an
$f(s)$ at all five surfaces, so no failure is an artifact of a malformed start.

\begin{table}[h]
\centering
\caption{The ten failures at the five failure surfaces, each produced by a
single command issued with its precondition bundle unchecked. The Clause column
gives the clause each proof uses; the state reached can falsify others as well.
Every failure is conditional on a hypothesis stated in its theorem.}
\label{tab:theorems}
\footnotesize
\setlength{\tabcolsep}{4pt}
\renewcommand{\arraystretch}{1.0}
\begin{tabular}{@{}llll@{}}
\toprule
FS & Violation & Command & Clause \\
\midrule
FS\#1 & Erase of a mapped block          & $\mathtt{PrimErase}$    & Inv$_5$  \\
FS\#1 & Erase of an out-of-range block   & $\mathtt{PrimErase}$    & Inv$_8$  \\
FS\#2 & L2P aliasing, cross-address      & $\mathtt{PrimMapAddr}$  & Inv$_2$  \\
FS\#2 & L2P aliasing, intra-address      & $\mathtt{PrimMapAddr}$  & Inv$_2$  \\
FS\#2 & Namespace reassignment           & $\mathtt{PrimMapAddr}$  & Inv$_7$  \\
FS\#3 & Ownership override               & $\mathtt{PrimMapAddr}$  & Inv$_7$  \\
FS\#3 & Namespace-enforcement override   & $\mathtt{PrimMapAddr}$  & Inv$_7$  \\
FS\#4 & Integrity-tag removal            & $\mathtt{PrimProgram}$  & Inv$_9$  \\
FS\#5 & Free-block duplication           & $\mathtt{PrimFreePush}$ & Inv$_{11}$ \\
FS\#5 & Cross-namespace remap            & $\mathtt{PrimMapAddr}$  & Inv$_7$  \\
\bottomrule
\end{tabular}
\end{table}

Together the ten proofs name six distinct clauses, i.e., Inv$_2$, Inv$_5$,
Inv$_7$, Inv$_8$, Inv$_9$, and Inv$_{11}$ (Table~\ref{tab:invariants}). Four of
the ten falsify Inv$_7$ through an unchecked $\mathtt{PrimMapAddr}$, and three
of those four have identical theorem statements. We list them separately
because a surface names the part of the SSD the FTL reaches through
(\S\ref{sec:motivation:failures}), and an FTL can issue this command at three
of the five surfaces.

\subsubsection{An Example: L2P Aliasing at FS\#2}
\label{sec:attacks:proof}

From a state in which page $p$ of a victim address $a_v$ maps to physical page
$\mathit{pa}$, an FTL issues $\mathtt{PrimMapAddr}(a_a, p, \mathit{pa})$, which
maps page $p$ of the attacker address $a_a$ to the same physical page.

\begin{theorem}[FS\#2 L2P aliasing]
\label{thm:as2}
For every state $s$ with $\mathtt{l2p\_map}(a_v, p) = \mathtt{Some}\;\mathit{pa}$
and $a_a \neq a_v$, the command $\mathtt{PrimMapAddr}(a_a, p, \mathit{pa})$ is
realizable in $s$, and the state it reaches falsifies
$\mathtt{ftl\_invariant}$.
\end{theorem}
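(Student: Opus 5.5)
The theorem has two parts, realizability and falsification, and I prove them separately, unfolding $\mathtt{PrimMapAddr}$ as a direct update of $\mathtt{l2p\_map}$. The update sets the entry $(a_a, p)$ to $\mathtt{Some}\;\mathit{pa}$ and leaves every other field of $\mathtt{FTLState}$ unchanged.

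\emph{Realizability.} This should be immediate: the command's application function is total and takes arbitrary arguments. The precondition bundle of Table~\ref{tab:bundles} is exactly what the FTL leaves unchecked, so it imposes nothing here. If the artifact's notion of realizability asks for a non-trivial condition, such as $a_a < \mathtt{addr\_space}$, I will discharge it from the theorem's hypotheses or add it to the statement.

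\emph{Falsification.} Let $s'$ be the state the command reaches. I argue by contradiction: assume $\mathtt{ftl\_invariant}(s')$ and extract Inv$_2$, the injectivity of the L2P map, which is the clause Table~\ref{tab:theorems} names for this failure. In $s'$ we have
\[\mathtt{l2p\_map}(s', a_a, p) = \mathtt{Some}\;\mathit{pa}.\]
Because $a_a \neq a_v$, the update does not touch the victim entry, so
\[\mathtt{l2p\_map}(s', a_v, p) = \mathtt{l2p\_map}(s, a_v, p) = \mathtt{Some}\;\mathit{pa}.\]
The logical pages $(a_a,p)$ and $(a_v,p)$ are distinct and map to the same physical page, contradicting Inv$_2$. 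In Rocq this amounts to rewriting with the function-update lemmas, splitting on the decidable equality of the key pair (the equal case is excluded by $a_a \neq a_v$), and applying the Inv$_2$ projection to the two entries.

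\emph{Main obstacle.} The difficulty is that Inv$_2$ is derivable from Inv$_3$ and Inv$_{22}$. The artifact may therefore state injectivity in a form that quantifies only over live targets, or only over in-bounds logical pages. Since the theorem assumes nothing about $\mathit{pa}$'s liveness or about $a_a$'s bounds, my first step is to check the exact Coq form of Inv$_2$.

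If Inv$_2$ is unconditional on the entries, the argument above goes through directly. If it requires live targets or bounds, I fall back on a case split over $s'$:

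\begin{itemize}
\item If $\mathit{pa}$ is live in $s'$ (it carries over unchanged from $s$), use Inv$_3$. Its reverse mapping would have to name both $(a_v,p)$ and $(a_a,p)$, which is impossible.
\item If $\mathit{pa}$ is not live, Inv$_{22}$ fails in $s'$.
\item If $a_a$ is out of bounds, Inv$_1$ fails in $s'$.
\end{itemize}

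Every branch violates some clause of $\mathtt{ftl\_invariant}(s')$, so the conclusion holds without adding any hypothesis to the theorem.
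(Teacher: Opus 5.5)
Your proof is the same as the paper's. Realizability is vacuous: the paper's reason is that realizability only requires program commands to target erased pages and $\mathtt{PrimMapAddr}$ is not a program, not that the update is total. Then, because the update at $(a_a,p)$ leaves the victim entry untouched, injectivity (Inv$_2$) fails at the witness $((a_a,p),(a_v,p),\mathit{pa})$. Your fallback case split through Inv$_3$, Inv$_{22}$, and Inv$_1$ is harmless but unnecessary, since the paper applies Inv$_2$ to the two entries directly.
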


\begin{proof}
An FTL that skips the bundle of $\mathtt{PrimMapAddr}$ can issue it in any
state, and the command is not a program, so it is realizable in every state
\drev{(\S\ref{sec:model:commandlevel})}. After it, the logical pages $(a_a, p)$ and
$(a_v, p)$ both map to $\mathit{pa}$, since the update at $(a_a, p)$ leaves
every other entry untouched. As $a_a \neq a_v$, injectivity (Inv$_2$) fails at
witness $((a_a, p), (a_v, p), \mathit{pa})$.
\end{proof}

\noindent The aliased mapping is a confidentiality breach: a read of the
attacker's page returns the victim's data, as \S\ref{sec:attacks:board}
demonstrates on real firmware. \arbiter rejects it at the command that causes
it. The bundle of $\mathtt{PrimMapAddr}$ (Table~\ref{tab:bundles}) requires the
target page to be live and mapped by no other logical page, and in the state of
Theorem~\ref{thm:as2} the victim's entry already maps $(a_v, p)$ to
$\mathit{pa}$, so the bundle rejects the command before it executes.

\subsubsection{Coverage}
\label{sec:attacks:prevention}

The remaining nine failures are proved and prevented the same way, and either
route to a verified FTL rules out all ten. An FTL built from \arbiter's
operations starts well-formed and never reaches any of the ten states, because
every operation preserves the invariant
(Theorem~\ref{thm:preservation}). An FTL that issues commands directly, but
gates each on its bundle, reaches none of them either, because a command whose
bundle holds preserves the invariant (Theorem~\ref{thm:preconditions}).

% \vspace{2pt}
% \noindent\textbf{Key Takeaway \mk{[N]}.}
% \emph{Every clause the ten proofs name is necessary: dropping it would admit a
% state an FTL can reach with one unchecked command. The bundles are therefore
% not merely sufficient for the invariant but load-bearing at each of the five
% surfaces, and an FTL that checks them exhibits no failure of
% Definition~\ref{def:failure} at any surface.}
\begin{brevblock}
% \section{Evaluation: \back{Proof Reuse Across Four Designs}}
% \label{sec:evaluation}

% \subsection{Evaluation Methodology}
% \label{sec:evaluation:method}

\subsection{Case Studies}
\label{sec:evaluation:dftl}

% \arbiter proves its global properties once, and every FTL that implements $\mathtt{CUSTOM\_FTL}$ inherits them. A design that keeps the global invariant discharges the five $\mathtt{CUSTOM\_FTL}$ hypotheses about its own operations and inherits the invariant-preservation and refinement proofs, so it restates no global theorem. A design that changes the global invariant cannot use the interface and verifies its model directly.

% The mechanized development comprises 28 \coq files and \numupd{21{,}968} lines, containing \rrev{432} \texttt{Definition}s and 778 \texttt{Lemma}/\texttt{Theorem}/\texttt{Corollary} statements, every one closed with \texttt{Qed}. A clean serial build takes \rrev{85} seconds on an Apple M5 Pro laptop, and \texttt{coqchk} re-checks the compiled modules against the kernel \rrev{and reports no axioms}. \fix{The development has no admitted lemmas, no classical reasoning, and no axioms, which \coq's assumption check confirms for every theorem this section reports.}

\arbiter proves its global invariant once, and every design that keeps it inherits the proof by discharging five hypotheses rather than re-proving the 27 clauses. We evaluate this on four FTL designs, one for each way a design departs from the framework: it adds orthogonal state (a), reimplements an operation (b), adds an operation (c), or changes the global invariant (d). The first three keep the invariant and reuse its proofs without restating a clause; the fourth changes the invariant and verifies its model directly. Across the four, a designer writes \numupd{27} to 3{,}231 lines against the framework's \drev{16{,}489}, and \drev{the effort tracks the local invariant a design adds.} We present each design in the same form. 
% The tracker adds none and costs 27 lines; DFTL adds $\mathtt{dftl\_ok}$ and costs 844; MegIS adds $I_{\text{range}}$ on top of DFTL and costs 1{,}293; deduplication rewrites the global invariant itself and costs 3{,}231.

\noindent \textbf{(a) New state.}
The designer adds a per-block read counter that records a block's accumulated read-disturb exposure. FTLs keep such counters to refresh a block before read disturbance corrupts the pages neighboring its heavily read pages~\cite{cai-dsn-2015, cai-insidessd-2018}. The extension adds three definitions: (i) the counter, (ii) a \drev{counter-increment operation}, which the controller runs to account for each read, and (iii) a refresh check that fires once a block's count crosses a threshold.
\begin{itemize}[nosep,leftmargin=*]
\item \fix{\emph{Verification Type:} Modular. The read-disturb counter does not affect FTL correctness. No base operation
reads it, and reads return the same data regardless of its value.
The global invariant also does not mention the counter. The five
$\mathtt{CUSTOM\_FTL}$ hypotheses therefore carry over unchanged.}
\item \fix{\emph{Effort.} The read-disturb tracker adds 27 lines on top of \arbiter. \drev{Its three definitions and the extension-state declaration take 25 lines.}
The remaining two lines prove that incrementing the counter preserves the
global invariant and leaves reads unchanged. Thus, nearly all of the cost is
in defining the new state and operations; the tracker requires no new global
invariant.}
\end{itemize}

\noindent \textbf{(b) New implementation.}
% \item \emph{Design.} 
Demand-based Flash Translation Layer (DFTL)~\cite{gupta2009dftl} is an example of a new FTL implementation that preserves \arbiter's global invariant.
DFTL keeps only a working set of the L2P mapping in DRAM, in a cached mapping table (CMT), and stores the remaining entries in translation pages on flash. A read whose mapping is not in the CMT fetches the mapping from its translation page. Before evicting a dirty mapping from the CMT, DFTL writes it back to flash. Thus, DFTL changes how the FTL stores and looks up the L2P mapping.

\begin{itemize}[nosep,leftmargin=*]
\item \emph{Verification Type:} Modular. DFTL adds the CMT and an image of the translation pages to \arbiter's flash state. Writes, garbage collection, and wear leveling use \arbiter's verified operations to update the flash state and then update the CMT. \arbiter's global invariant constrains only the flash state, and DFTL's refinement map returns that state directly, so DFTL reuses the invariant unchanged. The designer proves one local invariant that connects the CMT and translation-page image to the flash state. It requires an invariant $\mathtt{dftl\_ok}$ that (i) the CMT is bounded, (ii) every cached entry agrees with $\mathtt{l2p\_map}$, (iii) every clean entry agrees with the translation-page image, and (iv) the translation-page image provides the mapping for every uncached page. The new proof shows that eviction and write-back preserve $\mathtt{dftl\_ok}$. The five $\mathtt{CUSTOM\_FTL}$ hypotheses then follow. \arbiter's verified operations preserve the global invariant. $\mathtt{dftl\_ok}$ keeps the CMT consistent with $\mathtt{l2p\_map}$, so DFTL reads the correct physical page.
\item \emph{Effort.} DFTL adds 844 lines on top of \arbiter. The cache model and DFTL operations take 509 lines, and proving that these operations preserve $\mathtt{dftl\_ok}$ takes 192 lines. The remaining 143 lines discharge the five $\mathtt{CUSTOM\_FTL}$ hypotheses. DFTL does not re-prove \arbiter's 27-clause global invariant. Instead, each operation invokes \arbiter's preservation theorem once to preserve the entire invariant. Thus, most of DFTL's verification effort goes toward its new cache state and $\mathtt{dftl\_ok}$, rather than re-verifying properties already proved by \arbiter.
\end{itemize}

\noindent \textbf{(c) \drev{New local invariant}.}
MegIS~\cite{Ghiasi2024MegIS} performs metagenomic analysis inside the SSD, which reads long contiguous ranges of logical addresses sequentially. MegIS adds coarse-grained \emph{range mappings} to the FTL, one per such range. A range mapping covers the whole range with a single entry: a logical start address $a$, a physical start block $b$, and a length $\ell$, recording that $[a,a+\ell)$ is stored contiguously in the physical blocks $[b,b+\ell)$, so a scan reads the range without a per-page lookup. 
% A range mapping is a cached claim about the L2P map, and it is correct only while the map agrees with it.
\begin{itemize}[nosep,leftmargin=*]
\item \emph{Verification Type:} Modular, as an extension of DFTL. Production FTLs cache their mapping tables, as DFTL does~\cite{gupta2009dftl,
sun2023leaftl}, making DFTL a realistic base for an in-storage system. Building
MegIS on DFTL also shows that verified extensions can stack. MegIS reuses
DFTL's flash state and \arbiter's verified operations, and adds one local
invariant, $I_{\text{range}}$. This invariant requires each range mapping to
agree with the L2P map for every page in the range. Writes, garbage collection,
and wear leveling can relocate a page in a mapped range, making the range
mapping stale. MegIS therefore drops any affected range mapping before these
operations and proves that the resulting operations preserve
$I_{\text{range}}$. Host operations do not use range mappings, so MegIS reuses
DFTL's five $\mathtt{CUSTOM\_FTL}$ hypotheses unchanged. \drev{The scan path that
motivates the design is verified separately: resolving an address through a range
mapping returns the same physical page the L2P map would, whenever
$I_{\text{range}}$ holds.}
\item \emph{Effort.} MegIS adds \numupd{1{,}293} lines on top of DFTL. Most model the range mappings and prove that garbage collection, wear leveling, and write preserve $I_{\text{range}}$. The five hypotheses take only \numupd{13} lines, against DFTL's \numupd{143}, because those lines \drev{re-instantiate} DFTL's hypothesis proofs over MegIS's state \drev{rather than redo them}. MegIS re-proves neither the global invariant nor DFTL's $\mathtt{dftl\_ok}$; its new work is confined to $I_{\text{range}}$ and its preservation. Thus the effort concentrates on the one local invariant the design adds, not on the guarantees it inherits.
\end{itemize}

\noindent \textbf{(d) \drev{Changed global invariant}.}
Deduplication~\cite{chen2011caftl} stores identical data from multiple logical
pages in a single physical page. This sharing conflicts with the global
invariant. Inv$_2$ requires the L2P mapping to be injective, while
deduplication maps multiple logical pages to the same physical page.
\begin{itemize}[nosep,leftmargin=*]
\item \emph{Verification Type:} Direct. The modular interface does not apply to deduplication because it reuses
\arbiter's global invariant, while deduplication changes it. We therefore
model deduplication separately and prove preservation of the modified invariant
end-to-end. Three changes are necessary. First, supporting shared pages requires
removing injectivity (Inv$_2$) and changing the reverse-mapping clause
(Inv$_3$). Removing Inv$_2$ alone is insufficient: Inv$_3$ together with
Inv$_{22}$ implies injectivity, preventing a deduplicating write. We therefore
remove Inv$_2$ and modify Inv$_3$ to allow several logical pages to share one
physical page. Second, sharing changes relocation. \arbiter normally uses a
page's reverse mapping to find the logical page to relocate. A shared page has
multiple logical mappings but only one reverse mapping, so this procedure would
update one mapping and leave the others pointing to the erased page.
Deduplication instead scans the mapping table and updates every logical page
that shares the physical page. It must also update a shared page's reverse
mapping when the logical page recorded there is overwritten, which cannot be
done in place on NAND flash. Third, sharing remains subject to isolation.
Inv$_7$ requires all logical pages sharing a physical page to have the same
tenant and namespace. Deduplication can therefore share pages only within a
tenant's namespace.
\item \emph{Effort.} Deduplication adds 3{,}231 lines, making it the most expensive case.
\rrev{Twenty-five of the 27 invariant clauses and 47 framework lemmas carry
over unchanged}, while re-proving the remaining results takes 1{,}691 lines.
About half of these re-proved lines are identical to the framework proof they
replace. These proofs cannot be reused because each preservation theorem is
stated for a specific invariant, even when its proof remains unchanged.
Abstracting preservation over the invariant would make them reusable.
\end{itemize}

\end{brevblock}

\section{Related Work}
\label{sec:related}
\label{sec:other-related}

\noindent \textbf{FTL and flash file-system verification.}
Qiao et al.~\cite{qiao2019formal} model FTL algorithms in \coq as a state machine over mapping tables, free-block pools, and per-page metadata, define functional correctness as refinement to a disk-like specification, and demonstrate the framework on the hybrid-mapped BAST FTL~\cite{kim2002space}. SCFTL~\cite{chang2020determinizing} is a checkpointing FTL verified for \emph{snapshot consistency}, i.e., recovery to the state at the last flush, using Agda, the Serval symbolic executor, and Z3. Flashix~\cite{schierl2009ubifs, schellhorn2014verified, ernst2015inside, bodenmuller2021flashix} is a concurrent, crash-safe flash file system verified in KIV~\cite{reif1995kiv} as a stack of refinement-connected components, including garbage collection and wear leveling. Cogent~\cite{amani2016cogent} compiles a linearly typed language to C with a machine-checked refinement proof. None of these proofs transfers to a new design. SCFTL, Flashix, and Cogent each verify one fixed implementation, and Flashix's layers are modular within Flashix rather than across designs. 
\rrev{Qiao et al.\ build a framework in which an FTL that discharges five hypotheses over a supplied invariant refines a disk, and demonstrate it on BAST. Two of those five, read-after-write and non-interference between pages, are close to \arbiter's Hyp4 and Hyp5. Their invariant is a parameter, so a second FTL supplies its own and re-proves all five, and their model has three operations, i.e., initialize, read, and write, so garbage collection and wear leveling lie outside it.} \arbiter instead fixes the invariant (Definition~\ref{def:ftl-invariant}), and every FTL that implements $\mathtt{CUSTOM\_FTL}$ inherits it \srev{(\S\ref{sec:verify:modular})} at the cost measured in \S\ref{sec:evaluation}.

\rrev{\noindent \textbf{Verifying flash below the file system.}
A decade of work on the verified-flash-filestore challenge~\cite{joshi2007mini} sits below the file system and above the chip. Butterfield et al.~\cite{butterfield2009mechanising} mechanize a model of the NAND command set, the layer our \srev{commands} abstract. Kang and Jackson~\cite{kang2008alloy} check a flash file system with wear leveling and reclamation against an abstract device in Alloy, Damchoom and Butler~\cite{damchoom2009eventb} develop a flash filestore by Event-B refinement, and Taverne and Pronk~\cite{taverne2009raffs} model check one in SPIN under power loss. Closest to an FTL, Pf\"ahler et al.~\cite{pfahler2013erase} verify Flashix's erase-block-management layer, i.e., block mapping, wear leveling, and power-failure recovery, in KIV. Model checking has also been applied to shipping FTL-layer firmware: Kim et al.~\cite{kim2008flashdriver} check Samsung's OneNAND driver, Tripathy et al.~\cite{tripathy2019nand} model the NAND chip interface, and TxFlash~\cite{prabhakaran2008txflash} model checks the commit protocol of an SSD that exports atomic writes. All of these check one design against a functional specification. None states a property that a second, differently-designed FTL inherits, and none constrains tenant ownership.}

\noindent \textbf{Properties beyond functional correctness.}
The work above verifies functional correctness alone. Tripathy et al.~\cite{tripathy2023formal} model-check security properties of one ransomware-resistant FTL rather than proving a contract that other FTLs can inherit. \arbiter additionally proves that every operation preserves tenant isolation, integrity-tag coverage, and block ownership (Theorem~\ref{thm:preservation}),
and, \rrev{to our knowledge, is the first machine-checked proof of these properties for an FTL, the first such contract that a second FTL design inherits, and the first formalization of where an FTL can violate them}, as five failure surfaces (\S\ref{sec:security-model:surfaces}) with a reachable \prev{failure} on each (\S\ref{sec:attacks}).

\noindent \textbf{Layered, concurrent, and crash-safe storage verification.}
Argosy~\cite{chajed2019argosy} composes layered storage proofs under \emph{recovery refinement}, Perennial~\cite{chajed2019perennial} extends this to concurrent, crash-safe systems in Iris, and GoJournal~\cite{chajed2021gojournal} and the work of Hance et al.~\cite{hance2020storage} verify a concurrent journal and a key-value store. These frameworks compose proofs across the layers of one stack and fix no property that competing implementations of a single layer must satisfy, so they do not transfer a guarantee from one FTL to another. \arbiter models one FTL invocation at a time. Perennial is the most directly applicable infrastructure for adding concurrency.

\noindent \textbf{File systems verified above a block device.}
FSCQ and DFSCQ~\cite{chen2015fscq, chen2017dfscq} verify a UNIX-like file system in \coq with Crash Hoare Logic, Yggdrasil~\cite{sigurbjarnarson2016yggdrasil} verifies the Yxv6 journaling file system by push-button SMT, and DiskSec~\cite{ileri2018disksec} proves data confidentiality for the sealed-block file system SFSCQ. All three treat the device below them as a trusted layer and state no property about it, which is the layer \arbiter's invariant constrains. DiskSec's confidentiality guarantee is stronger than \arbiter's reachability results, but applies to one system rather than to any FTL meeting an interface.

\noindent \textbf{Verified kernels and compilers.}
seL4~\cite{sel4, murray2013sel4} and CertiKOS~\cite{certikos} verify a fixed kernel with the same per-operation preservation pattern as \arbiter, and CompCert~\cite{compcert} proves a C compiler correct via per-pass forward simulations. \arbiter uses a single forward simulation to an idealized block device, with maintenance operations absorbed as stuttering steps, and is parametric across FTLs rather than tied to one system.

\noindent \textbf{Information-flow security.}
Goguen and Meseguer~\cite{goguen1982security} define \emph{noninterference}, usually discharged through a per-call unwinding condition. 
\arbiter's isolation guarantees are state properties (\fix{Inv$_7$, \rrev{Inv$_{17}$}, Inv$_{18}$}), not information-flow properties.

\noindent \textbf{SSD security mechanisms.}
NVMe \emph{namespaces}~\cite{nvmenamespaces}, \emph{NVM sets}~\cite{nvme_base_spec}, and AES-256 self-encryption~\cite{fips197} provide firmware-resident isolation and at-rest confidentiality without formal verification, and commercial self-encrypting drives have shown critical weaknesses~\cite{meijer2019self}. FlashVault~\cite{flashvault} moves encryption into NAND, IceClave~\cite{kang2021iceclave} \rrev{places flash management in a TEE and protects it \emph{from} untrusted offloaded programs, the opposite trust direction to ours}, and FlashGuard~\cite{huang2017flashguard} and SSD-Insider~\cite{baek2018ssdinsider} keep stale pages in the FTL to recover data after a ransomware attack. \rrev{A second line takes the drive itself as the adversary: D-Shield~\cite{chowdhuryy2023dshield} moves encryption and integrity verification to the processor so that SSD firmware sits outside the trusted base, SGX-SSD~\cite{ahn2020sgxssd} anchors in-drive versioning in a host enclave, and eNVMe~\cite{wertenbroek2024pandora} builds a malicious NVMe firmware on real hardware. These works defend a host against a drive it does not trust; \arbiter instead gives the FTL designer a proof that their own firmware cannot reach a violating state. Cross-tenant leakage has also been shown through the device without any FTL modification, by timing SSD contention~\cite{juffinger2025secret} and the host memory buffer that holds the mapping table~\cite{juffinger2025hmb}; those channels are outside our model (\S\ref{sec:security-model:scope}).}
\arbiter is complementary. 
\rrev{Its precondition bundles are decidable conjunctions over FTL state, and each has an executable boolean form in the artifact. An FTL that gates every command on its bundle, extended with, e.g., an NVMe-namespace check, preserves the invariant at each one (Theorem~\ref{thm:preconditions}). We prove the bundles sufficient; we do not build the firmware that evaluates them.}

\section{Conclusion}
\label{sec:conclusion}

\arbiter is a \coq-mechanized verification framework for customizable Flash Translation Layers. It defines a \brev{27-clause global invariant over a 16-field FTL state built on a page-granular L2P mapping table and the per-page reverse mapping that NAND's out-of-band area carries}, proves invariant preservation and refinement to an idealized block device, 
and packages these proofs behind a \coq module type so \vrev{a compliant FTL discharges five hypotheses rather than redoing them}. We additionally formalize five failure surfaces and show that each failure is reachable by a single command issued without its precondition bundle, making any certified FTL provably free of every one. We \rrev{demonstrate} the failures on a DaisyPlus OpenSSD. Across four case studies, a designer adds \numupd{27} to 3{,}231 lines \rrev{against a \drev{16{,}489}-line framework}.
\arev{The proofs rest on the \coq kernel and standard library alone, with no admitted lemmas, no classical reasoning, and no axioms.}

% sections/appendix-bast.tex exists but is deliberately NOT input: it is held out
% of the submission build. Delete it or \input it before the camera-ready.

\bibliographystyle{unsrt}
\bibliography{bib/references}

\begin{thebibliography}{100}

\bibitem{mutlu.imw13}
Onur Mutlu.
\newblock {Memory Scaling: A Systems Architecture Perspective}.
\newblock {\em IMW}, 2013.

\bibitem{kanev.isca15}
Svilen Kanev, Juan~Pablo Darago, Kim Hazelwood, Parthasarathy Ranganathan, Tipp
  Moseley, Gu-Yeon Wei, and David Brooks.
\newblock {Profiling a Warehouse-Scale Computer}.
\newblock In {\em ISCA}, 2015.

\bibitem{wang.micro2016}
Shibo Wang and Engin Ipek.
\newblock {Reducing Data Movement Energy via Online Data Clustering and
  Encoding}.
\newblock In {\em MICRO}, 2016.

\bibitem{mckee2004reflections}
Sally~A McKee.
\newblock {Reflections on the Memory Wall}.
\newblock In {\em {CF}}, 2004.

\bibitem{mutlu.superfri15}
Onur Mutlu and Lavanya Subramanian.
\newblock {Research Problems and Opportunities in Memory Systems}.
\newblock {\em SUPERFRI}, 2014.

\bibitem{Park_MICRO2022}
Jisung Park, Roknoddin Azizi, Geraldo~F. Oliveira, Mohammad Sadrosadati, Rakesh
  Nadig, David Novo, Juan G{\'o}mez-Luna, Myungsuk Kim, and Onur Mutlu.
\newblock {Flash-Cosmos: In-Flash Bulk Bitwise Operations Using Inherent
  Computation Capability of NAND Flash Memory}.
\newblock In {\em MICRO}, 2022.

\bibitem{hajinazar2021simdram}
Nastaran Hajinazar, Geraldo~F Oliveira, Sven Gregorio, Jo{\~a}o~Dinis Ferreira,
  Nika~Mansouri Ghiasi, Minesh Patel, Mohammed Alser, Saugata Ghose, Juan
  G{\'o}mez-Luna, and Onur Mutlu.
\newblock {SIMDRAM: A Framework for Bit-serial SIMD Processing-using-DRAM}.
\newblock In {\em ASPLOS}, 2021.

\bibitem{gu2016biscuit}
Boncheol Gu, Andre~S. Yoon, Duck-Ho Bae, Insoon Jo, Jinyoung Lee, Jonghyun
  Yoon, Jeong-Uk Kang, Moonsang Kwon, Chanho Yoon, Sangyeun Cho, Jaeheon Jeong,
  and Duckhyun Chang.
\newblock {Biscuit: A Framework For Near-Data Processing of Big Data
  Workloads}.
\newblock In {\em ISCA}, 2016.

\bibitem{mutlu2022modern}
Onur Mutlu, Saugata Ghose, Juan G{\'o}mez-Luna, and Rachata Ausavarungnirun.
\newblock {A Modern Primer on Processing in Memory}.
\newblock In {\em Emerging Computing: From Devices to Systems -- Looking Beyond
  Moore and Von Neumann}. Springer, 2022.

\bibitem{boroumand2018google}
Amirali Boroumand, Saugata Ghose, Youngsok Kim, Rachata Ausavarungnirun, Eric
  Shiu, Rahul Thakur, Daehyun Kim, Aki Kuusela, Allan Knies, Parthasarathy
  Ranganathan, and Onur Mutlu.
\newblock {Google Workloads for Consumer Devices: Mitigating Data Movement
  Bottlenecks}.
\newblock In {\em ASPLOS}, 2018.

\bibitem{boroumand2021google}
Amirali Boroumand, Saugata Ghose, Berkin Akin, Ravi Narayanaswami, Geraldo~F.
  Oliveira, Xiaoyu Ma, Eric Shiu, and Onur Mutlu.
\newblock {Google Neural Network Models for Edge Devices: Analyzing and
  Mitigating Machine Learning Inference Bottlenecks}.
\newblock In {\em PACT}, 2021.

\bibitem{ahn2015scalable}
Junwhan Ahn, Sungpack Hong, Sungjoo Yoo, Onur Mutlu, and Kiyoung Choi.
\newblock {A Scalable Processing-in-memory Accelerator for Parallel Graph
  Processing}.
\newblock In {\em ISCA}, 2015.

\bibitem{ahn2015pim}
Junwhan Ahn, Sungjoo Yoo, Onur Mutlu, and Kiyoung Choi.
\newblock {PIM-Enabled Instructions: A Low-Overhead, Locality-Aware
  Processing-in-Memory Architecture}.
\newblock In {\em ISCA}, 2015.

\bibitem{mutlu2019processing}
Onur Mutlu, Saugata Ghose, Juan G{\'o}mez-Luna, and Rachata Ausavarungnirun.
\newblock {Processing Data Where It Makes Sense: Enabling In-Memory
  Computation}.
\newblock {\em MicPro}, 2019.

\bibitem{mutlu2019enabling}
Onur Mutlu, Saugata Ghose, Juan G{\'o}mez-Luna, and Rachata Ausavarungnirun.
\newblock {Enabling Practical Processing in and near Memory for Data-Intensive
  Computing}.
\newblock In {\em DAC}, 2019.

\bibitem{oliveira2021damov}
Geraldo~F. Oliveira, Juan G{\'o}mez-Luna, Lois Orosa, Saugata Ghose, Nandita
  Vijaykumar, Ivan Fernandez, Mohammad Sadrosadati, and Onur Mutlu.
\newblock {DAMOV: A New Methodology And Benchmark Suite For Evaluating Data
  Movement Bottlenecks}.
\newblock {\em IEEE Access}, 2021.

\bibitem{ghiasi2022alp}
Nika~Mansouri Ghiasi, Nandita Vijaykumar, Geraldo~F. Oliveira, Lois Orosa, Ivan
  Fernandez, Mohammad Sadrosadati, Konstantinos Kanellopoulos, Nastaran
  Hajinazar, Juan G{\'o}mez-Luna, and Onur Mutlu.
\newblock {ALP: Alleviating CPU-Memory Data Movement Overheads in
  Memory-Centric Systems}.
\newblock {\em TETC}, 2023.

\bibitem{seshadri-micro-2017}
Vivek Seshadri, Donghyuk Lee, Thomas Mullins, Hasan Hassan, Amirali Boroumand,
  Jeremie Kim, Michael~A Kozuch, Onur Mutlu, Phillip~B Gibbons, and Todd~C
  Mowry.
\newblock {Ambit: In-Memory Accelerator for Bulk Bitwise Operations Using
  Commodity {DRAM} Technology}.
\newblock In {\em MICRO}, 2017.

\bibitem{Gao_MICRO2021}
Congming Gao, Xin Xin, Youyou Lu, Youtao Zhang, Jun Yang, and Jiwu Shu.
\newblock {ParaBit: Processing Parallel Bitwise Operations in NAND Flash
  Memory-based SSDs}.
\newblock In {\em MICRO}, 2021.

\bibitem{barbalace_blockndp_2020}
Antonio Barbalace, Martin Decky, Javier Picorel, and Pramod Bhatotia.
\newblock {blockNDP}: {Block}-storage {Near} {Data} {Processing}.
\newblock In {\em Middleware}, 2020.

\bibitem{augusta2015jafar}
Aurelia Augusta and Stratos Idreos.
\newblock {JAFAR: Near-Data Processing for Databases}.
\newblock In {\em SIGMOD}, 2015.

\bibitem{boroumand2019conda}
Amirali Boroumand, Saugata Ghose, Minesh Patel, Hasan Hassan, Brandon Lucia,
  Rachata Ausavarungnirun, Kevin Hsieh, Nastaran Hajinazar, Krishna~T. Malladi,
  Hongzhong Zheng, and Onur Mutlu.
\newblock {CoNDA: Efficient Cache Coherence Support for Near-Data
  Accelerators}.
\newblock In {\em ISCA}, 2019.

\bibitem{fernandez2020natsa}
Ivan Fernandez, Ricardo Quislant, Christina Giannoula, Mohammed Alser, Juan
  Gomez-Luna, Eladio Gutierrez, Oscar Plata, and Onur Mutlu.
\newblock {NATSA: A Near-Data Processing Accelerator for Time Series Analysis}.
\newblock In {\em ICCD}, 2020.

\bibitem{singh2019napel}
Gagandeep Singh, Juan G{\'o}mez-Luna, Giovanni Mariani, Geraldo~F Oliveira,
  Stefano Corda, Sander Stuijk, Onur Mutlu, and Henk Corporaal.
\newblock {NAPEL: Near-memory Computing Application Performance Prediction via
  Ensemble Learning}.
\newblock In {\em DAC}, 2019.

\bibitem{gao2016hrl}
Mingyu Gao and Christos Kozyrakis.
\newblock {HRL: Efficient and Flexible Reconfigurable Logic for Near-Data
  Processing}.
\newblock In {\em HPCA}, 2016.

\bibitem{lee2020smartssd}
Joo~Hwan Lee, Hui Zhang, Veronica Lagrange, Praveen Krishnamoorthy, Xiaodong
  Zhao, and Yang~Seok Ki.
\newblock {SmartSSD: FPGA-Accelerated Near-Storage Data Analytics on SSD}.
\newblock {\em CAL}, 2020.

\bibitem{singh2021fpga}
Gagandeep Singh, Mohammed Alser, Damla~Senol Cali, Dionysios Diamantopoulos,
  Juan G{\'o}mez-Luna, Henk Corporaal, and Onur Mutlu.
\newblock {FPGA-based Near-Memory Acceleration of Modern Data-Intensive
  Applications}.
\newblock {\em IEEE Micro}, 2021.

\bibitem{medal2019}
Wenqin Huangfu, Xueqi Li, Shuangchen Li, Xing Hu, Peng Gu, and Yuan Xie.
\newblock {MEDAL: Scalable DIMM Based Near Data Processing Accelerator for DNA
  Seeding Algorithm}.
\newblock In {\em MICRO}, 2019.

\bibitem{liang-fpl-2019}
Shengwen Liang, Ying Wang, Cheng Liu, Huawei Li, and Xiaowei Li.
\newblock {InS-DLA: An In-SSD Deep Learning Accelerator for Near-Data
  Processing}.
\newblock In {\em FPL}, 2019.

\bibitem{Mansouri-Ghiasi_ASPLOS2022}
Nika~Mansouri Ghiasi, Jisung Park, Harun Mustafa, Jeremie~S. Kim, Ataberk
  Olgun, Arvid Gollwitzer, Damla~Senol Cali, Can Firtina, Haiyu Mao,
  Nour~Almadhoun Alserr, Rachata Ausavarungnirun, Nandita Vijaykumar, Mohammed
  Alser, and Onur Mutlu.
\newblock {GenStore: A High-Performance In-Storage Processing System for Genome
  Sequence Analysis}.
\newblock In {\em ASPLOS}, 2022.

\bibitem{ghiasi2026sage}
Nika~Mansouri Ghiasi, Talu G{\"u}loglu, Harun Mustafa, Can Firtina, Konstantina
  Koliogeorgi, Konstantinos Kanellopoulos, Haiyu Mao, Rakesh Nadig, Mohammad
  Sadrosadati, Jisung Park, and Onur Mutlu.
\newblock {SAGe: A Lightweight Algorithm-Architecture Co-Design for Mitigating
  the Data Preparation Bottleneck in Large-Scale Genome Sequence Analysis}.
\newblock In {\em HPCA}, 2026.

\bibitem{ghiasi2026grains}
Nika~Mansouri Ghiasi, Harun Mustafa, Talu G{\"u}loglu, Rakesh Nadig,
  Konstantina Koliogeorgi, Susana~Rebolledo Ruiz, Marc Rautmann, Furkan Eris,
  Mohammad Sadrosadati, Jisung Park, and Onur Mutlu.
\newblock {GRAINS: Enabling High-Performance and Low-Cost Graph-Based Genome
  Analysis via Storage-Aware Algorithm-Architecture Co-Design}.
\newblock In {\em ISCA}, 2026.

\bibitem{Ghiasi2024MegIS}
Nika~Mansouri Ghiasi, Mohammad Sadrosadati, Harun Mustafa, Arvid Gollwitzer,
  Can Firtina, Julien Eudine, Haiyu Mao, Jo\"el Lindegger, Meryem~Banu Cavlak,
  Mohammed Alser, Jisung Park, and Onur Mutlu.
\newblock {MegIS: High-Performance, Energy-Efficient, and Low-Cost Metagenomic
  Analysis with In-Storage Processing}.
\newblock In {\em ISCA}, 2024.

\bibitem{oliveira2024mimdram}
Geraldo~F Oliveira, Ataberk Olgun, Abdullah~Giray Ya{\u{g}}l{\i}k{\c{c}}{\i},
  F~Nisa Bostanc{\i}, Juan G{\'o}mez-Luna, Saugata Ghose, and Onur Mutlu.
\newblock {MIMDRAM: An End-to-End Processing-Using-DRAM System for
  High-Throughput, Energy-Efficient and Programmer-Transparent
  Multiple-Instruction Multiple-Data Computing}.
\newblock In {\em HPCA}, 2024.

\bibitem{nider2020processing}
Joel Nider, Craig Mustard, Andrada Zoltan, and Alexandra Fedorova.
\newblock {Processing in Storage Class Memory}.
\newblock In {\em HotStorage}, 2020.

\bibitem{hsieh.isca16}
Kevin Hsieh, Eiman Ebrahimi, Gwangsun Kim, Niladrish Chatterjee, Mike O'Conner,
  Nandita Vijaykumar, Onur Mutlu, and Stephen Keckler.
\newblock {Transparent Offloading and Mapping (TOM): Enabling
  Programmer-Transparent Near-Data Processing in GPU Systems}.
\newblock In {\em ISCA}, 2016.

\bibitem{park2024attacc}
Jaehyun Park, Jaewan Choi, Kwanhee Kyung, Michael~Jaemin Kim, Yongsuk Kwon,
  Nam~Sung Kim, and Jung~Ho Ahn.
\newblock {AttAcc! Unleashing the Power of PIM for Batched Transformer-based
  Generative Model Inference}.
\newblock In {\em ASPLOS}, 2024.

\bibitem{chen2022offload}
Dan Chen, Hai Jin, Long Zheng, Yu~Huang, Pengcheng Yao, Chuangyi Gui, Qinggang
  Wang, Haifeng Liu, Haiheng He, Xiaofei Liao, and Ran Zheng.
\newblock {A General Offloading Approach for Near-DRAM Processing-In-Memory
  Architectures}.
\newblock In {\em IPDPS}, 2022.

\bibitem{li2018cisc}
Dongyang Li, Yafei Yang, Weijun Li, and Qing Yang.
\newblock {CISC: Coordinating Intelligent SSD and CPU to Speedup Graph
  Processing}.
\newblock In {\em ISPDC}, 2018.

\bibitem{li2023optimizing}
Lin Li, Xianzhang Chen, Jiali Li, Jiapin Wang, Duo Liu, Yujuan Tan, and Ao~Ren.
\newblock {Optimizing the Performance of NDP Operations by Retrieving File
  Semantics in Storage}.
\newblock In {\em DAC}, 2023.

\bibitem{maity2025unguided}
Satanu Maity, Manojit Ghose, Avinash Kumar, Anol Chakraborty, and Ankit
  Chakraborty.
\newblock {Unguided Machine Learning-Based Computation Offloading for
  Near-Memory Processing}.
\newblock In {\em VLSID}, 2025.

\bibitem{olivier2019hexo}
Pierre Olivier, AKM~Fazla Mehrab, Stefan Lankes, Mohamed~Lamine Karaoui, Robert
  Lyerly, and Binoy Ravindran.
\newblock {HEXO: Offloading HPC Compute-Intensive Workloads on Low-Cost,
  Low-Power Embedded Systems}.
\newblock In {\em HPDC}, 2019.

\bibitem{wei2022pimprof}
Yizhou Wei, Minxuan Zhou, Sihang Liu, Korakit Seemakhupt, Tajana Rosing, and
  Samira Khan.
\newblock {PIMProf: An Automated Program Profiler for Processing-in-Memory
  Offloading Decisions}.
\newblock In {\em DATE}, 2022.

\bibitem{weiner2022tmo}
Johannes Weiner, Niket Agarwal, Dan Schatzberg, Leon Yang, Hao Wang, Blaise
  Sanouillet, Bikash Sharma, Tejun Heo, Mayank Jain, Chunqiang Tang, and
  Dimitrios Skarlatos.
\newblock {TMO: Transparent Memory Offloading in Datacenters}.
\newblock In {\em ASPLOS}, 2022.

\bibitem{yang2023lambda}
Zhe Yang, Youyou Lu, Xiaojian Liao, Youmin Chen, Junru Li, Siyu He, and Jiwu
  Shu.
\newblock {{$\lambda$-IO}: A Unified {IO} Stack for Computational Storage}.
\newblock In {\em FAST}, 2023.

\bibitem{lincoln-hpca}
Weiyi Sun, Mingyu Gao, Zhaoshi Li, Aoyang Zhang, Iris~Ying Chou, Jianfeng Zhu,
  Shaojun Wei, and Leibo Liu.
\newblock {Lincoln: Real-Time 50-100B LLM Inference on Consumer Devices with
  LPDDR-Interfaced, Compute-Enabled Flash Memory}.
\newblock In {\em HPCA}, 2025.

\bibitem{jang2025inf}
Hongsun Jang, Siung Noh, Changmin Shin, Jaewon Jung, Jaeyong Song, and Jinho
  Lee.
\newblock {INF\^{} 2: High-Throughput Generative Inference of Large Language
  Models using Near-Storage Processing}.
\newblock {\em arXiv preprint arXiv:2502.09921}, 2025.

\bibitem{pan2024instattention}
Xiurui Pan, Endian Li, Qiao Li, Shengwen Liang, Yizhou Shan, Ke~Zhou, Yingwei
  Luo, Xiaolin Wang, and Jie Zhang.
\newblock {InstAttention: In-Storage Attention Offloading for Cost-Effective
  Long-Context LLM Inference}.
\newblock In {\em HPCA}, 2025.

\bibitem{jaliminchecs}
Lokesh Jaliminche, Yangwook Kang, Changho Choi, Pankaj Mehra, and Heiner Litz.
\newblock {CS-Assist: A Tool to Assist Computational Storage Device Offload}.
\newblock {\em NVMW}, 2024.

\bibitem{kang2024isp}
Seokwon Kang, Jongbin Kim, Gyeongyong Lee, Jeongmyung Lee, Jiwon Seo, Hyungsoo
  Jung, Yong~Ho Song, and Yongjun Park.
\newblock {ISP Agent: A Generalized In-storage-processing Workload Offloading
  Framework by Providing Multiple Optimization Opportunities}.
\newblock {\em ACM Transactions on Architecture and Code Optimization}, 2024.

\bibitem{seshadri-osdi-2014}
Sudharsan Seshadri, Mark Gahagan, Sundaram Bhaskaran, Trevor Bunker, Arup De,
  Yanqin Jin, Yang Liu, and Steven Swanson.
\newblock {Willow: A User-Programmable {SSD}}.
\newblock In {\em USENIX OSDI}, 2014.

\bibitem{acharya-asplos-1998}
Anurag Acharya, Mustafa Uysal, and Joel Saltz.
\newblock {Active Disks: Programming Model, Algorithms and Evaluation}.
\newblock In {\em ASPLOS}, 1998.

\bibitem{Kabra2025CipherMatch}
Mayank Kabra, Rakesh Nadig, Harshita Gupta, Rahul Bera, Manos Frouzakis,
  Vamanan Arulchelvan, Yu~Liang, Haiyu Mao, Mohammad Sadrosadati, and Onur
  Mutlu.
\newblock {CIPHERMATCH: Accelerating Homomorphic Encryption-Based String
  Matching via Memory-Efficient Data Packing and In-Flash Processing}.
\newblock In {\em ASPLOS}, 2025.

\bibitem{wang-eurosys-2019}
Xiaohao Wang, Yifan Yuan, You Zhou, Chance~C Coats, and Jian Huang.
\newblock {Project Almanac: A Time-Traveling Solid-State Drive}.
\newblock In {\em EuroSys}, 2019.

\bibitem{kim-fast-2021}
Shine Kim, Yunho Jin, Gina Sohn, Jonghyun Bae, Tae~Jun Ham, and Jae~W. Lee.
\newblock {Behemoth: A Flash-centric Training Accelerator for Extreme-scale
  {DNNs}}.
\newblock In {\em FAST}, 2021.

\bibitem{kang-msst-2013}
Yangwook Kang, {Yang-Suk} Kee, Ethan~L. Miller, and Chanik Park.
\newblock {Enabling Cost-effective Data Processing with Smart {SSD}}.
\newblock In {\em MSST}, 2013.

\bibitem{torabzadehkashi-pdp-2019}
Mahdi Torabzadehkashi, Siavash Rezaei, Ali Heydarigorji, Hosein Bobarshad,
  Vladimir Alves, and Nader Bagherzadeh.
\newblock {Catalina: In-Storage Processing Acceleration for Scalable Big Data
  Analytics}.
\newblock In {\em PDP}, 2019.

\bibitem{keeton-sigmod-1998}
Kimberly Keeton, David~A. Patterson, and Joseph~M. Hellerstein.
\newblock {A Case for Intelligent Disks (IDISKs)}.
\newblock {\em {SIGMOD} Rec.}, 1998.

\bibitem{koo-micro-2017}
Gunjae Koo, Kiran~Kumar Matam, Te~I, H.~V. Krishna~Giri Narra, Jing Li,
  Hung-Wei Tseng, Steven Swanson, and Murali Annavaram.
\newblock {Summarizer: Trading Communication with Computing Near Storage}.
\newblock In {\em MICRO}, 2017.

\bibitem{tiwari-fast-2013}
Devesh Tiwari, Simona Boboila, Sudharshan Vazhkudai, Youngjae Kim, Xiaosong Ma,
  Peter Desnoyers, and Yan Solihin.
\newblock {Active Flash: Towards Energy-Efficient, In-Situ Data Analytics on
  Extreme-Scale Machines}.
\newblock In {\em FAST}, 2013.

\bibitem{tiwari-hotpower-2012}
Devesh Tiwari, Sudharshan~S Vazhkudai, Youngjae Kim, Xiaosong Ma, Simona
  Boboila, and Peter~J Desnoyers.
\newblock {Reducing Data Movement Costs Using Energy-Efficient, Active
  Computation on SSD}.
\newblock In {\em HotPower}, 2012.

\bibitem{boboila-msst-2012}
Simona Boboila, Youngjae Kim, Sudharshan~S. Vazhkudai, Peter Desnoyers, and
  Galen~M. Shipman.
\newblock {Active Flash: Out-of-core Data Analytics on Flash Storage}.
\newblock In {\em MSST}, 2012.

\bibitem{bae-cikm-2013}
Duck-Ho Bae, Jin-Hyung Kim, Sang-Wook Kim, Hyunok Oh, and Chanik Park.
\newblock {Intelligent SSD: A Turbo for Big Data Mining}.
\newblock In {\em CIKM}, 2013.

\bibitem{torabzadehkashi-ipdpsw-2018}
Mahdi Torabzadehkashi, Siavash Rezaei, Vladimir Alves, and Nader Bagherzadeh.
\newblock {CompStor: An In-Storage Computation Platform for Scalable
  Distributed Processing}.
\newblock In {\em IPDPSW}, 2018.

\bibitem{pei-tos-2019}
Shuyi Pei, Jing Yang, and Qing Yang.
\newblock {REGISTOR: A Platform for Unstructured Data Processing inside SSD
  Storage}.
\newblock {\em ACM TOS}, 2019.

\bibitem{do-sigmod-2013}
Jaeyoung Do, Yang-Suk Kee, Jignesh~M. Patel, Chanik Park, Kwanghyun Park, and
  David~J. DeWitt.
\newblock {Query Processing on Smart SSDs: Opportunities and Challenges}.
\newblock In {\em ACM SIGMOD}, 2013.

\bibitem{kim-infosci-2016}
Sungchan Kim, Hyunok Oh, Chanik Park, Sangyeun Cho, Sang-Won Lee, and Bongki
  Moon.
\newblock {In-Storage Processing of Database Scans and Joins}.
\newblock {\em Information Sciences}, 2016.

\bibitem{riedel-computer-2001}
Erik Riedel, Christos Faloutsos, Garth~A. Gibson, and David Nagle.
\newblock {Active Disks for Large-Scale Data Processing}.
\newblock {\em Computer}, 2001.

\bibitem{riedel-vldb-1998}
Erik Riedel, Garth Gibson, and Christos Faloutsos.
\newblock {Active Storage for Large-Scale Data Mining and Multimedia}.
\newblock In {\em VLDB}, 1998.

\bibitem{liang-atc-2019}
Shengwen Liang, Ying Wang, Youyou Lu, Zhe Yang, Huawei Li, and Xiaowei Li.
\newblock {Cognitive SSD: A Deep Learning Engine for In-Storage Data
  Retrieval}.
\newblock In {\em USENIX ATC}, 2019.

\bibitem{cho-wondp-2013}
Benjamin~Y Cho, Won~Seob Jeong, Doohwan Oh, and Won~Woo Ro.
\newblock {XSD: Accelerating MapReduce by Harnessing the GPU inside an SSD }.
\newblock In {\em WoNDP}, 2013.

\bibitem{jun2015bluedbm}
Sang-Woo Jun, Ming Liu, Sungjin Lee, Jamey Hicks, John Ankcorn, Myron King,
  Shuotao Xu, and Arvind.
\newblock {BlueDBM: An Appliance for Big Data Analytics}.
\newblock In {\em ISCA}, 2015.

\bibitem{ajdari-hpca-2019}
Mohammadamin Ajdari, Pyeongsu Park, Joonsung Kim, Dongup Kwon, and Jangwoo Kim.
\newblock {CIDR: A Cost-effective In-line Data Reduction System for
  Terabit-per-second Scale SSD Arrays}.
\newblock In {\em HPCA}, 2019.

\bibitem{jun-hpec-2016}
Sang-Woo Jun, Huy~T Nguyen, Vijay Gadepally, et~al.
\newblock {In-Storage Embedded Accelerator for Sparse Pattern Processing}.
\newblock In {\em HPEC}, 2016.

\bibitem{kang-tc-2021}
Myeonggu Kang, Hyeonuk Kim, Hyein Shin, Jaehyeong Sim, Kyeonghan Kim, and
  Lee-Sup Kim.
\newblock {S-FLASH: A NAND Flash-based Deep Neural Network Accelerator
  Exploiting Bit-Level Sparsity}.
\newblock {\em IEEE TC}, 2022.

\bibitem{kim-sigops-2020}
Minsub Kim and Sungjin Lee.
\newblock {Reducing Tail Latency of DNN-based Recommender Systems using
  In-Storage Processing}.
\newblock In {\em APSys}, 2020.

\bibitem{Lee_ISCA2022}
Yunjae Lee, Jinha Chung, and Minsoo Rhu.
\newblock {SmartSAGE: Training Large-Scale Graph Neural Networks using
  In-Storage Processing Architectures}.
\newblock In {\em ISCA}, 2022.

\bibitem{li2023ecssd}
Siqi Li, Fengbin Tu, Liu Liu, Jilan Lin, Zheng Wang, Yangwook Kang, Yufei Ding,
  and Yuan Xie.
\newblock {ECSSD: Hardware/Data Layout Co-Designed In-Storage-Computing
  Architecture for Extreme Classification}.
\newblock In {\em ISCA}, 2023.

\bibitem{ruan2019insider}
Zhenyuan Ruan, Tong He, and Jason Cong.
\newblock {INSIDER: Designing In-Storage Computing System for Emerging
  High-Performance Drive}.
\newblock In {\em USENIX ATC}, 2019.

\bibitem{wang2016ssd1}
Jianguo Wang, Dongchul Park, Yannis Papakonstantinou, and Steven Swanson.
\newblock {SSD In-storage Computing for Search Engines}.
\newblock {\em ToC}, 2016.

\bibitem{jeong-tpds-2019}
Won~Seob Jeong, Changmin Lee, Keunsoo Kim, Myung~Kuk Yoon, Won Jeon, Myoungsoo
  Jung, and Won~Woo Ro.
\newblock {REACT: Scalable and High-performance Regular Expression Pattern
  Matching Accelerator for In-storage Processing}.
\newblock {\em IEEE TPDS}, 2019.

\bibitem{mao2012cache}
Yandong Mao, Eddie Kohler, and Robert~Tappan Morris.
\newblock {Cache Craftiness for Fast Multicore Key-Value Storage}.
\newblock In {\em {EuroSys}}, 2012.

\bibitem{gouk2024dockerssd}
Donghyun Gouk, Miryeong Kwon, Hanyeoreum Bae, and Myoungsoo Jung.
\newblock {DockerSSD: Containerized In-Storage Processing and Hardware
  Acceleration for Computational SSDs}.
\newblock In {\em HPCA}, 2024.

\bibitem{yavits2021giraf}
Leonid Yavits, Roman Kaplan, and Ran Ginosar.
\newblock {GIRAF: General Purpose In-Storage Resistive Associative Framework}.
\newblock {\em IEEE TPDS}, 2021.

\bibitem{Kim_HPCA2023}
Junkyum Kim, Myeonggu Kang, Yunki Han, {Yang-gon} Kim, and Lee-Sup Kim.
\newblock {OptimStore: In-Storage Optimization of Large Scale DNNs with On-Die
  Processing}.
\newblock In {\em HPCA}, 2023.

\bibitem{lim-icce-2021}
Minje Lim, Jeeyoon Jung, and Dongkun Shin.
\newblock {LSM-Tree Compaction Acceleration Using In-Storage Processing}.
\newblock In {\em ICCE-Asia}, 2021.

\bibitem{narasimhamurthy2019sage}
Sai Narasimhamurthy, Nikita Danilov, Sining Wu, Ganesan Umanesan, Stefano
  Markidis, Sergio Rivas-Gomez, Ivy~Bo Peng, Erwin Laure, Dirk Pleiter, and
  Shaun De~Witt.
\newblock {SAGE: Percipient Storage for Exascale Data Centric Computing}.
\newblock {\em Parallel Computing}, 2019.

\bibitem{jun-isca-2018}
Sang-Woo Jun, Andy Wright, Sizhuo Zhang, Shuotao Xu, and Arvind.
\newblock {{GraFBoost:} Using Accelerated Flash Storage for External Graph
  Analytics}.
\newblock In {\em ISCA}, 2018.

\bibitem{fakhry2023review}
Dina Fakhry, Mohamed Abdelsalam, M~Watheq El-Kharashi, and Mona Safar.
\newblock {A Review on Computational Storage Devices and Near Memory Computing
  for High-Performance Applications}.
\newblock {\em J. Memori}, 2023.

\bibitem{jo2016yoursql}
Insoon Jo, Duck-Ho Bae, Andre~S. Yoon, Jeong-Uk Kang, Sangyeun Cho, Daniel
  D.~G. Lee, and Jaeheon Jeong.
\newblock {YourSQL: A High-Performance Database System Leveraging In-storage
  Computing}.
\newblock {\em PVLDB}, 2016.

\bibitem{Chen2025REIS}
Kangqi Chen, Rakesh Nadig, Manos Frouzakis, Nika~Mansouri Ghiasi, Yu~Liang,
  Haiyu Mao, Jisung Park, Mohammad Sadrosadati, and Onur Mutlu.
\newblock {REIS: A High-Performance and Energy-Efficient Retrieval System with
  In-Storage Processing}.
\newblock In {\em ISCA}, 2025.

\bibitem{Li_ATC2021}
Cangyuan Li, Ying Wang, Cheng Liu, Shengwen Liang, Huawei Li, and Xiaowei Li.
\newblock {GLIST: Towards In-Storage Graph Learning}.
\newblock In {\em USENIX ATC}, 2021.

\bibitem{wang2016ssd}
Jianguo Wang, Dongchul Park, Yang-Suk Kee, Yannis Papakonstantinou, and Steven
  Swanson.
\newblock {SSD In-Storage Computing for List Intersection}.
\newblock In {\em DaMoN}, 2016.

\bibitem{mahapatra2025rag}
Rohan Mahapatra, Harsha Santhanam, Christopher Priebe, Hanyang Xu, and Hadi
  Esmaeilzadeh.
\newblock In-storage acceleration of retrieval augmented generation as a
  service.
\newblock In {\em ISCA}, 2025.

\bibitem{wang2024beacongnn}
Yuyue Wang, Xiurui Pan, Yuda An, Jie Zhang, and Glenn Reinman.
\newblock {BeaconGNN: Large-Scale GNN Acceleration with Out-of-Order Streaming
  In-Storage Computing}.
\newblock In {\em HPCA}, 2024.

\bibitem{Yu2024CambriconLLM}
Zhongkai Yu, Shengwen Liang, Tianyun Ma, Yunke Cai, Ziyuan Nan, Di~Huang,
  Xinkai Song, Yifan Hao, Jie Zhang, Tian Zhi, Yongwei Zhao, Zidong Du, Xing
  Hu, Qi~Guo, and Tianshi Chen.
\newblock {Cambricon-LLM: A Chiplet-Based Hybrid Architecture for On-Device
  Inference of 70B LLM}.
\newblock In {\em MICRO}, 2024.

\bibitem{Chen_MICRO2024}
Jian Chen, Congming Gao, Youyou Lu, Yuhao Zhang, and Jiwu Shu.
\newblock {Ares-Flash: Efficient Parallel Integer Arithmetic Operations Using
  NAND Flash Memory}.
\newblock In {\em MICRO}, 2024.

\bibitem{kim2025crossbit}
Hyunjin Kim, Seunghwan Song, Sukhyun Choi, Jeongin Choe, Sanghyeok Han, Jisung
  Park, Jinho Lee, and Jae-Joon Kim.
\newblock {CrossBit: Bitwise Computing in NAND Flash Memory with Inter-Bitline
  Data Communication}.
\newblock In {\em MICRO}, 2025.

\bibitem{wong2024tcam}
Ryan Wong, Nikita Kim, Kevin Higgs, Sapan Agarwal, Engin Ipek, Saugata Ghose,
  and Ben Feinberg.
\newblock {TCAM-SSD: A Framework for Search-Based Computing in Solid-State
  Drives}.
\newblock {\em arXiv preprint arXiv:2403.06938}, 2024.

\bibitem{wong2025anvil}
Ryan Wong, Nikita Kim, Aniket Das, Kevin Higgs, Engin Ipek, Sapan Agarwal,
  Saugata Ghose, and Ben Feinberg.
\newblock {Anvil: An In-Storage Accelerator for Name-Value Data Stores}.
\newblock In {\em ISCA}, 2025.

\bibitem{chen2024search}
Yun-Chih Chen, Yuan-Hao Chang, and Tei-Wei Kuo.
\newblock {Search-in-Memory (SiM): Conducting Data-Bound Computations on Flash
  Chip for Enhanced Efficiency}.
\newblock In {\em DATE}, 2024.

\bibitem{choi2020flash}
Won~Ho Choi, Pi-Feng Chiu, Wen Ma, Gertjan Hemink, Tung~Thanh Hoang, Martin
  Lueker-Boden, and Zvonimir Bandic.
\newblock {An In-flash Binary Neural Network Accelerator with SLC NAND Flash
  Array}.
\newblock In {\em ISCAS}, 2020.

\bibitem{chun2022pif}
Myoungjun Chun, Jaeyong Lee, Sanggu Lee, Myungsuk Kim, and Jihong Kim.
\newblock {PiF: In-Flash Acceleration for Data-Intensive Applications}.
\newblock In {\em HotStorage}, 2022.

\bibitem{lee2025aif}
Jaeyong Lee, Hyeunjoo Kim, Sanghun Oh, Myoungjun Chun, Myungsuk Kim, and Jihong
  Kim.
\newblock Aif: Accelerating on-device llm inference using in-flash processing.
\newblock In {\em ISCA}, 2025.

\bibitem{wang2018three}
Panni Wang, Feng Xu, Bo~Wang, Bin Gao, Huaqiang Wu, He~Qian, and Shimeng Yu.
\newblock {Three-Dimensional NAND Flash for Vector–Matrix Multiplication}.
\newblock {\em IEEE TVLSI}, 2018.

\bibitem{ghiasi2026enabling}
Nika~Mansouri Ghiasi and Onur Mutlu.
\newblock Enabling fast, efficient, and low-cost genomic and metagenomic
  analyses via storage-centric system designs.
\newblock In {\em ICS Workshops}, 2026.

\bibitem{samsung-980pro}
Samsung.
\newblock {Samsung SSD 980 PRO}.
\newblock
  \url{https://www.samsung.com/semiconductor/minisite/ssd/product/consumer/980pro/}.

\bibitem{adatasu630}
ADATA.
\newblock {ADATA Ultimate Series: SU630}.
\newblock \url{https://www.adata.com/en/consumer/category/ssds/591/}.

\bibitem{intelqlc}
Intel.
\newblock {Intel SSD 660p Series},
  \url{https://www.intel.com/content/www/us/en/products/docs/memory-storage/solid-state-drives/consumer-ssds/660p-series-brief.html}.

\bibitem{intels4510}
{Intel}.
\newblock {Intel SSD D3-S4510 Series},
  \url{https://www.intel.com/content/www/us/en/products/memory-storage/solid-state-drives/data-center-ssds/d3-series/d3-s4510-series/d3-s4510-1-92tb-2-5inch-3d2.html}.

\bibitem{intelp4610}
Intel.
\newblock {Intel SSD DC P4610 Series}.
\newblock
  \url{https://ark.intel.com/content/www/us/en/ark/products/140103/intel-ssd-dc-p4610-series-1-6tb-2-5in-pcie-3-1-x4-3d2-tlc.html}.

\bibitem{nadig2023venice}
Rakesh Nadig, Mohammad Sadrosadati, Haiyu Mao, Nika~Mansouri Ghiasi, Arash
  Tavakkol, Jisung Park, Hamid Sarbazi-Azad, Juan G{\'o}mez-Luna, and Onur
  Mutlu.
\newblock {Venice: Improving Solid-State Drive Parallelism at Low Cost via
  Conflict-Free Accesses}.
\newblock In {\em ISCA}, 2023.

\bibitem{micheloni2010inside}
Rino Micheloni, Luca Crippa, and Alessia Marelli.
\newblock {\em {Inside NAND Flash Memories}}.
\newblock Springer, 2010.

\bibitem{micheloni2013inside}
Rino Micheloni, Alessia Marelli, and Kam Eshghi.
\newblock {\em Inside Solid State Drives (SSDs)}.
\newblock Springer, 2013.

\bibitem{inteloptane}
Intel.
\newblock {Intel Optane SSD DC P4801X Series},
  \url{https://ark.intel.com/content/www/us/en/ark/products/149365/intel-optane-ssd-dc-p4801x-series-100gb-2-5in-pcie-x4-3d-xpoint.html}.

\bibitem{samsungmlc}
Samsung.
\newblock {Samsung SSD 960 PRO NVMe M.2 512GB},
  \url{https://www.samsung.com/us/computing/memory-storage/solid-state-drives/ssd-960-pro-m-2-512gb-mz-v6p512bw}.

\bibitem{samsung2017znand}
Samsung.
\newblock {Ultra-Low Latency with Samsung Z-NAND SSD},
  \url{https://www.samsung.com/semiconductor/global.semi.static/Ultra-Low_Latency_with_Samsung_Z-NAND_SSD-0.pdf}.

\bibitem{Soysal2025MARS}
Melina Soysal, Konstantina Koliogeorgi, Can Firtina, Nika~Mansouri Ghiasi,
  Rakesh Nadig, Haiyu Mao, Geraldo~F. Oliveira, Yu~Liang, Klea Zambaku,
  Mohammad Sadrosadati, and Onur Mutlu.
\newblock {MARS: Processing-In-Memory Acceleration of Raw Signal Genome
  Analysis Inside the Storage Subsystem}.
\newblock In {\em ICS}, 2025.

\bibitem{Matam_ISCA2019}
Kiran~K. Matam, Gunjae Koo, Haipeng Zha, Hung-Wei Tseng, and Murali Annavaram.
\newblock {GraphSSD: Graph Semantics Aware SSD}.
\newblock In {\em ISCA}, 2019.

\bibitem{Liang_DAC2022}
Shengwen Liang, Ying Wang, Ziming Yuan, Cheng Liu, Huawei Li, and Xiaowei Li.
\newblock {VStore: In-Storage Graph-Based Vector Search Accelerator}.
\newblock In {\em DAC}, 2022.

\bibitem{Duffy_PVLDB2023}
Carl Duffy, Jaehoon Shim, Sang-Hoon Kim, and Jin-Soo Kim.
\newblock {Dotori: A Key-Value SSD Based KV Store}.
\newblock {\em Proc. VLDB Endowment}, 2023.

\bibitem{Park_ASPLOS2025}
Chanyoung Park, Jungho Lee, Chun-Yi Liu, Kyungtae Kang, Mahmut~T. Kandemir, and
  Wonil Choi.
\newblock {AnyKey: A Key-Value SSD for All Workload Types}.
\newblock In {\em ASPLOS}, 2025.

\bibitem{Bisson_IPCCC2018}
Tim Bisson, Ke~Chen, Changho Choi, Vijay Balakrishnan, and {Yang-Suk} Kee.
\newblock {Crail-KV: A High-Performance Distributed Key-Value Store Leveraging
  Native KV-SSDs over NVMe-oF}.
\newblock In {\em IPCCC}, 2018.

\bibitem{nadig2026conduit}
Rakesh Nadig, Vamanan Arulchelvan, Mayank Kabra, Harshita Gupta, Rahul Bera,
  Nika~Mansouri Ghiasi, Nanditha Rao, Qingcai Jiang, Andreas~Kosmas Kakolyris,
  Yu~Liang, Mohammad Sadrosadati, and Onur Mutlu.
\newblock {Conduit: Programmer-Transparent Near-Data Processing Using Multiple
  Compute-Capable Resources in Solid State Drives}.
\newblock In {\em HPCA}, 2026.

\bibitem{mutlu2024memory}
Onur Mutlu, Ataberk Olgun, Geraldo~F Oliveira, and Ismail~E Yuksel.
\newblock {Memory-Centric Computing: Recent Advances in Processing-in-DRAM}.
\newblock In {\em IEDM}, 2024.

\bibitem{mutlu2025memory}
Onur Mutlu, Ataberk Olgun, and {\.I}smail~Emir Y{\"u}ksel.
\newblock {Memory-Centric Computing: Solving Computing’s Memory Problem}.
\newblock In {\em IMW}, 2025.

\bibitem{tavakkol2018flin}
Arash Tavakkol, Mohammad Sadrosadati, Saugata Ghose, Jeremie Kim, Yixin Luo,
  Yaohua Wang, Nika~Mansouri Ghiasi, Lois Orosa, Juan G{\'o}mez-Luna, and Onur
  Mutlu.
\newblock {FLIN: Enabling Fairness and Enhancing Performance in Modern NVMe
  Solid State Drives}.
\newblock In {\em ISCA}, 2018.

\bibitem{huang2017flashblox}
Jian Huang, Anirudh Badam, Laura Caulfield, Suman Nath, Sudipta Sengupta,
  Bikash Sharma, and Moinuddin~K. Qureshi.
\newblock {{FlashBlox}: Achieving Both Performance Isolation and Uniform
  Lifetime for Virtualized {SSDs}}.
\newblock In {\em FAST}, 2017.

\bibitem{kwon2020dc}
Miryeong Kwon, Donghyun Gouk, Changrim Lee, Byounggeun Kim, Jooyoung Hwang, and
  Myoungsoo Jung.
\newblock {DC-Store: Eliminating Noisy Neighbor Containers Using Deterministic
  I/O Performance and Resource Isolation}.
\newblock In {\em FAST}, 2020.

\bibitem{min2021isolating}
Donghyun Min and Youngjae Kim.
\newblock {Isolating Namespace and Performance in Key-Value SSDs for
  Multi-Tenant Environments}.
\newblock In {\em HotStorage}, 2021.

\bibitem{min2023multi}
Donghyun Min, Kihyun Kim, Chaewon Moon, Awais Khan, Seungjin Lee, Changhwan
  Yun, Woosuk Chung, and Youngjae Kim.
\newblock {A Multi-Tenant Key-Value SSD with Secondary Index for Search Query
  Processing and Analysis}.
\newblock {\em TECS}, 2023.

\bibitem{kim2015ops}
Jaeho Kim, Donghee Lee, and Sam~H. Noh.
\newblock {Towards SLO Complying SSDs Through OPS Isolation}.
\newblock In {\em FAST}, 2015.

\bibitem{jaliminche2023enabling}
Lokesh~N. Jaliminche, Chandranil~Nil Chakraborttii, Changho Choi, and Heiner
  Litz.
\newblock {Enabling Multi-Tenancy on SSDs} with accurate {IO} interference
  modeling.
\newblock In {\em SoCC}, 2023.

\bibitem{gonzalez2017multi}
Javier Gonz{\'a}lez and Matias Bj{\o}rling.
\newblock {Multi-Tenant I/O Isolation with Open-Channel SSDs}.
\newblock In {\em NVMW}, 2017.

\bibitem{sun2025fleetio}
Jinghan Sun, Benjamin Reidys, Daixuan Li, Jichuan Chang, Marc Snir, and Jian
  Huang.
\newblock {FleetIO: Managing Multi-Tenant Cloud Storage with Multi-Agent
  Reinforcement Learning}.
\newblock In {\em ASPLOS}, 2025.

\bibitem{nvmenamespaces}
{NVM Express}.
\newblock {NVMe Namespaces}, 2022.

\bibitem{nvme_base_spec}
{NVM Express}.
\newblock {NVM Express® Base Specification (Revision 2.3)}.
\newblock NVM Express Specification, 2025.
\newblock Includes updates for Endurance Group Management, Zoned Namespaces,
  and more.

\bibitem{gupta2009dftl}
Aayush Gupta, Youngjae Kim, and Bhuvan Urgaonkar.
\newblock {DFTL}: A flash translation layer employing demand-based selective
  caching of page-level address mappings.
\newblock In {\em ASPLOS}, 2009.

\bibitem{agrawal2008design}
Nitin Agrawal, Vijayan Prabhakaran, Ted Wobber, John~D. Davis, Mark~S. Manasse,
  and Rina Panigrahy.
\newblock {Design Tradeoffs for SSD Performance}.
\newblock In {\em USENIX ATC}, 2008.

\bibitem{xie2014adaptive}
Wei Xie and Yong Chen.
\newblock {An Adaptive Separation-Aware FTL for Improving the Efficiency of
  Garbage Collection in SSDs}.
\newblock In {\em CCGRID}, 2014.

\bibitem{jung2012taking}
Myoungsoo Jung, Ramya Prabhakar, and Mahmut~Taylan Kandemir.
\newblock {Taking Garbage Collection Overheads Off the Critical Path in SSDs}.
\newblock In {\em Middleware}, 2012.

\bibitem{shahidi2016exploring}
Narges Shahidi, Mahmut~T. Kandemir, Mohammad Arjomand, Chita~R. Das, Myoungsoo
  Jung, and Anand Sivasubramaniam.
\newblock {Exploring the Potentials of Parallel Garbage Collection in SSDs for
  Enterprise Storage Systems}.
\newblock In {\em SC}, 2016.

\bibitem{wu2016gcar}
Suzhen Wu, Yanping Lin, Bo~Mao, and Hong Jiang.
\newblock {GCaR: Garbage Collection aware Cache Management with Improved
  Performance for Flash-based SSDs}.
\newblock In {\em ICS}, 2016.

\bibitem{choi2018parallelizing}
Wonil Choi, Myoungsoo Jung, Mahmut Kandemir, and Chita Das.
\newblock {Parallelizing Garbage Collection with I/O to Improve Flash Resource
  Utilization}.
\newblock In {\em HPDC}, 2018.

\bibitem{chang2007wear}
Li-Pin Chang.
\newblock {On Efficient Wear Leveling for Large-Scale Flash-Memory Storage
  Systems}.
\newblock In {\em SAC}, 2007.

\bibitem{yang2014garbage}
Ming-Chang Yang, Yu-Ming Chang, Che-Wei Tsao, Po-Chun Huang, Yuan-Hao Chang,
  and Tei-Wei Kuo.
\newblock {Garbage Collection and Wear Leveling for Flash Memory: Past and
  Future}.
\newblock In {\em SMARTCOMP}, 2014.

\bibitem{daisyplus_openssd}
{CRZ Technology}.
\newblock {DaisyPlus OpenSSD Platform}.
\newblock \url{https://github.com/CRZ-Technology/OpenSSD-OpenChannelSSD}, 2022.
\newblock Accessed 2026.

\bibitem{tavakkol2018mqsim}
Arash Tavakkol, Juan G{\'o}mez-Luna, Mohammad Sadrosadati, Saugata Ghose, and
  Onur Mutlu.
\newblock {MQSim}: A framework for enabling realistic studies of modern
  multi-queue {SSD} devices.
\newblock In {\em FAST}, 2018.

\bibitem{zheng2013power}
Mai Zheng, Joseph Tucek, Feng Qin, and Mark Lillibridge.
\newblock {Understanding the Robustness of SSDs under Power Fault}.
\newblock In {\em FAST}, 2013.

\bibitem{li2018femu}
Huaicheng Li, Mingzhe Hao, Michael~Hao Tong, Swaminathan Sundararaman, Matias
  Bjorling, and Haryadi~S. Gunawi.
\newblock The {CASE} of {FEMU}: Cheap, accurate, scalable and extensible flash
  emulator.
\newblock In {\em FAST}, 2018.

\bibitem{kim2023nvmevirt}
Sang-Hoon Kim, Jaehoon Shim, Euidong Lee, Seongyeop Jeong, Ilkueon Kang, and
  Jin-Soo Kim.
\newblock {NVMeVirt: A Versatile Software-Defined Virtual {NVMe} Device}.
\newblock In {\em FAST}, 2023.

\bibitem{kwak2018cosmos}
Jaewook Kwak, Sangjin Lee, Kibin Park, Jinwoo Jeong, and Yong~Ho Song.
\newblock {Cosmos+ OpenSSD: Rapid Prototype for Flash Storage Systems}.
\newblock {\em TOS}, 2020.

\bibitem{chang2020determinizing}
Yun-Sheng Chang, Yao Hsiao, Tzu-Chi Lin, Che-Wei Tsao, Chun-Feng Wu, Yuan-Hao
  Chang, Hsiang-Shang Ko, and Yu-Fang Chen.
\newblock {Determinizing Crash Behavior with a Verified Snapshot-Consistent
  Flash Translation Layer}.
\newblock In {\em OSDI}, 2020.

\bibitem{bodenmuller2021flashix}
Stefan Bodenm{\"u}ller, Gerhard Schellhorn, Martin Bitterlich, and Wolfgang
  Reif.
\newblock {Flashix: Modular Verification of A Concurrent and Crash-Safe Flash
  File System}.
\newblock In {\em Logic, Computation and Rigorous Methods}, 2021.

\bibitem{qiao2019formal}
Lei Qiao, Shaofeng Li, Hua Yang, and Mengfei Yang.
\newblock {A Formal Modeling and Verification Framework for Flash Translation
  Layer Algorithms}.
\newblock In {\em SETTA}, 2019.

\bibitem{kim2002space}
Jesung Kim, Jong~Min Kim, Sam~H. Noh, Sang~Lyul Min, and Yookun Cho.
\newblock {A Space-Efficient Flash Translation Layer for CompactFlash Systems}.
\newblock {\em IEEE-TCE}, 2002.

\bibitem{coq}
{The Rocq Development Team}.
\newblock {\em The {Coq} Proof Assistant Reference Manual, Version 8.20}.
\newblock Inria, 2024.

\bibitem{chen2011caftl}
Feng Chen, Tian Luo, and Xiaodong Zhang.
\newblock {CAFTL: A Content-Aware Flash Translation Layer Enhancing the
  Lifespan of Flash Memory based Solid State Drives}.
\newblock In {\em FAST}, 2011.

\bibitem{nadig2026harmonia}
Rakesh Nadig, Vamanan Arulchelvan, Rahul Bera, Taha Shahroodi, Gagandeep Singh,
  Andreas~Kosmas Kakolyris, {\.I}smail~Emir Y{\"u}ksel, Mohammad Sadrosadati,
  Jisung Park, and Onur Mutlu.
\newblock {Harmonia: Enhancing Data Placement and Migration in Hybrid Storage
  Systems via Multi-Agent Reinforcement Learning}.
\newblock In {\em ICS}, 2026.

\bibitem{eshghi2012ssd}
Kam Eshghi and Rino Micheloni.
\newblock {SSD Architecture and PCI Express Interface}.
\newblock In {\em {Inside Solid State Drives (SSDs)}}. Springer, 2013.

\bibitem{snai_sdc2021}
Peter Onufryk.
\newblock {NVMe® 2.0 Specifications: The Next Generation of NVMe Technology}.
\newblock In {\em SNIA-SDC\, 2021}, 2021.

\bibitem{cortex_r4}
{ARM Holdings}.
\newblock {Cortex-R4}.
\newblock \url{https://developer.arm.com/Processors/Cortex-R4}, 2011.

\bibitem{park2006high}
Chanik Park, Prakash Talawar, Daeski Won, Myung~Jin Jung, Jung~Been Im, Suksan
  Kim, and Youngjoon Choi.
\newblock {A High Performance Controller for NAND Flash-based Solid State Disk
  (NSSD)}.
\newblock In {\em NVSMW}, 2006.

\bibitem{bang2011memory}
Kwanhu Bang, Sang-Hoon Park, Minje Jun, and Eui-Young Chung.
\newblock {A Memory Hierarchy-Aware Metadata Management Technique for Solid
  State Disks}.
\newblock In {\em MWSCAS}, 2011.

\bibitem{cai-date-2012}
Yu~Cai, Erich~F. Haratsch, Onur Mutlu, and Ken Mai.
\newblock {Error Patterns in MLC NAND Flash Memory: Measurement,
  Characterization, and Analysis}.
\newblock In {\em DATE}, 2012.

\bibitem{cai-iccd-2012}
Yu~Cai, Gulay Yalcin, Onur Mutlu, Erich~F. Haratsch, Adrian Cristal, Osman~S.
  Unsal, and Ken Mai.
\newblock {Flash Correct-and-Refresh: Retention-Aware Error Management for
  Increased Flash Memory Lifetime}.
\newblock In {\em ICCD}, 2012.

\bibitem{cai-inteltechj-2013}
Yu~Cai, Gulay Yalcin, Onur Mutlu, Erich~F. Haratsch, Adrian Cristal, Osman~S.
  Unsal, and Ken Mai.
\newblock {Error Analysis and Retention-Aware Error Management for {NAND} Flash
  Memory}.
\newblock {\em Intel Tech. J.}, 2013.

\bibitem{cai-sigmetrics-2014}
Yu~Cai, Gulay Yalcin, Onur Mutlu, Erich~F. Haratsch, Osman~S. Unsal, Adri\'{a}n
  Cristal, and Ken Mai.
\newblock {Neighbor-cell Assisted Error Correction for MLC NAND Flash
  Memories}.
\newblock In {\em SIGMETRICS}, 2014.

\bibitem{cai-dsn-2015}
Yu~Cai, Yixin Luo, Saugata Ghose, and Onur Mutlu.
\newblock {Read Disturb Errors in {MLC NAND} Flash Memory: Characterization,
  Mitigation, and Recovery}.
\newblock In {\em DSN}, 2015.

\bibitem{cai-hpca-2017}
Yu~Cai, Saugata Ghose, Yixin Luo, Ken Mai, Onur Mutlu, and Erich~F. Haratsch.
\newblock {Vulnerabilities in MLC NAND Flash Memory Programming: Experimental
  Analysis, Exploits, and Mitigation Techniques}.
\newblock In {\em HPCA}, 2017.

\bibitem{luo-sigmetrics-2018}
Yixin Luo, Saugata Ghose, Yu~Cai, Erich~F. Haratsch, and Onur Mutlu.
\newblock {Improving {3D NAND} Flash Memory Lifetime by Tolerating Early
  Retention Loss and Process Variation}.
\newblock In {\em SIGMETRICS}, 2018.

\bibitem{cai-insidessd-2018}
Yu~Cai, Saugata Ghose, Erich~F. Haratsch, Yixin Luo, and Onur Mutlu.
\newblock {Reliability Issues in Flash-memory-based Solid-state Drives:
  Experimental Analysis, Mitigation, Recovery}.
\newblock In {\em Inside Solid State Drives}, 2018.

\bibitem{cai2013program}
Yu~Cai, Onur Mutlu, Erich~F. Haratsch, and Ken Mai.
\newblock {Program Interference in MLC NAND Flash Memory: Characterization,
  Modeling, and Mitigation}.
\newblock In {\em ICCD}, 2013.

\bibitem{cai2015data}
Yu~Cai, Yixin Luo, Erich~F. Haratsch, Ken Mai, and Onur Mutlu.
\newblock {Data Retention in MLC NAND Flash Memory: Characterization,
  Optimization, and Recovery}.
\newblock In {\em HPCA}, 2015.

\bibitem{cai2017error}
Yu~Cai, Saugata Ghose, Erich~F. Haratsch, Yixin Luo, and Onur Mutlu.
\newblock {Error Characterization, Mitigation, and Recovery in
  Flash-Memory-based Solid-State Drives}.
\newblock {\em Proceedings of the IEEE}, 2017.

\bibitem{luo2015warm}
Yixin Luo, Yu~Cai, Saugata Ghose, Jongmoo Choi, and Onur Mutlu.
\newblock {WARM: Improving NAND Flash Memory Lifetime with Write-hotness Aware
  Retention Management}.
\newblock In {\em MSST}, 2015.

\bibitem{luo2016enabling}
Yixin Luo, Saugata Ghose, Yu~Cai, Erich~F. Haratsch, and Onur Mutlu.
\newblock {Enabling Accurate and Practical Online Flash Channel Modeling for
  Modern MLC NAND Flash Memory}.
\newblock {\em JSAC}, 2016.

\bibitem{luo2018heatwatch}
Yixin Luo, Saugata Ghose, Yu~Cai, Erich~F. Haratsch, and Onur Mutlu.
\newblock {HeatWatch: Improving 3D NAND Flash Memory Device Reliability by
  Exploiting Self-Recovery and Temperature Awareness}.
\newblock In {\em HPCA}, 2018.

\bibitem{park-asplos-2021}
Jisung Park, Myungsuk Kim, Myoungjun Chun, Lois Orosa, Jihong Kim, and Onur
  Mutlu.
\newblock {Reducing Solid-State Drive Read Latency by Optimizing Read-Retry}.
\newblock In {\em ASPLOS}, 2021.

\bibitem{onfi}
{ONFI Workgroup}.
\newblock {Open NAND Flash Interface Specification, Revision 5.1}.
\newblock \url{https://onfi.org}, 2022.

\bibitem{grupp2009characterizing}
Laura~M. Grupp, Adrian~M. Caulfield, Joel Coburn, Steven Swanson, Eitan
  Yaakobi, Paul~H. Siegel, and Jack~K. Wolf.
\newblock {Characterizing Flash Memory: Anomalies, Observations, and
  Applications}.
\newblock In {\em MICRO}, 2009.

\bibitem{dirik2009performance}
Cagdas Dirik and Bruce Jacob.
\newblock {The Performance of PC Solid-State Disks (SSDs) as a Function of
  Bandwidth, Concurrency, Device Architecture, and System Organization}.
\newblock In {\em ISCA}, 2009.

\bibitem{pekny2022isscc}
Ted Pekny, Luyen Vu, Jeff Tsai, Dheeraj Srinivasan, Erwin Yu, Jonathan
  Pabustan, Joe Xu, Srinivas Deshmukh, Kim-Fung Chan, Michael Piccardi, et~al.
\newblock {A 1-Tb Density 4b/Cell 3D-NAND Flash on 176-Tier Technology with
  4-Independent Planes for Read using CMOS-Under-the-Array}.
\newblock In {\em ISSCC}, 2022.

\bibitem{cho20221}
Wanik Cho, Jongseok Jung, Jongwoo Kim, Junghoon Ham, Sangkyu Lee, Yujong Noh,
  Dauni Kim, Wanseob Lee, Kayoung Cho, Kwanho Kim, et~al.
\newblock {A 1-Tb, 4b/Cell, 176-Stacked-WL 3D-NAND Flash Memory with Improved
  Read Latency and a 14.8 Gb/mm$^2$ Density}.
\newblock In {\em ISSCC}, 2022.

\bibitem{kang201913}
Dongku Kang, Minsu Kim, Su~Chang Jeon, Wontaeck Jung, Jooyong Park, Gyosoo
  Choo, Dong-kyo Shim, Anil Kavala, Seung-Bum Kim, Kyung-Min Kang, et~al.
\newblock {A 512Gb 3-bit/Cell 3D 6th-Generation V-NAND Flash Memory with 82MB/s
  Write Throughput and 1.2Gb/s Interface}.
\newblock In {\em ISSCC}, 2019.

\bibitem{maejima2018512gb}
Hiroshi Maejima, Kazushige Kanda, Susumu Fujimura, Teruo Takagiwa, Susumu
  Ozawa, Jumpei Sato, Yoshihiko Shindo, Manabu Sato, Naoaki Kanagawa, Junji
  Musha, et~al.
\newblock {A 512Gb 3b/Cell 3D Flash Memory on a 96-Word-Line-Layer Technology}.
\newblock In {\em ISSCC}, 2018.

\bibitem{nvme_nvm_command_set}
{NVM Express}.
\newblock {NVM Express\textregistered{} NVM Command Set Specification (Revision
  1.2)}.
\newblock \url{https://nvmexpress.org/specifications/}, 2025.
\newblock Section 5.3, End-to-end Data Protection.

\bibitem{gal2005algorithms}
Eran Gal and Sivan Toledo.
\newblock {Algorithms and Data Structures for Flash Memories}.
\newblock {\em ACM CSUR}, 2005.

\bibitem{park2009design}
Sang-Hoon Park, Seung-Hwan Ha, Kwanhu Bang, and Eui-Young Chung.
\newblock {Design and Analysis of Flash Translation Layers for Multi-Channel
  {NAND} Flash-Based Storage Devices}.
\newblock {\em IEEE-TCE}, 2009.

\bibitem{chung2009survey}
Tae-Sun Chung, Dong-Joo Park, Sangwon Park, Dong-Ho Lee, Sang-Won Lee, and
  Ha-Joo Song.
\newblock {A Survey of Flash Translation Layer}.
\newblock {\em JSA}, 2009.

\bibitem{goodson2010design}
Garth Goodson and Rahul Iyer.
\newblock {Design Tradeoffs in a Flash Translation Layer}.
\newblock In {\em WEST}, 2010.

\bibitem{gal2005mapping}
Eran Gal and Sivan Toledo.
\newblock {Mapping Structures for Flash Memories: Techniques and Open
  Problems}.
\newblock In {\em SwSTE}, 2005.

\bibitem{ma2014survey}
Dongzhe Ma, Jianhua Feng, and Guoliang Li.
\newblock {A Survey of Address Translation Technologies for Flash Memories}.
\newblock {\em CSUR}, 2014.

\bibitem{sun2023leaftl}
Jinghan Sun, Shaobo Li, Yunxin Sun, Chao Sun, Dejan Vucinic, and Jian Huang.
\newblock {LeaFTL: A Learning-Based Flash Translation Layer For Solid-State
  Drives}.
\newblock In {\em ASPLOS}, 2023.

\bibitem{picoli2019ox}
Ivan~Luiz Picoli.
\newblock {\em {OX}: Deconstructing the {FTL} for Computational Storage}.
\newblock {Ph.D.} thesis, IT University of Copenhagen, 2019.

\bibitem{lee2007log}
Sang-Won Lee, Dong-Joo Park, Tae-Sun Chung, Dong-Ho Lee, Sangwon Park, and
  Ha-Joo Song.
\newblock {A Log Buffer-Based Flash Translation Layer Using Fully-Associative
  Sector Translation}.
\newblock {\em TECS}, 2007.

\bibitem{ma_lazyftl_2011}
Dongzhe Ma, Jianhua Feng, and Guoliang Li.
\newblock {LazyFTL: A Page-level Flash Translation Layer Optimized for NAND
  Flash Memory}.
\newblock In {\em SIGMOD}, 2011.

\bibitem{zhou2015efficient}
You Zhou, Fei Wu, Ping Huang, Xubin He, Changsheng Xie, and Jian Zhou.
\newblock {An Efficient Page-level FTL to Optimize Address Translation in Flash
  Memory}.
\newblock In {\em EuroSys}, 2015.

\bibitem{lim2010faster}
Sang-Phil Lim, Sang-Won Lee, and Bongki Moon.
\newblock {FASTer FTL for Enterprise-Class Flash Memory SSDs}.
\newblock In {\em SNAPI}, 2010.

\bibitem{shin2009ftl}
Ji-Yong Shin, Zeng-Lin Xia, Ning-Yi Xu, Rui Gao, Xiong-Fei Cai, Seungryoul
  Maeng, and Feng-Hsiung Hsu.
\newblock {FTL Design Exploration in Reconfigurable High-Performance SSD for
  Server Applications}.
\newblock In {\em ICS}, 2009.

\bibitem{wang2024learnedftl}
Shengzhe Wang, Zihang Lin, Suzhen Wu, Hong Jiang, Jie Zhang, and Bo~Mao.
\newblock {LearnedFTL: A Learning-Based Page-Level FTL for Reducing Double
  Reads in Flash-Based SSDs}.
\newblock In {\em HPCA}, 2024.

\bibitem{park2016storage}
Dongchul Park, Jianguo Wang, and Yang-Suk Kee.
\newblock {In-storage Computing for Hadoop MapReduce Framework: Challenges and
  Possibilities}.
\newblock {\em ToC}, 2016.

\bibitem{mailthody-micro-2019}
Vikram~Sharma Mailthody, Zaid Qureshi, Weixin Liang, Ziyan Feng, Simon
  Garcia~de Gonzalo, Youjie Li, Hubertus Franke, Jinjun Xiong, Jian Huang, and
  Wen-mei Hwu.
\newblock {DeepStore: In-Storage Acceleration for Intelligent Queries}.
\newblock In {\em MICRO}, 2019.

\bibitem{kang2021iceclave}
Luyi Kang, Yuqi Xue, Weiwei Jia, Xiaohao Wang, Jongryool Kim, Changhwan Youn,
  Myeong~Joon Kang, Hyung~Jin Lim, Bruce Jacob, and Jian Huang.
\newblock {Iceclave: A Trusted Execution Environment For In-Storage Computing}.
\newblock In {\em MICRO}, 2021.

\bibitem{chen2011essential}
Feng Chen, Rubao Lee, and Xiaodong Zhang.
\newblock {Essential Roles of Exploiting Internal Parallelism of Flash
  Memory-Based Solid State Drives in High-Speed Data Processing}.
\newblock In {\em HPCA}, 2011.

\bibitem{hu2011performance}
Yang Hu, Hong Jiang, Dan Feng, Lei Tian, Hao Luo, and Shuping Zhang.
\newblock {Performance Impact and Interplay of SSD Parallelism Through Advanced
  Commands, Allocation Strategy and Data Granularity}.
\newblock In {\em ICS}, 2011.

\bibitem{hu2012exploring}
Yang Hu, Hong Jiang, Dan Feng, Lei Tian, Hao Luo, and Chao Ren.
\newblock {Exploring and Exploiting the Multilevel Parallelism Inside SSDs For
  Improved Performance and Endurance}.
\newblock {\em ToC}, 2013.

\bibitem{kim2022networked}
Jiho Kim, Seokwon Kang, Yongjun Park, and John Kim.
\newblock {Networked SSD: Flash Memory Interconnection Network for
  High-Bandwidth SSD}.
\newblock In {\em MICRO}, 2022.

\bibitem{lightnvm}
Matias Bj{\o}rling, Javier Gonz{\'a}lez, and Philippe Bonnet.
\newblock {LightNVM: The Linux Open-Channel {SSD} Subsystem}.
\newblock In {\em FAST}, 2017.

\bibitem{picoli2020open}
Ivan~Luiz Picoli, Niclas Hedam, Philippe Bonnet, and Pinar T{\"o}z{\"u}n.
\newblock {Open-Channel SSD (What Is It Good For)}.
\newblock In {\em CIDR}, 2020.

\bibitem{znssds}
{NVM Express}.
\newblock {NVMe Zoned Namespace {SSDs} \& the Zoned Storage {Linux} Software
  Ecosystem}, 2020.

\bibitem{jin2017kaml}
Yanqin Jin, Hung-Wei Tseng, Yannis Papakonstantinou, and Steven Swanson.
\newblock {KAML: A Flexible, High-Performance Key-Value SSD}.
\newblock In {\em HPCA}, 2017.

\bibitem{jang2024smart}
Hongsun Jang, Jaeyong Song, Jaewon Jung, Jaeyoung Park, Youngsok Kim, and Jinho
  Lee.
\newblock {Smart-Infinity: Fast Large Language Model Training Using
  Near-Storage Processing on A Real System}.
\newblock In {\em HPCA}, 2024.

\bibitem{das2013cpu}
Sudipto Das, Vivek~R. Narasayya, Feng Li, and Manoj Syamala.
\newblock {CPU Sharing Techniques for Performance Isolation in Multi-Tenant
  Relational Database-as-a-Service}.
\newblock {\em PVLDB}, 2013.

\bibitem{wang2018oc}
Haitao Wang, Zhanhuai Li, Xiao Zhang, Xiaonan Zhao, Xingsheng Zhao, Weijun Li,
  and Song Jiang.
\newblock {OC-Cache: An Open-channel SSD Based Cache for Multi-Tenant Systems}.
\newblock In {\em IPCCC}, 2018.

\bibitem{luo2013s}
Tian Luo, Siyuan Ma, Rubao Lee, Xiaodong Zhang, Deng Liu, and Li~Zhou.
\newblock {S-Cave: Effective SSD Caching to Improve Virtual Machine Storage
  Performance}.
\newblock In {\em PACT}, 2013.

\bibitem{liu2021self}
Renping Liu, Duo Liu, Xianzhang Chen, Yujuan Tan, Runyu Zhang, and Liang Liang.
\newblock {Self-Adapting Channel Allocation for Multiple Tenants Sharing SSD
  Devices}.
\newblock {\em TCAD}, 2022.

\bibitem{tcg_key_per_io}
{Trusted Computing Group}.
\newblock {TCG Storage Security Subsystem Class (SSC): Key Per I/O, Version
  1.00, Revision 1.41}.
\newblock
  \url{https://trustedcomputinggroup.org/resource/tcg-storage-security-subsystem-class-ssc-key-per-i-o/},
  2023.

\bibitem{kim2018utilitarian}
Bryan~S. Kim.
\newblock {Utilitarian Performance Isolation in Shared SSDs}.
\newblock In {\em HotStorage 18}, 2018.

\bibitem{zaddach2013backdoor}
Jonas Zaddach, Anil Kurmus, Davide Balzarotti, Erik-Oliver Blass, Aur{\'e}lien
  Francillon, Travis Goodspeed, Moitrayee Gupta, and Ioannis Koltsidas.
\newblock {Implementation and Implications of a Stealth Hard-Drive Backdoor}.
\newblock In {\em ACSAC}, 2013.

\bibitem{meijer2019self}
Carlo Meijer and Bernard Van~Gastel.
\newblock {Self-Encrypting Deception: Weaknesses in the Encryption of Solid
  State Drives}.
\newblock In {\em S\&P}, 2019.

\bibitem{wertenbroek2024pandora}
Rick Wertenbroek and Alberto Dassatti.
\newblock {Pandora's Box in Your SSD: The Untold Dangers of NVMe}.
\newblock {\em arXiv preprint arXiv:2411.00439}, 2024.

\bibitem{kim2020evanesco}
Myungsuk Kim, Jisung Park, Genhee Cho, Yoona Kim, Lois Orosa, Onur Mutlu, and
  Jihong Kim.
\newblock {Evanesco: Architectural Support for Efficient Data Sanitization in
  Modern Flash-Based Storage Systems}.
\newblock In {\em ASPLOS}, 2020.

\bibitem{juffinger2025secret}
Jonas Juffinger, Fabian Rauscher, Giuseppe~La Manna, and Daniel Gruss.
\newblock {Secret Spilling Drive: Leaking User Behavior through SSD
  Contention}.
\newblock In {\em NDSS}, 2025.

\bibitem{juffinger2025hmb}
Jonas Juffinger, Hannes Weissteiner, Thomas Steinbauer, and Daniel Gruss.
\newblock {The HMB Timing Side Channel: Exploiting the SSD's Host Memory
  Buffer}.
\newblock In {\em DIMVA}, 2025.

\bibitem{fips180-4}
{National Institute of Standards and Technology}.
\newblock {Secure Hash Standard (SHS)}.
\newblock FIPS PUB 180-4, 2015.

\bibitem{amani2016cogent}
Sidney Amani, Alex Hixon, Zilin Chen, Christine Rizkallah, Peter Chubb, Liam
  O'Connor, Joel Beeren, Yutaka Nagashima, Japheth Lim, Thomas Sewell, Joseph
  Tuong, Gabriele Keller, Toby Murray, Gerwin Klein, and Gernot Heiser.
\newblock {Cogent: Verifying High-Assurance File System Implementations}.
\newblock In {\em ASPLOS}, 2016.

\bibitem{tripathy2023formal}
Shivani Tripathy, Debiprasanna Sahoo, Manoranjan Satpathy, and Madhu Mutyam.
\newblock {Formal Modeling and Verification of Security Properties of a
  Ransomware-Resistant SSD}.
\newblock {\em IEEE TCAD}, 2023.

\bibitem{huet1997coq}
G{\'e}rard Huet, Gilles Kahn, and Christine Paulin-Mohring.
\newblock {The Coq Proof Assistant: A Tutorial}.
\newblock Technical report, INRIA, 1997.

\bibitem{chlipala2013certified}
Adam Chlipala.
\newblock {\em {Certified Programming with Dependent Types: A Pragmatic
  Introduction to the Coq Proof Assistant}}.
\newblock MIT Press, 2013.

\bibitem{schierl2009ubifs}
Andreas Schierl, Gerhard Schellhorn, Dominik Haneberg, and Wolfgang Reif.
\newblock {Abstract Specification of the UBIFS File System for Flash Memory}.
\newblock In {\em FM}, 2009.

\bibitem{schellhorn2014verified}
Gerhard Schellhorn, Gidon Ernst, J{\"o}rg Pf{\"a}hler, Dominik Haneberg, and
  Wolfgang Reif.
\newblock {Development of a Verified Flash File System}.
\newblock In {\em ABZ}, 2014.

\bibitem{ernst2015inside}
Gidon Ernst, J{\"o}rg Pf{\"a}hler, Gerhard Schellhorn, and Wolfgang Reif.
\newblock {Inside a Verified Flash File System: Transactions and Garbage
  Collection}.
\newblock In {\em VSTTE}, 2015.

\bibitem{reif1995kiv}
Wolfgang Reif, Gerhard Schellhorn, and Kurt Stenzel.
\newblock {Interactive Correctness Proofs for Software Modules Using KIV}.
\newblock In {\em COMPASS}, 1995.

\bibitem{joshi2007mini}
Rajeev Joshi and Gerard~J. Holzmann.
\newblock {A Mini Challenge: Build a Verifiable Filesystem}.
\newblock {\em FAC}, 19(2), 2007.

\bibitem{butterfield2009mechanising}
Andrew Butterfield, Leo Freitas, and Jim Woodcock.
\newblock {Mechanising a Formal Model of Flash Memory}.
\newblock {\em SCP}, 74(4), 2009.

\bibitem{kang2008alloy}
Eunsuk Kang and Daniel Jackson.
\newblock {Formal Modeling and Analysis of a Flash Filesystem in Alloy}.
\newblock In {\em ABZ}, 2008.

\bibitem{damchoom2009eventb}
Kriangsak Damchoom and Michael~J. Butler.
\newblock {Applying Event and Machine Decomposition to a Flash-Based Filestore
  in Event-B}.
\newblock In {\em SBMF}, 2009.

\bibitem{taverne2009raffs}
Paul Taverne and Cornelis Pronk.
\newblock {RAFFS: Model Checking a Robust Abstract Flash File Store}.
\newblock In {\em ICFEM}, 2009.

\bibitem{pfahler2013erase}
J{\"o}rg Pf{\"a}hler, Gidon Ernst, Gerhard Schellhorn, Dominik Haneberg, and
  Wolfgang Reif.
\newblock {Formal Specification of an Erase Block Management Layer for Flash
  Memory}.
\newblock In {\em HVC}, 2013.

\bibitem{kim2008flashdriver}
Moonzoo Kim, Yunja Choi, Yunho Kim, and Hotae Kim.
\newblock {Formal Verification of a Flash Memory Device Driver: An Experience
  Report}.
\newblock In {\em SPIN}, 2008.

\bibitem{tripathy2019nand}
Shivani Tripathy, Debiprasanna Sahoo, Manoranjan Satpathy, and Srinivas
  Pinisetty.
\newblock {Formal Modeling and Verification of NAND Flash Memory Supporting
  Advanced Operations}.
\newblock In {\em ICCD}, 2019.

\bibitem{prabhakaran2008txflash}
Vijayan Prabhakaran, Thomas~L. Rodeheffer, and Lidong Zhou.
\newblock {Transactional Flash}.
\newblock In {\em OSDI}, 2008.

\bibitem{chajed2019argosy}
Tej Chajed, Joseph Tassarotti, M.~Frans Kaashoek, and Nickolai Zeldovich.
\newblock {Argosy: Verifying Layered Storage Systems with Recovery Refinement}.
\newblock In {\em PLDI}, 2019.

\bibitem{chajed2019perennial}
Tej Chajed, Joseph Tassarotti, M.~Frans Kaashoek, and Nickolai Zeldovich.
\newblock {Verifying Concurrent, Crash-Safe Systems with Perennial}.
\newblock In {\em SOSP}, 2019.

\bibitem{chajed2021gojournal}
Tej Chajed, Joseph Tassarotti, Mark Theng, Ralf Jung, M.~Frans Kaashoek, and
  Nickolai Zeldovich.
\newblock {GoJournal: A Verified, Concurrent, Crash-Safe Journaling System}.
\newblock In {\em OSDI}, 2021.

\bibitem{hance2020storage}
Travis Hance, Andrea Lattuada, Chris Hawblitzel, Jon Howell, Rob Johnson, and
  Bryan Parno.
\newblock {Storage Systems are Distributed Systems (So Verify Them That Way!)}.
\newblock In {\em OSDI}, 2020.

\bibitem{chen2015fscq}
Haogang Chen, Daniel Ziegler, Tej Chajed, Adam Chlipala, M.~Frans Kaashoek, and
  Nickolai Zeldovich.
\newblock Using {Crash} {Hoare} logic for certifying the {FSCQ} file system.
\newblock In {\em SOSP}, 2015.

\bibitem{chen2017dfscq}
Haogang Chen, Tej Chajed, Alex Konradi, Stephanie Wang, At\'{a}lay \'{I}leri,
  Adam Chlipala, M.~Frans Kaashoek, and Nickolai Zeldovich.
\newblock {Verifying a High-Performance Crash-Safe File System Using a Tree
  Specification}.
\newblock In {\em SOSP}, 2017.

\bibitem{sigurbjarnarson2016yggdrasil}
Helgi Sigurbjarnarson, James Bornholt, Emina Torlak, and Xi~Wang.
\newblock {Push-Button Verification of File Systems via Crash Refinement}.
\newblock In {\em OSDI}, 2016.

\bibitem{ileri2018disksec}
Atalay~Mert Ileri, Tej Chajed, Adam Chlipala, M.~Frans Kaashoek, and Nickolai
  Zeldovich.
\newblock {Proving Confidentiality in a File System Using DiskSec}.
\newblock In {\em OSDI}, 2018.

\bibitem{sel4}
Gerwin Klein, Kevin Elphinstone, Gernot Heiser, June Andronick, David Cock,
  Philip Derrin, Dhammika Elkaduwe, Kai Engelhardt, Rafal Kolanski, Michael
  Norrish, Thomas Sewell, Harvey Tuch, and Simon Winwood.
\newblock {seL4}: Formal verification of an {OS} kernel.
\newblock In {\em SOSP}, 2009.

\bibitem{murray2013sel4}
Toby Murray, Daniel Matichuk, Matthew Brassil, Peter Gammie, Timothy Bourke,
  Sean Seefried, Corey Lewis, Xin Gao, and Gerwin Klein.
\newblock {seL4}: From general purpose to a proof of information flow
  enforcement.
\newblock In {\em IEEE S\&P}, 2013.

\bibitem{certikos}
Ronghui Gu, Zhong Shao, Hao Chen, Xiongnan~(Newman) Wu, Jieung Kim, Vilhelm
  Sj{\"o}berg, and David Costanzo.
\newblock {CertiKOS: An Extensible Architecture for Building Certified
  Concurrent OS Kernels}.
\newblock In {\em OSDI}, 2016.

\bibitem{compcert}
Xavier Leroy.
\newblock {Formal Verification of a Realistic Compiler}.
\newblock {\em CACM}, 2009.

\bibitem{goguen1982security}
Joseph~A. Goguen and Jos{\'e} Meseguer.
\newblock {Security Policies and Security Models}.
\newblock In {\em IEEE S\&P}, 1982.

\bibitem{fips197}
{NIST}.
\newblock {Advanced Encryption Standard (AES)}.
\newblock Technical report, National Institute of Standards and Technology,
  2001.

\bibitem{flashvault}
Seock-Hwan Noh, Hoyeon Lee, Junkyum Kim, Junsu Im, Jay~H. Park, Sungjin Lee,
  Sam~H. Noh, Yeseong Kim, and Jaeha Kung.
\newblock {FlashVault: Versatile In-NAND Self-Encryption with Zero Area
  Overhead}.
\newblock {\em arXiv:2508.03866v1}, 2025.

\bibitem{huang2017flashguard}
Jian Huang, Jun Xu, Xinyu Xing, Peng Liu, and Moinuddin~K. Qureshi.
\newblock {FlashGuard: Leveraging Intrinsic Flash Properties to Defend Against
  Encryption Ransomware}.
\newblock In {\em CCS}, 2017.

\bibitem{baek2018ssdinsider}
SungHa Baek, Youngdon Jung, Aziz Mohaisen, Sungjin Lee, and DaeHun Nyang.
\newblock {SSD-Insider: Internal Defense of Solid-State Drive against
  Ransomware with Perfect Data Recovery}.
\newblock In {\em ICDCS}, 2018.

\bibitem{chowdhuryy2023dshield}
Md~Hafizul~Islam Chowdhuryy, Myoungsoo Jung, Fan Yao, and Amro Awad.
\newblock {D-Shield: Enabling Processor-side Encryption and Integrity
  Verification for Secure NVMe Drives}.
\newblock In {\em HPCA}, 2023.

\bibitem{ahn2020sgxssd}
Jinwoo Ahn, Seungjin Lee, Jinhoon Lee, Yungwoo Ko, Donghyun Min, Junghee Lee,
  and Youngjae Kim.
\newblock {SGX-SSD: A Policy-based Versioning SSD with Intel SGX}.
\newblock In {\em HotStorage}, 2020.

\end{thebibliography}

\end{document}